\theoremstyle{plain}
\newtheorem{theorem}{Theorem}[section]
\newtheorem{lemma}[theorem]{Lemma}
\theoremstyle{definition}
\theoremstyle{remark}
\DeclarePairedDelimiter\ceil{\lceil}{\rceil}
\DeclarePairedDelimiter\floor{\lfloor}{\rfloor}
\begin{document}

\title{\Large 
A 2D Advancing-Front Delaunay Mesh Refinement Algorithm
\thanks{The work of the author was supported in part by the NIH/NIGMS Center
for Integrative Biomedical Computing grant 2P41 RR0112553-12.  The author would
also like to thank Ms. Christine Pickett for proofreading a draft of
the paper and suggesting numerous changes. 
}}
\author{Shankar P. Sastry\\C3 IoT, Redwood City, CA 94063, U.S.A.\\{\tt sastry@sci.utah.edu}}

\date{}

\maketitle
\begin{abstract}
I present a generalization of Chew's first algorithm 
for Delaunay mesh refinement. In his algorithm, Chew
splits the line segments of the input planar straight line 
graph (PSLG) into shorter subsegments whose lengths are nearly
identical.  The constrained Delaunay triangulation of the
subsegments is refined based on the length of the 
radii of the circumcircles of the
triangles.  This algorithm produces a {\em uniform} mesh, 
whose minimum angle can be at most $\pi/6$.  My algorithm 
generates both truly Delaunay and constrained Delaunay 
{\em size-optimal} meshes.  In my
algorithm, I split the line segments of the input PSLG 
such that their lengths are asymptotically proportional 
to the local feature size (LFS) by solving ordinary 
differential equations (ODEs) that map points from a 
closed 1D interval to points on the input line segments
in the PSLG. I then refine the Delaunay triangulation 
(truly or constrained) of the PSLG
by inserting off-center Steiner vertices of ``skinny'' 
triangles while prioritizing such triangles with shortest 
edges first. As in Chew's algorithm, I show that the Steiner
vertices do not encroach upon any subsegment of the PSLG.  
The off-center insertion algorithm places Steiner vertices
in an advancing front manner such that 
we obtain a size-optimal Delaunay mesh 
(truly or constrained)
if the desired minimum angle is less than
$\pi/6$.  In addition, even in the presence of a small
angle $\phi < \pi/2$ in the PSLG, the bound on the minimum 
angle ``across'' the small angle tends to 
$\arctan{((\sin{\phi})/(2-\cos(\phi))}$
as the PSLG is progressively refined.  
Also, the bound on the maximum angle across any small 
input angle tends to $\pi/2 + \phi/2$ as the PSLG is 
progressively refined.  
\end{abstract}

\pagebreak
\section{Introduction}
Delaunay mesh refinement techniques are commonly used to
generate high-quality meshes in two or higher dimensions.  
These meshes
are usually used to solve partial differential equations (PDEs)
with the finite volume method (FVM) or the finite element 
method (FEM)~\cite{S02b}.  The FVM is typically used to solve fluid-flow
problems by defining a control volume surrounding each vertex
in a mesh and measuring the flux entering and exiting the control
volume. Delaunay meshes are extensively used in the 
FVM because it is easy to define a control volume
around a vertex using a Delaunay mesh and its corresponding 
Voronoi diagram.  

The FEM, on the other hand, may be used with any 
mesh. For instance, in order to solve isotropic elliptic 
PDEs, the FEM requires meshes whose elements are 
``regularly'' shaped, i.e., the length of their edges should
be nearly identical~\cite{BA76}. In 2D triangular meshes, this regularity
requirement translates to triangles having angles as close to
60 degrees as possible.  In 2D, given a set of vertices, 
Delaunay triangulation maximizes the minimum angle among all
possible triangulations over the set of vertices.  As this
property aligns with the goal of mesh generation,
Delaunay meshes are also used in the FEM to solve isotropic
elliptic PDEs.  In this paper, I focus only on 2D meshes. 

A challenge in generating Delaunay meshes is to make the 
meshes conform to both internal and external boundaries,
which can be accomplished using either truly Delaunay meshes 
or constrained Delaunay meshes~\cite{C87}. 
A truly Delaunay mesh does not permit any mesh vertex
inside the circumcircle of a triangular element. With 
truly Delaunay meshes, the Delaunay triangulation of 
mesh vertices recovers the boundaries automatically. 
In the context of solving PDEs, truly Delaunay meshes 
are necessary when the FVM is being used, and the mesh
might be rather large.  For other applications (such as the
FEM), a constrained Delaunay meshes might be 
sufficient. 

A constrained Delaunay mesh permits
mesh vertices inside the circumcenter of a triangular 
element if the mesh vertices are not visible (explained
in Fig.~\ref{fig:tdtcdt}) from any point inside the 
triangular element.  
This flexibility allows the generation of smaller 
high-quality meshes when compared to truly Delaunay meshes. 
Fig.~\ref{fig:tdtcdt} shows some examples of truly Delaunay
triangulation and constrained Delaunay triangulation.  
Note that it is easy to show that if no vertices 
are allowed inside the diametral circles of subsegments 
of an input segment, a Delaunay triangulation of vertices
will yield a conforming Delaunay mesh that is truly 
Delaunay. 

A typical Delaunay mesh refinement algorithm starts from an 
initial Delaunay triangulation of the input geometric
domain.  The input geometric domain is also called a
planar straight line graph (PSLG) since the input is planar
and consists of vertices and line segments that may be 
thought of as an embedded graph. 
The Delaunay refinement algorithms progressively add 
vertices to the mesh and retriangulate the vertices such 
that poorly shaped triangles are eliminated.  
In Section 2, I will discuss some of the techniques developed 
to obtain Delaunay meshes. Most algorithms
liberally add vertices (also called Steiner vertices) 
within the input domain.  When the techniques, however,  
attempt to add a Steiner vertex close to the boundaries 
or outside the domain, it is called an encroachment of 
the domain. Most algorithms 
handle this case by splitting the relevant segments in the 
input PSLG, deleting a few other vertices (only 
in some algorithms), and 
retriangulating the remaining vertices. 
These technique produce well-graded, high-quality meshes. 

My technique is a generalization of a technique by 
Chew~\cite{C89}.  Chew 
splits the input segments in the PSLG into 
subsegments whose lengths are nearly identical. 
The split PSLG segments are triangulated and
refined by adding Steiner vertices, which are
circumcenters of triangles whose radius of the
circumcircle is larger than the length of the
shortest subsegment in the split PSLG. 
The technique never attempts to
add a vertex outside the domain. 
As a result of the nearly uniform splitting of
the PSLG segments and the refinement technique, 
this algorithm produces uniform meshes.  

In my technique, I generate well-graded meshes by
refining the PSLG such that the lengths of the split
segments are asymptotically proportional to the 
local feature size (formally defined in Section 3)
at the end points of the split segments. Such
asymptotically proportional splits are 
achieved by solving an ordinary differential equation,
whose solution maps points from a reference 
line segment to a line segment in the PSLG. 
I then refine poor-quality triangles by adding their
circumcenter or an off-center Steiner vertex~\cite{EU09,U09,U04}. 
I prioritize poor-quality triangles with shortest edges first. 
More details are provided in Section 4.  In Section 5, 
I show that my adaptive splitting also ensures that 
the algorithm never attempts to insert a vertex 
outside the domain. I have separately analyzed
the algorithm for generating 
truly and constrained Delaunay meshes. 

My technique improves the upper bound on the minimum
angle in a size-optimal mesh to 30 degrees (except
``across'' small angles).  Moreover, even in the 
presence of small angles, the technique improves the 
lower bound on the maximum angle to $\pi/2 + \phi/2$,
where $\phi<\pi/2$ is the smallest angle in the 
input PSLG.  These results hold for both truly
and constrained Delaunay meshes, but the constants
associated with size optimality (defined in Section 3) 
are different for the two types of meshes.  As
expected, truly Delaunay meshes are larger than
constrained Delaunay meshes. 

\begin{figure}
\centering
\begin{subfigure}[b]{0.23\textwidth}
\begin{tikzpicture}[scale=2]
    \draw [line width=0.75mm] (0,0) -- (1,0);
    \draw [line width=0.15mm] (0,0) -- (0.25,0.5);
    \draw [line width=0.15mm] (0.5,0) -- (0.25,0.5);
    \draw [line width=0.15mm] (0.5,0) -- (0.75,0.5);
    \draw [line width=0.15mm] (0.25,0.5) -- (0.75,0.5);
    \draw [line width=0.15mm] (1,0) -- (0.75,0.5);
    \draw [line width=0.15mm] (0,0) -- (0.25,-0.5);
    \draw [line width=0.15mm] (0.5,0) -- (0.25,-0.5);
    \draw [line width=0.15mm] (0.5,0) -- (0.75,-0.5);
    \draw [line width=0.15mm] (1,0) -- (0.75,-0.5);
    \draw [line width=0.15mm] (0.25,-0.5) -- (0.75,-0.5);
    \draw [dotted] (0.25, -0.1875) circle (0.3125);
    \draw [dotted] (0.25,  0.1875) circle(0.3125);
    \draw [dotted] (0.5, -0.1875) circle (0.3125);
    \draw [dotted] (0.5,  0.1875) circle (0.3125);
    \draw [dotted] (0.75, -0.1875) circle (0.3125);
    \draw [dotted] (0.75,  0.1875) circle (0.3125);
\end{tikzpicture}
\caption{Truly\\Delaunay}
\end{subfigure}
\centering
\begin{subfigure}[b]{0.23\textwidth}
\begin{tikzpicture}[scale=2]
    \draw [line width=0.75mm] (0,0) -- (1,0);
    \draw [line width=0.15mm] (0,0) -- (0.5,0.3);
    \draw [line width=0.15mm] (1,0) -- (0.5,0.3);
    \draw [line width=0.15mm] (0,0) -- (0.5,-0.3);
    \draw [line width=0.15mm] (1,0) -- (0.5,-0.3);
    \draw [dotted] (0.5, -0.2667) circle (0.5667);
    \draw [dotted] (0.5,  0.2667) circle (0.5667);
\end{tikzpicture}
\caption{Constrained\\Delaunay}
\end{subfigure}
\centering
\begin{subfigure}[b]{0.23\textwidth}
\begin{tikzpicture}[scale=2]
    \draw [line width=0.75mm] (0,0) -- (1,0);
    \draw [line width=0.15mm] (0.25,0.1) -- (0.75,0.1);
    \draw [line width=0.15mm] (0.25,0.1) -- (0.5,0.2);
    \draw [line width=0.15mm] (0.75,0.1) -- (0.5,0.2);
    \draw [line width=0.15mm] (0,0) -- (0.5,-0.3);
    \draw [line width=0.15mm] (1,0) -- (0.5,-0.3);
    \draw [dotted] (0.5,  0.2667) circle (0.5667);
    \draw [dotted] (0.5, -0.1625) circle (0.3625);
\end{tikzpicture}
\caption{Constrained\\Delaunay}
\end{subfigure}
\centering
\begin{subfigure}[b]{0.23\textwidth}
\begin{tikzpicture}[scale=2]
    \useasboundingbox (-0.1,-0.5) rectangle (1,0.5);
    \draw [line width=0.75mm] (0,0) -- (1,0);
    \draw [line width=0.15mm] (-0.1,0.1) -- (0.35,0.1);
    \draw [line width=0.15mm] (-0.1,0.1) -- (0.05,0.2);
    \draw [line width=0.15mm] (0.35,0.1) -- (0.05,0.2);
    \draw [dotted] (0.125, -0.075) circle (0.285);
\end{tikzpicture}
\caption{Not\\allowed}
\end{subfigure}
\caption{
The difference between truly Delaunay and constrained Delaunay
meshes. Only parts of the meshes are shown for clarity. The thick 
horizontal lines are input segments to which the mesh should conform.  
A vertex $a$ is not visible from a point $b$ if the line
segment joining $a$ and $b$ intersects an input line segment. 
(a) All triangles are 
Delaunay because no vertex is inside their circumcircles. 
(b) A vertex from the other triangle is inside the 
circumcircle of both triangles, but the vertex is not visible from 
the interior of the triangles. 
(c) Similar to (b), but one of the triangles is not on the input
segment.
(d) This case is not allowed because an end point of the input
segment is inside the circumcircle of a triangle, and the
end point is visible from the interior of the triangle. 
}
\label{fig:tdtcdt}
\end{figure}
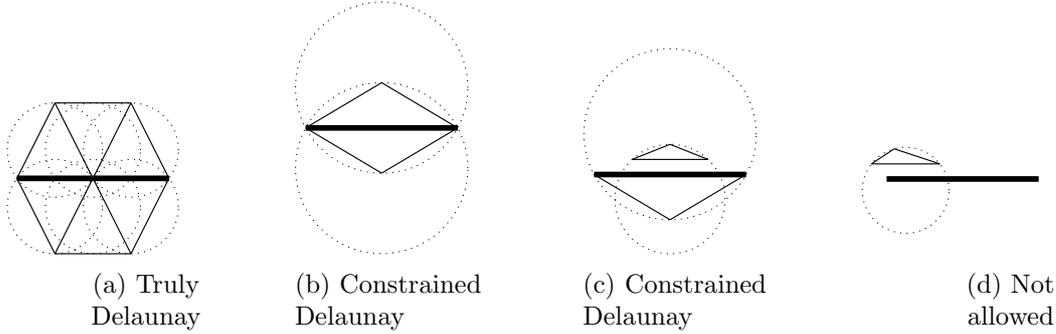

\section{Related Work}
An extensive body of literature has focused on generating
2D Delaunay meshes with a specified minimum angle.  
The first subsection below reviews prior 
algorithms in which the PSLG does 
not contain any angle smaller than $\pi/2$ (or 
$\pi/3$, depending on the algorithm).  The second 
subsection reviews techniques
to deal with PSLGs with small angles and the 
limitations of any algorithm used to mesh
such PSLGs. 

In most algorithms, a triangle is considered to be of
poor quality or ``skinny'' if its minimum angle
$\theta$ is less than some user-defined threshold; it
is an input to the algorithm.  The ratio of the 
length of the radius of the circumcircle of a triangle
and its shortest length is equal to $1/(2\sin{(\theta)})$,
where $\theta$ is the minimum angle in the triangle
(see Fig.~\ref{fig:angleratio}). 
A user may equivalently provide the radius-edge ratio
to define a skinny triangle.   

\subsection{Delaunay Mesh Refinement}
The first Delaunay mesh refinement algorithm was
developed by Chew~\cite{C89} to obtain constrained
Delaunay meshes.  Chew
first splits the segments of the input PSLG such that
the length of each subsegment is between $h$ and 
$\sqrt{3}h$, where $h$ is small enough that such a 
division in possible\footnote{In his 
report~\cite{C89}, Chew provides more details about
how to find $h$ and how to split the PSLG.}.  
The split PSLG is then triangulated using the
constrained Delaunay triangulation algorithm~\cite{C87}.  
Then triangles whose radius of the circumcircle
is less than $h$ are ``split'' by adding its
circumcenter.  Upon retriangulation, the addition 
of the circumcenter eliminates the skinny triangle 
because the new vertex (the circumcenter) is inside
the skinny triangle, which Delaunay triangulation does 
not allow. Since Chew always adds vertices at a
distance of at least $h$ from other vertices 
(Delaunay triangulation ensures that there are no 
vertices inside the circumcircles of all triangles), 
we eventually run out of space to add more points.  In 
addition, Chew demonstrated that no triangle will be formed
such that the radius of its circumcircle is
less than $h$ and its circumcircle is outside the domain. 
Also, since the radius of the circumcircle is less than
$h$ and the shortest edge length is at least $h$, all the angles
in the mesh are greater than or equal to $\pi/6$. 
This algorithm results in a uniform mesh. In this paper,
I generalize this algorithm by developing a technique
to split PSLG segments in a size-optimal fashion rather
than uniformly.  

Ruppert~\cite{R93,R95} developed a similar Delaunay 
refinement algorithm that splits the PSLG on the fly
to generate truly Delaunay meshes.  Instead
of splitting a triangle based on the length of the
radius of the circumcircle, the radius-edge ratio became
the criterion.  A triangle is split only when the ratio is 
greater than 1 (it should be greater than $\sqrt{2}$ for reasons
explained below), which corresponds to the minimum angle being 
$\pi/6$.  Clearly, if the triangle is split when the ratio 
is less than 
$1$, the algorithm does not terminate as it places vertices
progressively closer to each other. If the Steiner vertex, however,
is outside the domain or inside the diametral circle of a PSLG
subsegment, the PSLG subsegment is split at its midpoint.  Because
the midpoint may introduce short edges, the threshold of the 
radius-edge ratio to split a triangle has to be increased to 
$\sqrt{2}$, which corresponds to a minimum  angle of about 
20.7 degrees. Rand~\cite{R10} showed that Ruppert's algorithm 
(for truly Delaunay meshes) terminates for angles almost 
as large as 22.2 degrees.
Ruppert showed that this algorithm provides a truly Delaunay,
size-optimal mesh (elements' edge lengths are graded based 
on their proximity to small features in the input PSLG) 
when the input angles in the PSLG are greater
than $\pi/2$. This threshold was later lowered to $\pi/3$
by Shewchuk~\cite{S97,S02}.  Shewchuk was also able to
improve the bound on the minimum angle in a triangle
to $\pi/6$ at the cost of size optimality.  He split
the PSLG segments such that their lengths were in
specific ranges, and this restriction in the subsegment
length resulted in the loss of the size-optimality guarantee. 

Chew~\cite{C93} independently devised a second constrained 
Delaunay refinement algorithm in which he incorporated the idea
of refining the PSLG only when necessary. In addition, 
the technique also removes points within the diametral 
circle of a subsegment when the subsegment is to be refined. 
Shewchuk~\cite{S97,S02} showed that this algorithm produces
size-optimal constrained Delaunay meshes when the desired
radius edge ratio is greater than $\sqrt{5}/2$, which 
translates to a minimum angle of about 26.57 
degrees.

Shewchuk~\cite{S97} improved upon both the techniques of
Chew and Ruppert above by refining triangles near any 
PSLG segment differently from those away from the 
segments.  Shewchuk's technique ensured that
the quality of triangle in the interior is better than
the ones near the PSLG boundary segments. Also, he 
introduced the notion of ``diametral lenses'' to analyze
Chew's second algorithm and showed the bound of
about 26.57 degrees mentioned above.  

Miller, Pav, and Walkington~\cite{MPW05, P03} showed that
in a modified version of Ruppert's technique (for truly 
Delaunay triangulation), at least three circumcenters 
have to be inserted between two refinements of a 
PSLG segment/subsegment.  Thus, in the
worst case and when the input is restricted (conditions
on the angles between PSLG segments and their lengths), 
they were able to show that the algorithm
terminates with a truly Delaunay, size-optimal mesh if
the minimum desired angle is set to some value strictly 
less than $\arcsin{2^{(-7/6)}} \approx 26.45$ degrees 
even when the input angle is as small
as $\pi/4$. Rand~\cite{R11} extended this analysis to
Chew's second algorithm (for constrained Delaunay triangulation) 
and showed that it can produce a size-optimal
mesh with a minimum desired angle is set to some value strictly 
less than $\theta$ such that $8 \sin^3{\theta}/\cos{\theta} < 1$,
which corresponds to about 28.60 degrees. 

\"{U}ng\"{o}r and Erten~\cite{EU09,U09,U04} developed a
heuristic technique to place Steiner vertices at 
``off-center'' points
such that the shortest edge of skinny triangles
subtend the desired minimum angle at the points.  
When coupled with the prioritization of 
skinny triangles with shortest edges first, they found that
the resulting meshes have fewer elements and vertices.  
Their technique works for both truly and constrained 
Delaunay meshes. I use the heuristic technique in my
algorithm, and it plays an important role in the proofs. 

Foteinos et al.~\cite{CC09, FCC10} generalized Chew's
second algorithm to show that the insertion of Steiner 
vertices can happen at any point (not just the 
circumcenter or the off-center point) in a skinny triangle's 
``selection circle'', which is a circle concentric with
its circumcircle, but with a radius shorter by the length
of the shortest edge of the skinny triangle. In addition, 
they showed that
the PSLG may be split at any point (not just the midpoint)
sufficiently far away from the end points.  Such an algorithm 
would still produce a size-optimal constrained Delaunay 
mesh with quality guarantees.  Hudson~\cite{H08} generalized
the selection circle to a larger selection region. 

\subsection{Managing Small Input Angles}
If the PSLG has any small input angles, the results
above do not hold for the whole mesh.  One may assume that
if the algorithms above ignore triangles at small input 
angles, it should be possible to construct a mesh with
larger angles everywhere else. Unfortunately, 
Shewchuk~\cite {S97,S02} showed that it is impossible to 
construct such a mesh. Researcher have since attempted
to mitigate the effects of small input angles. 

Shewchuk~\cite{S97,S00,S02} developed a technique
he called Terminator that splits skinny triangles
near a small input angle only if the new edge that
would be introduced in the mesh were long enough to 
ensure that the algorithm terminates. This technique 
(for truly Delaunay meshes) does
not have a theoretical guarantee of size optimality.  
He showed that the minimum angle in the mesh may be as large as
$\arcsin{\left(\frac{1}{\sqrt{2}}\sin{\left(\frac{\phi}{2}\right)}\right)}$,
where $\phi$ is the smallest angle in the input PSLG. 
He applied a similar technique for Chew's algorithm
(for constrained Delaunay meshes) and improved
the bounds to 
$\arcsin{\left(\frac{\sqrt{3}}{2}\sin{\left(\frac{\phi}{2}\right)}\right)}$. 

Rand and Walkington~\cite{RW09} employed a
collar-based strategy to protect regions around
a PSLG vertex with small angles and prevented
vertex insertion near a small angle. Pav and
Walkington~\cite{PW05} employed a similar
strategy. This strategy was inspired by
Ruppert's heuristic in his papers~\cite{R93,R95}.

Pav et al.~\cite{MPW05,P03} redefined a poor-quality triangle
as a skinny triangle whose end points of the 
shortest edge do not lie on two segments emanating 
from an input vertex with a small angle. 
They split all skinny triangles in the mesh that
were not ``across'' a small angle.
They ignored skinny triangles across a small input angle,
but due to the segment splits beforehand, they showed
that the angles in such skinny triangles were 
at least $\arctan{\left(\frac{sin{\phi}}{2-\cos{\phi}}\right)}$.  
This technique was inspired by Ruppert's concentric
shell splitting heuristic. 
In my algorithm, I split the PSLG before splitting
triangles.  Thus, I am also able to show almost
identical bounds.  In addition, the Pav et al. algorithm
produces a mesh with a maximum angle of 
$\pi - 2\arcsin{\frac{\sqrt{3}-1}{2}} \approx 137$ degrees.  
My algorithm improves this bound. 

\section{Background}
The notations, concepts and algorithms presented in this 
section are used in the advancing-front algorithm.  

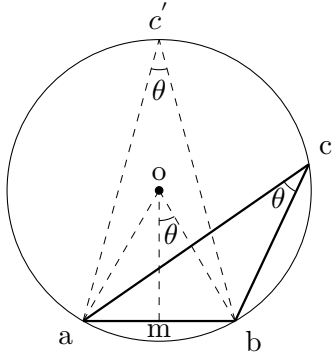
\begin{figure}
\centering
\begin{tikzpicture}[scale=2]
    \draw (0,0) circle (1);
    \draw [line width=0.30mm] (240:1) -- (300:1);
    \draw [line width=0.30mm] (300:1) -- (10:1);
    \draw [line width=0.30mm] (240:1) -- (10:1);
    \draw [line width=0.10mm, dashed] (300:1) -- (90:1);
    \draw [line width=0.10mm, dashed] (240:1) -- (90:1);
    \draw [fill=black] (0,0) circle (0.025);
    \draw [line width=0.10mm, dashed] (300:1) -- (0:0);
    \draw [line width=0.10mm, dashed] (240:1) -- (0:0);
    \draw [line width=0.10mm, dashed] (270:0.866025) -- (0:0);

    \draw  (90:1) ++(255:0.2) arc (255:285:0.2);
    \draw  (10:1) ++(213:0.2) arc (213:243:0.2);
    \draw  (0:0) ++(270:0.2) arc (270:300:0.2);

    \node [above] at (0,0) {o};
    \node [below left] at (240:1) {a};
    \node [below right] at (300:1) {b};
    \node [below] at (270:0.82){m};
    \node [above right] at (10:1) {c};
    \node [above] at (90:1) {$c^{'}$};
    \node [below] at (90:0.8) {$\theta$};
    \node at ($(10:1)+(228:0.3)$) {$\theta$};
    \node at (285:0.3) {$\theta$};
\end{tikzpicture}
\caption{
The relationship between the minimum angle in a triangle
and the ratio of the radius of its circumcircle and the
length of the shortest edge. Let $ab$ be the shortest
edge of $\triangle abc$.  Then, the angle at $c$ is its 
shortest angle.  The triangle's circumcircle is shown. 
Note that $\angle c^{'} = \angle c$. Also, $\angle aob = 
2\angle c^{'}$, and $\angle mob = \angle c^{'}$.  Clearly,
$\sin{\angle c} = \sin{\angle c^{'}} = 
\sin{\angle mob} = \frac{|ab|}{2|ob|}
= \frac{l}{2r}$, where $l = |ab|$ and $r$ is the length
of the radius of the circumcircle. 
}
\label{fig:angleratio}
\end{figure}

\subsection{A ``Skinny'' Triangle}
In this paper, I will use $\theta^*$ to denote the desired
minimum angle in a mesh and  $\alpha = 1/(2\sin{(\theta^*)})$ 
to  denote the desired radius-edge ratio (see 
Fig.~\ref{fig:angleratio} for an explanation).  A skinny triangle
is any triangle whose minimum angle is less than $\theta^*$
or whose radius-edge ratio is greater than $\alpha$. 

\subsection{The Local Feature Size and Size Optimality}
The local feature size, denoted as $\mathrm{LFS}(x)$ or simply
$F(x)$, is the
radius of the smallest circle centered at a point $x$ that 
intersects two nonadjacent features of the input PSLG.  Note 
that the local feature size is dependent only on the input but 
not on the mesh generated.  For any point $x$ on a line segment 
$pq$ in the input PSLG, since $x$ is already on an input feature,
the feature size at $x$ is the minimum of (a) the distance 
to the nearest feature from $x$ such that
the feature is not adjacent to $pq$ and 
(b) the distance to 
$p$ or $q$, whichever is farthest from $x$.  The distance
to $p$ or $q$  is also considered 
because $p$ and $q$ are not adjacent features, and both
$p$ and $q$ are inside a disk centered at $x$ with radius 
equal to $xp$ or $xq$, whichever is greater.

Ruppert~\cite{R93} introduced this metric
and showed that his algorithm produces meshes in which the
length of the edges is greater than some fraction of the
local feature size at its end points. In other words, the
length of any edge in the final mesh produced by his 
algorithm is greater than
$\gamma\mathrm{LFS}(x)$, where $\gamma$ is some constant. 
He called the meshes size-optimal meshes. 
Since then, many researchers have used this metric 
to show that their algorithms also produce meshes that 
are size optimal.

The local feature size is a Lipschitz continuous function,
i.e., $\mathrm{LFS}(y) \le \mathrm{LFS}(x) + ||x-y||$ and
$\mathrm{LFS}(y) \ge \mathrm{LFS}(x) - ||x-y||$  because
the nonadjacent features that are contained in a disk 
centered at $x$ with a radius $\mathrm{LFS}(x)$
are also contained in the disk centered at 
$y$ with a radius $\mathrm{LFS}(x)+||x-y||$.  

\subsection{Nonhomogenous Ordinary Differential Equations}
My algorithm involves the solution of ordinary differential
equations (ODEs) of first and second order.  There is an 
extensive body of literature addressing the solution of 
such equations analytically 
and/or numerically.  Fortunately, it is possible to derive an
analytical solution to the equations presented in Section 4. 
Below, I will go through the two equations we will
see in Section 4. 

First, let us consider an equation of the form 
$y\mathrm{'} + ay = b$, where $a$ and $b$ are constants.  
Multiplying both sides by
$e^{ax}$, we get $e^{ax}y\mathrm{'} + aye^{ax} = be^{ax}$.  
Integrating both sides w.r.t. $x$, we get
$e^{ax}y = (b/a)e^{ax} + c$, where $c$ is some constant.  
Thus, $y = (b/a) + c/e^{ax}$, where $c$ is determined using
a boundary condition.  If $a=0$, then $y = bx + c$, where 
$c$ is some constant. 

Second, let us consider an equation of the form
$y\mathrm{''} = y + a$, where $a$ is a constant.  The
solution to this equation is given by 
$y = c_1e^x + c_2e^{-x} - a$,
where $c_1$ and $c_2$ are constants to be determined using 
two independent boundary conditions.  

I use the solution of the differential equations discussed
above to adaptively split all line segments of the input
PSLG such that their lengths are asymptotically 
proportional to the local feature size.  
Given a line segment $pq$ that is a subsegment of the PSLG
line segment, I ensure that the length $l$ of $pq$ is such
that $A^*\le\frac{\mathrm{LFS}(p)}{l}\le B^*$, where 
$A^*$ and $B^*$ are some constants. I will denote the 
ratio $B^*:A^*$ as $R$.
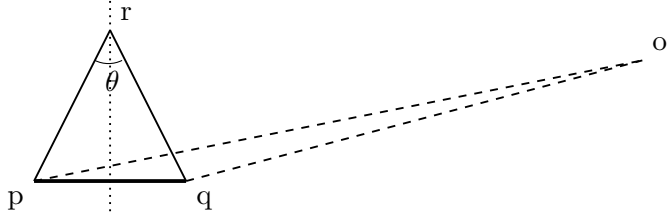
\begin{figure}
\centering
\begin{tikzpicture}[scale=2]
    \draw [line width=0.50mm] (0,0) -- (1,0);
    \draw [line width=0.25mm] (0,0) -- (0.5,1);
    \draw [line width=0.25mm] (1,0) -- (0.5,1);
    \draw [line width=0.25mm, dashed] (0,0) -- (4,0.8);
    \draw [line width=0.25mm, dashed] (1,0) -- (4,0.8);
    \draw [line width=0.25mm, dotted] (0.5,-0.2) -- (0.5,1.2);
    \draw (0.4,0.8) arc (243.44:296.56:0.2);
    \node [below left] at (0,0) {p};
    \node [below right] at (1,0) {q};
    \node [above right] at (0.5,1) {r};
    \node [above right] at (4,0.8) {o};
    \node [below right] at (0.4,0.8) {$\theta$};
\end{tikzpicture}
\caption{The off-center vertex. Instead of the circumcenter of 
a skinny triangle $pqo$, \"{U}ng\"{o}r and Erten~\cite{EU09} add an 
off-center vertex into the Delaunay triangulation.  The off-center
vertex $r$ lies on the perpendicular bisector of the shortest edge
$pq$ of the skinny triangle such that the edge subtends the minimum
desired angle at the point.  Note that $r$ should be on the same
side of $pq$ as $o$ is. As in their algorithm, I use the same
point for Delaunay refinement if it is closer to the shortest
edge than to the circumcenter.}
\label{fig:offcenter}
\end{figure}
\subsection{The Off-Center Refinement Algorithm}
In earlier Delaunay mesh refinement algorithms, Steiner
vertices were added at the circumcenter of skinny triangles 
in order to produce a mesh with no skinny triangles. 
Erten and \"{U}ng\"{o}r~\cite{U04,EU09} developed an 
algorithm in which the position of a Steiner vertex is an
off-center point on the perpendicular bisector of the shortest
edge of the skinny triangle.  The point is chosen such that
the angle subtended by the shortest edge at that point is
equal to the desired angle $\theta^*$ specified in the input
(see Fig.~\ref{fig:offcenter}).  If the off-center point is farther 
than the shortest edge than the circumcenter, the algorithm 
reverts to the circumcenter insertion technique. The 
algorithm processes poor-quality triangles with the 
shortest edges first, i.e., it considers the shortest
edge in every skinny triangle, and processes the triangle
with the shortest edge among those considered edges. 

In this algorithm, if the length of the shortest edge 
of a skinny triangle is $l$, the location of the off-center 
point is at a distance of $\beta l$ from the vertices of
the shortest edge, where $\beta = 1/(2\sin{(\theta^*/2)})$ and
$\theta^*$ is the 
desired minimum angle in the mesh.  Note that $\alpha <
\beta$, where $\alpha = 1/(2\sin{(\theta^*)})$ is the desired 
radius-edge ratio.  Since there are no other vertices
within the circumcircle of the skinny triangle (the Delaunay
property), this algorithm always places
Steiner vertices that are at least a distance of
$\alpha l$ from all other vertices in the mesh and at most
a distance of $\beta l$ from both vertices of the shortest
edge of the skinny triangle.  If $\alpha > 1$ (and if all the
Steiner vertices are placed inside the domain), it is easy to
see that the algorithm terminates when it runs out of place to
add more vertices in the domain. Also note that users could
purposefully insert a Steiner vertex at a distance between
$\alpha l$ and $\beta l$ if they choose to slowly grow 
the size of elements as the vertices are placed away from 
the PSLG segments. 

In my algorithm, I use the off-center Steiner 
vertex insertion algorithm described
above (including the prioritization of shortest edges). 
I ensure that the adaptive splitting results in Steiner vertex
insertion that does not encroach (defined below) 
upon any of the split segments.  

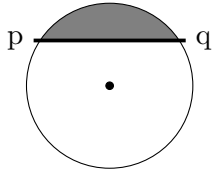
\begin{figure}
\centering
\begin{tikzpicture}[scale=2]
    \draw [line width=0.50mm] (0,0) -- (1,0);
    \node [left] at (0,0) {p};
    \node [right] at (1,0) {q};
    \begin{scope}[on background layer]
     \begin{pgfinterruptboundingbox}
       \clip (0.5,-0.3) circle (0.547);
       \clip (0,0) rectangle (1,1);;
     \end{pgfinterruptboundingbox}
       \fill [color = gray] (0,0) rectangle (1,1);
    \end{scope}
       \draw (0.5,-0.3) circle (0.547);
       \draw [fill=black] (0.5,-0.3) circle (0.025);
\end{tikzpicture}
\caption{Encroachment of a Steiner vertex. A possible
circumcircle of a skinny triangle is shown. Since the triangulation
obeys the Delaunay property, the circumcircle cannot enclose
any vertex of a PSLG subsegment $pq$. The circumcenter $o$ and the 
skinny triangle
have to be on opposite sides of $pq$.  Thus,
the skinny triangle has to be in the shaded gray region, which
is fully inside the diametral circle of the PSLG subsegment.}
\label{fig:insidedia}
\end{figure}

\subsection{Line Segment Encroachment}
All the algorithms mentioned in this paper work without 
issues only if the Steiner vertices are always placed inside
the domain.  When a potential vertex is outside the
domain, the vertex is considered to have encroached upon
the domain. More formally, a Steiner vertex encroaches upon 
a PSLG subsegment if the skinny triangle that resulted
in the Steiner vertex and the Steiner vertex lie on
the opposite sides of the PSLG line segment.  
Note that this definition considers the 
internal boundaries in the PSLG as well.  
Since the triangulation obeys the Delaunay property, 
the circumcircle of a skinny triangle cannot enclose
any vertex of a PSLG subsegment. For the 
circumcenter to encroach upon the subsegment, the 
circumcircle and the skinny triangle have to be
on different sides of the PSLG subsegment. This 
implies that the skinny triangle has to be inside 
the diametral circle of the PSLG subsegment 
(see Fig.~\ref{fig:insidedia}). 
In the algorithm, I insert the off-center vertex
or the circumcenter, whichever is closer. 
Therefore, I will always insert a vertex
that is at most the distance to the circumcenter. 
Thus, if the circumcenter does not encroach any PSLG
subsegment, we can be guaranteed to insert a
vertex that does not encroach upon a PSLG subsegment. 
If a Steiner vertex encroaches upon a PSLG line 
segment, prior algorithms impose 
corrective measures.  As
in Chew's algorithm, I will show that no circumcenter
of skinny triangles encroaches upon a PSLG line 
segment if the line segments are appropriately split.

\subsection{Small Angle}
I denote a small angle as $\phi$. 
Given a size-optimal splitting of the PSLG such that
$A^*\le\frac{\mathrm{LFS}(p)}{l}\le B^*$ for any subsegment
$pq$ of length $l$ and $R=B^*/A^*$, I define a small angle
as any angle $\phi\le\arccos{\frac{1}{2R}}$.

\subsection{Skinny Triangles ``Across'' a Small Angle}
Miller, Pav, and Walkington~\cite{MPW05,P03} developed
an algorithm that provides guarantees on the mesh quality
even in the presence of small angles in the PSLG.  In
their algorithm, they first adaptively split the line 
segments of the PSLG that form a small angle 
$\phi<\pi/3$ such that their lengths are
in the powers of two (in some global scale). 
During the mesh refinement phase, 
Miller et al. ignore skinny triangles if the 
vertices of their shortest edges lie on adjacent
line segments of the PSLG that form an angle 
$\phi<\pi/3$.  As in their algorithm, I too ignore 
skinny triangles ``across'' a small angle, but
I will define the small angle
as $\phi\le\arccos{\frac{1}{2R}}$ as
in subsection above. 

\section{The Advancing Front Algorithm}
The advancing front algorithm carries out the following 
three steps in succession to generate a size-optimal
mesh:
\begin{enumerate} 
\item The computation of the piecewise-smooth local feature size
functions for the input line segments of the PSLG.
\item The splitting of the input line segments into subsegments 
whose lengths are asymptotically proportional to the 
local feature size.
\item The refinement of the truly or the constrained 
Delaunay triangulation of the PSLG until all skinny triangles 
are eliminated. 
\end{enumerate}
Each step is described in 
detail in the following subsections. 
After the input PSLG line segments are split, I will
refer to each of the individual split segments
as subsegments. 

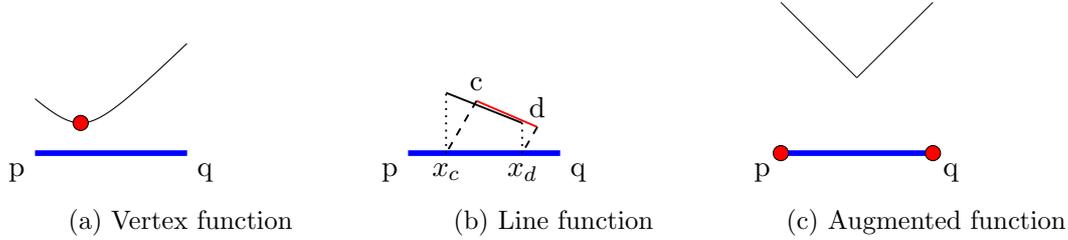
\begin{figure}
\centering
\begin{subfigure}[b]{0.3\textwidth}
\begin{tikzpicture}[scale=2]
    \draw [smooth,samples=100,domain=0:1] plot({\x},{sqrt((0.3-(\x))*(0.3-(\x)) + 0.2*0.2)});
    \draw [line width=0.75mm, blue] (0,0) -- (1,0);
    \draw [fill=red] (0.3,0.2) circle [radius=0.05cm];
    \node [below left] at (0,0) {p};
    \node [below right] at (1,0) {q};
\end{tikzpicture}
\caption{Vertex function}
\end{subfigure}
\centering
\begin{subfigure}[b]{0.3\textwidth}
\begin{tikzpicture}[scale=2]
    \draw [line width=0.75mm, blue] (0,0) -- (1,0);
    \draw [line width=0.25mm] (0.25,0.4) -- (0.75,0.2);
    \draw [line width=0.25mm, red] (0.45,0.3464) -- (0.85,0.1732);
    \draw [line width=0.25mm, dashed] (0.45,0.3464) -- (0.25,0.0);
    \draw [line width=0.25mm, dashed] (0.85,0.1732) -- (0.75,0.0);
    \draw [line width=0.25mm, dotted] (0.25,0.0) -- (0.25,0.4);
    \draw [line width=0.25mm, dotted] (0.75,0.0) -- (0.75,0.2);
    \node [below left] at (0,0) {p};
    \node [below right] at (1,0) {q};
    \node [above] at (0.45,0.3464) {c};
    \node [above] at (0.85,0.1732) {d};
    \node [below] at (0.25,0) {$x_c$};
    \node [below] at (0.75,0) {$x_d$};
\end{tikzpicture}
\caption{Line function}
\end{subfigure}
\centering
\begin{subfigure}[b]{0.3\textwidth}
\begin{tikzpicture}[scale=2]
    \draw [line width=0.75mm, blue] (0,0) -- (1,0);
    \draw [line width=0.1mm,  black] (0,1) -- (0.5,0.5); 
    \draw [line width=0.1mm,  black] (0.5,0.5) -- (1,1);
    \draw [fill=red] (0.0,0.0) circle [radius=0.05cm];
    \draw [fill=red] (1.0,0.0) circle [radius=0.05cm];
    \node [below left] at (0,0) {p};
    \node [below right] at (1,0) {q};
\end{tikzpicture}
\caption{Augmented function}
\end{subfigure}
\caption{The various distance functions associated with 
a line segment $pq$ in the PSLG. The blue, thick line segment $pq$
is horizontal and can be considered as part of the $x$ axis with
vertex $p$ being at the origin.  The distance to the red feature(s) 
is function of $x$, and the function is plotted as a thin black curve. 
(a) The distance to a PSLG vertex is plotted as a function of $x$.
The domain of the function is from $p$ to $q$.
(b) The distance to some other PSLG line segment (red) is plotted. 
The distance varies linearly, and  
the domain of the linear distance function is limited.  
The dashed lines are perpendicular to the PSLG line segment (red), 
and they limit the domain of the distance function.  
Beyond the domain, the distance to $c$ or $d$ (whichever is closer) 
defines the distance function on $pq$.  Those parts of the distance 
functions look like the distance function in (a). 
(c) As $p$ and $q$ are not adjacent,
the distance to $p$ or $q$, whichever is larger, also limits the 
feature size at any point on the line segment $pq$. A disk centered 
at a point on $pq$ with the radius equal to the greater of 
$xp$ or $xq$ contains both $p$ and $q$. Thus, this piecewise
linear function is also considered to compute the local
feature size.}
\label{fig:distance}
\end{figure}

\subsection{The Feature Size Function Computation}
The algorithm requires the knowledge of the local
feature size at every point on the input line segments
of the PSLG. In order to compute the feature
size, let the  $i^{\mathrm{th}}$ line 
segment of PSLG, $L_i$, be 
parameterized to lie on the $x$-axis from $x=0$ to
$x=l_i$, where $l_i$ is its length.  I compute the
piecewise smooth function $F(x)$ that provides
the feature size at any point $x$ on the line segment.  
I call this the feature size function of $L_i$.
As mentioned in the previous section, the local 
feature size at any point on $L_i$ 
is the distance to the nearest feature that
is not adjacent to $L_i$ or the distance to the
farthest end point of $L_i$, whichever is 
smaller.  

Clearly, $F(x)$ is the lower envelope
of many different functions, which plot
the distance to a vertex or a line (nonadjacent features)
in the PSLG from $x$.  Examples of such distance functions 
are shown in Fig.~\ref{fig:distance}.  The distance from a
point $(x,0)$ on $L_i$ to a vertex $(a,b)$ 
in the PSLG is 
$\sqrt{(a-x)^2 + b^2}$.  The function is shown in
Fig.~\ref{fig:distance}(a).  The distance 
between line segments $L_i$ (also denoted as $pq$)
to some other line segment $L_{cd}$, whose end points
are $c$ and $d$, is given by a linear function 
whose domain is
from $x_c$ to $x_d$, where $x_c$ and $x_d$ are
points on $L_i$ such that $x_cc$ and $x_dd$ are
perpendicular to $L_{cd}$ as shown in 
Fig.~\ref{fig:distance}(b).  From $x=0$ to $x=x_c$
and from $x=x_d$ to $x=l_i$, the distance to
vertex $c$ and $d$, respectively, defines the 
distance function.  
We should also consider the function in
Fig.~\ref{fig:distance}(c) that accounts for the
farthest end points of $L_i$ to the list of functions
over which we compute the lower envelope.  The
value of this function is $l_i$ at $x=0$ and $x=l_i$,
and it is $l_i/2$ at $x=l_i/2$.  
To compute the feature size function of $L_i$, 
we consider the distance function from
all vertices and line segments of the PSLG (including
end points of the line segments) and the farthest end
points of $L_i$.  
Note that the distance functions and
the feature size function need to be computed only
from $x=0$ to $x=l_i$.  

The lower envelope of the distance function 
can be computed using a sweep line
algorithm that maintains a balanced binary search 
tree or a heap that orders the functions based on their value
at the current location of the sweep line.  As I focus
on mesh generation in this paper, I direct
the readers to a paper~\cite{BGR97}
that solves the problem of computing the lower
envelope efficiently. The paper is
written to compute the lower envelope of lines and 
line segments, but it can be easily adapted 
for nonlinear distance functions in our context.

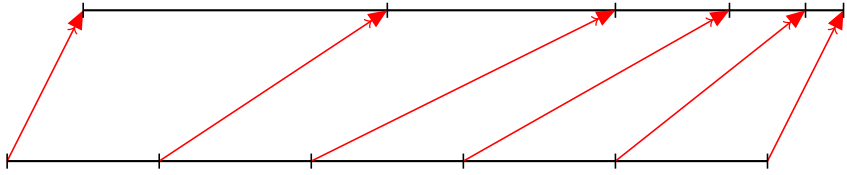
\begin{figure}
  \centering
  \begin{tikzpicture}
        \draw[thick] (0,0) -- (10,0);
        \draw[thick] (1,2) -- (11,2);
        \draw[-> {triangle 45},semithick,red] (0,0) -- (1,2);
        \draw[-> {triangle 45},semithick,red] (2,0) -- (5,2);
        \draw[-> {triangle 45},semithick,red] (4,0) -- (8,2);
        \draw[-> {triangle 45},semithick,red] (6,0) -- (9.5,2);
        \draw[-> {triangle 45},semithick,red] (8,0) -- (10.5,2);
        \draw[-> {triangle 45},semithick,red] (10,0) -- (11,2);
        \draw[semithick] (0,-0.1) -- (0,0.1);
        \draw[semithick] (2,-0.1) -- (2,0.1);
        \draw[semithick] (4,-0.1) -- (4,0.1);
        \draw[semithick] (6,-0.1) -- (6,0.1);
        \draw[semithick] (8,-0.1) -- (8,0.1);
        \draw[semithick] (10,-0.1) -- (10,0.1);
        \draw[semithick] (1,1.9) -- (1,2.1);
        \draw[semithick] (5,1.9) -- (5,2.1);
        \draw[semithick] (8,1.9) -- (8,2.1);
        \draw[semithick] (9.5,1.9) -- (9.5,2.1);
        \draw[semithick] (10.5,1.9) -- (10.5,2.1);
        \draw[semithick] (11,1.9) -- (11,2.1);
  \end{tikzpicture}
  \caption{An example of reference-to-PSLG mappings $M_i(t)$ from a
    reference segment $T_i$ to a PSLG segment $L_i$. 
    The reference segment is uniformly split into $n$ subsegments, and the
    corresponding splits are made in the PSLG segment.  The mapping function
    is defined such that uniform splits in the reference segment
    correspond to asymptotically proportional (to the local feature size) 
    splits in the PSLG segment.  Note that the reference segment and
    the PSLG segment may be of different lengths.}
  \label{fig:map}
\end{figure}
\subsection{The PSLG Segment Splitting}
After the feature size function is computed for each
line segment in the PSLG, the next step is to split
the line segments such that the length of each subsegment
is asymptotically proportional (see Section 3.3 for the
definition of asymptotic proportionality)
to the feature size at the end points of each 
subsegment.  In order to
achieve this goal, I construct a mapping function from
a reference segment to the PSLG line segment 
(see Fig.~\ref{fig:map}) such that
for each point on the reference segment, there is a
corresponding point on the PSLG segment (and vice 
versa).  When I split the reference segment evenly
and correspondingly split the PSLG line segment at
the mapped location, the mapping
function ensures that the length of the subsegments
in the PSLG line segment is asymptotically proportional 
to the local feature size.  In this section, I will explain how
the mapping function is computed by constructing and
solving a differential equation. I will also explain
how all the reference segments (there is one reference
segment for every line segment in the PSLG) are split 
such that the corresponding splits in the PSLG line 
segments are size optimal.

\subsubsection{Deriving the Differential Equations}
Let the mapping function for the $i^\mathrm{th}$ 
line segment $L_i$ be denoted by $M_i(t)$, where
$t$ is a point on the reference segment $T_i$. Let the
length of the reference segment be $(t^*)_i$, which is
yet to be determined.  The mapping function
should be designed such that $M_i(0)=0$ and 
$M_i((t^*)_i)=l_i$, where $l_i$ is the length of 
$L_i$.  Let $T_i$ be split into $n$ equal subsegments, 
which means that we split $L_i$ at $x=M_i(0)=0$, 
$x=M_i((t^*)_i/n)$, $x=M_i(2(t^*)_i/n)$, $x=M_i(3(t^*)_i/n)$, and
so on until $x=M_i((t^*)_i)=l_i$.  The length of
each split in the reference segment is given by
$h = (t^*)_i/n$.  

We want the mapping function to result in splits 
that are asymptotically proportional to the local
feature size.  Consider a vertex at $t$ on $T_i$.
Its corresponding point on $L_i$ is $M_i(t)$. The
vertex adjacent to $t$ on $T_i$ is $t+h$.  Its
corresponding point on $L_i$ is $M_i(t+h)$.  
The length of this subsegment on $L_i$ is 
$M_i(t+h)-M_i(t)$.  This length should be 
proportional to the feature size at $M_i(t)$, i.e.,
$F(M_i(t))$, where $F(\cdot)$ is the local feature
size function computed above by constructing the
lower envelope of the distance functions.  Thus,
$M_i(t+h)-M_i(t) \propto F(M_i(t))$. If $h$ is small
enough, we know that 
$(M_i(t+h)-M_i(t)) \approx hM_i^{'}(t)$.  Thus, my
intuition is to
compute $M_i$ such that $M_i^{'}(t)=F(M_i(t))$.  
In Section 5, I will show that this intuitive choice
of $M_i$ results in asymptotically proportional 
splits of the PSLG.
Note that the feature size function is always 
positive. Therefore, the mapping function is
monotonically increasing as its derivative is
also always positive. 

\subsubsection{Solving the Differential Equations}
Let us consider a line segment $L_i$ of the PSLG
and the corresponding reference segment $T_i$.  
Let the feature size function $F(\cdot)$ (obtained in
Section 4.1) on that 
line segment have $k$ parts, i.e., there are $k$ pieces
in the piecewise-smooth function.  As seen in Section 4.1, 
$F(\cdot)$ has parts that are either linear or 
the square root of a quadratic function.  
Let us denote $M_i$ as $y$. Our differential
equation is $y^{'} = F(y)$.  When a part of the
feature size function is linear, the 
equation becomes $y^{'} + ay = b$ for some $a$ and
$b$.  The solution (see Section 3.3) to this equation is 
$y = (b/a) + c/e^{at}$, where $c$ needs to be 
determined using a boundary condition.  
When the part of the feature size function 
is the square root of 
a quadratic function, the equation is of the form
$y^{'} = \sqrt{y^2 + 2ay + b}$ for some $a$ and $b$.  
Squaring both sides, we get
$(y^{'})^{2} = y^2 + 2ay + b$.  Differentiating w.r.t.
$t$, we get $2y^{'}y^{''} = 2yy^{'} + 2ay^{'}$. Dividing
by $2y^{'}$ on both sides, we get $y^{''} = y + a$.  
The solution(see Section 3.3) to this equation is 
$y = c_1e^t + c_2e^{-t} - a$, where $c_1$ and $c_2$
need to be determined using two boundary conditions.
As $F(\cdot)$ is piecewise smooth,
$y(t)$ is also piecewise smooth, and each part of 
$y$ is given by the solution above.  

In the solution to the differential equation provided 
in Section 3.3, there are constants that need to be 
evaluated based on the boundary conditions. 
Let us consider the first part of the feature size 
function along $L_i$.  
When the first part of $F(y)$ is linear, we 
use the value of the mapping function at the initial 
point. In our case, the initial value at $t=0$ is 
$y(0) = M_i(0) = 0$, i.e.,
the starting point on $T_i$ maps to the starting
point $L_i$. On the other 
hand, when $F(y)$ is of the form $\sqrt{y^2 + 2ay + b}$, 
the corresponding differential equation is of the form
$y^{''} =y + a$, which needs two initial value 
conditions.  The first one is $y(0)=0$ as above. The
second condition is given by the local feature size at $t=0$.
Since $y^{'}(t)$ is equal to the local feature size at
$y(t)$, our second boundary condition is $y^{'}(0) = F(0)$.  

Using the boundary value conditions provided above,
we can analytically compute the first part of the 
solution of the differential equation.  
Thus, we have computed the
first part of the mapping function from the reference
line segment to the PSLG line segment.  Let us assume that
the first part of the feature size function (on the actual
PSLG segment $L_i$) starts 
at $x=x_0=0$ and ends at $x=x_1$.  
The first part of the solution to the differential equation 
(on the reference segment $T_i$)
starts at $t=t_0=0$ and ends at $t=t_1$, where
$y(t_1) = x_1$.  Unfortunately, $t_1$ cannot be
analytically computed in all cases, and hence, it
needs to be numerically computed in a practical 
implementation.

Let us assume that the $j^{\mathrm{th}}$ part of the
feature size function starts at $x_{j-1}$ and ends
at $x_{j}$.  
For the second (and subsequent) parts of the solution
of the differential equation, the boundary
values are given by $y(t_j) = x_j$ and 
$y^{'}(t_j) = F(x_j)$ for $j>0$.  If there are $k$
parts of the feature size function, the length of
the reference segment $T_i$ is $(t^*)_i = t_k$,
where $y(t_k) = l_i$. 
Note that when the feature size is small, the length
of the reference segment is large because the
mapping function, whose derivative is proportional 
to the local feature size, grows slowly when the 
feature size is small. An example of the 
construction of the reference segment is provided
in the proof of Lemma~\ref{lemma:minlength}. 

\subsubsection{Splitting the Line Segments}
Now that I have computed the mapping functions from
every reference segment $T_i$ to the PSLG segment $L_i$,
our task is to split the reference segments evenly 
so that the PSLG segments are split asymptotically
proportional to local feature size.  I first split
the reference segment with the smallest length
into $n^*$ parts, where $n^*$ is determined by the 
satisfaction of the lemmas in Section 5
(specific lemmas are mentioned in Section 4.4).  
Let the length of the shortest reference segment
be $t^*_{\mathrm{min}}$. I then split the
reference segment $T_i$ into $n_i$ parts,
where  $$n_i = \floor*{n^* \frac{(t^*)_i}{t^*_{\mathrm{min}}}}$$ 
and $(t^*)_i$ is the length of $T_i$.
Note that when the feature size is small, 
the length of the reference segment is large,
and therefore, the number of splits is also large.  

\subsection{The Off-Center Vertex Insertion}
After the PSLG line segments are split into subsegments, 
I use prior algorithms to obtain a mesh with the desired
quality.  I use \"{U}ng\"{o}r et 
al.'s~\cite{EU09,U04} algorithm to place 
off-center vertices to eliminate skinny triangles
from the mesh.  As in their algorithm, I prioritize
skinny triangles with shortest edges.  To reiterate,
I will consider the shortest edge in every skinny
triangle and pick the triangle with the shortest 
edge among those considered edges. 
In the presence of small angles in the input PSLG, 
I use the algorithm by Pav et al.~\cite{MPW05,P03}
to decide which skinny triangles to ignore
because no algorithm can eliminate all of them.  
I will show that the splits in Section 4.2 
ensure that there is no vertex encroachment 
if the lemmas in the next section are satisfied.  

In order to obtain a high-quality mesh, a truly 
Delaunay triangulation or a constrained Delaunay 
triangulation (whichever is desired)
of the domain is constructed.  
Then a skinny triangle (if any) with the shortest 
edge is chosen. 
If the end points of the shortest edge of the skinny
triangle belong to line segments of the PSLG that
form a small angle (defined in Section 3.6), and if
the shortest edge is shorter than a certain threshold
(explained in Section 5.3), 
the skinny triangle is ignored.  
This skinny triangle is considered to be ``across''
a small angle. 
If not, its off-center point (see Section 3.4)
and the circumcenter of the skinny triangle are 
considered for insertion into the mesh.  Whichever
point is closer to the shortest edge of the skinny
triangle is inserted, and the domain is 
retriangulated. Another skinny triangle (if any)
with the shortest edge is chosen to be eliminated.  
These steps are repeated until all skinny triangles
are eliminated (except the ones 
across a small angle). 

\subsection{Satisfaction of Lemmas}
For truly Delaunay meshes, in the absence of
small angles, the PSLG segments should be split
such that the conditions in Lemmas~\ref{lemma:beginastar}
and~\ref{lemma:diacondition} are satisfied.  
In the presence of small angles,
in addition to satisfying the conditions in these lemmas,
the PSLG should be refined until the Delaunay
triangulation of the vertices on the PSLG segments
recover the PSLG segments. For constrained Delaunay
meshes, the conditions in Lemmas~\ref{lemma:beginastar},
~\ref{lemma:pslgskinnycondition}, and~\ref{lemma:diaskinny}
should be satisfied. For both truly and constrained
Delaunay meshes, in the presence of small angles,
as we progressively refine the PSLG, the bounds on 
the minimum and the maximum angle improve.

\section{An Analysis of the Algorithm}
I will recap some of the notations from Section 3
because I use them extensively in the analysis of
the algorithm.  In my analysis, I will first show that
the differential equation-based splitting of the input
PSLG line segments will result in size-optimal subsegments
such that $A^*\le\frac{\mathrm{LFS}(p)}{l}\le B^*$, where
$l$ is the length of a subsegment one of whose end points
is $p$, $\mathrm{LFS}(\cdot)$ and $F(\cdot)$ are used
to denote the local feature size
function, and $A^*$ and $B^*$ are some constants. 
I will denote the ratio $B^*:A^* \ge 1$ as $R$. A skinny 
triangle is any triangle whose minimum angle is
$\theta<\theta^*<\frac{\pi}{6}$, where $\theta^*$ 
is provided as an
input to the algorithm. I will then derive conditions
such that no vertex encroaches upon a PSLG subsegment.
The conditions are a function of $A^*$, $B^*$, $R$, 
$\theta^*$, and $\alpha = 1/(2\sin{(\theta^*)})$, where 
$\alpha$ is the desired minimum radius-edge ratio of 
triangles in the mesh. I will also show that the algorithm
terminates with a size-optimal mesh. In the bound 
associated with the size optimality, 
$\beta = 1/(2\sin{(\theta^*/2)})$ is the maximum distance
(normalized to the length of the shortest edge of a skinny
triangle) at which a Steiner vertex is placed from the
vertices of the shortest edge of a skinny triangle. 
A small angle
$\phi$ is any angle in the input that is 
$\phi\le\arccos{\frac{1}{2R}}$.  In the initial analysis, I
will consider any angle $\phi\le\pi/2$ as a small angle.  In the
appendix, I will show why $\phi$ can be smaller.  
Finally, I will show that
even in the presence of skinny triangles, as long as
we refine the PSLG sufficiently, it is possible to obtain
truly or constrained Delaunay meshes such that
the maximum angle is bounded. 

\subsection{Splitting the PSLG Segments}

In the first few lemmas below, I will show that 
as the PSLG segments are progressively refined, 
the bounds $A^*$ and $B^*$ increase, but their
ratio $R$ approaches $1$.  In addition, I will show
that given an upper bound on $R$ or a lower
bound on $A^*$, it is possible to split the PSLG
such that $B^*$ is bounded from above. 

\begin{lemma}
\label{lemma:minlength}
The length of the shortest reference segment
$t^*_{\mathrm{min}}\ge2\log_e{2}$.
\end{lemma}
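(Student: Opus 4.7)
The plan is to show that for any PSLG segment $L_i$, the reference-segment length $(t^*)_i$ is at least $2\log_e 2$, by exploiting the fact that the local feature size is always bounded above by the distance to the farther endpoint of $L_i$. Since the conclusion holds for every $i$, it holds in particular for the shortest reference segment.

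First, I would recall from Section 3.2 that for any point $x$ on $L_i$ (parametrized so that $p$ is at $0$ and $q$ is at $l_i$), $F(x) \le \max(x,\, l_i - x)$, since one of the quantities in the minimum defining $\mathrm{LFS}$ is the distance to the farther of the two endpoints of $L_i$. Combined with the ODE $y'(t) = F(y(t))$ and the boundary conditions $y(0)=0$, $y((t^*)_i) = l_i$, this yields the differential inequality
\begin{equation*}
y'(t) \;\le\; \max\bigl(y(t),\; l_i - y(t)\bigr).
\end{equation*}

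Next, because $y$ is strictly increasing (its derivative $F$ is positive), there exists a unique $t_{1/2}\in(0,(t^*)_i)$ with $y(t_{1/2}) = l_i/2$. On the first half $t\in[0,t_{1/2}]$, the farther endpoint from $y(t)$ is $q$, so $y'\le l_i - y$, giving $\frac{y'}{l_i - y} \le 1$. Integrating and changing variables from $t$ to $y$,
\begin{equation*}
t_{1/2} \;\ge\; \int_0^{l_i/2} \frac{dy}{l_i - y} \;=\; \log_e 2.
\end{equation*}
By the symmetric argument on $[t_{1/2},(t^*)_i]$, where $y'\le y$, the second half contributes at least another $\log_e 2$, so $(t^*)_i \ge 2\log_e 2$.

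I do not expect any real obstacle here; the only subtlety is checking that the ``farthest endpoint'' clause of the feature-size definition is genuinely active as an upper bound (and not overridden by the distance to some other feature that might be even larger), which is exactly what the definition guarantees since $F$ is the lower envelope of all such distance functions. Once that is in place, the bound follows by two one-line applications of separation of variables and is independent of $l_i$, so it transfers directly to $t^*_{\mathrm{min}}$.
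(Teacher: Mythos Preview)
Your proposal is correct and follows essentially the same approach as the paper: both use the upper bound $F(x)\le\max(x,\,l_i-x)$ coming from the farthest-endpoint clause, split at the midpoint $y=l_i/2$, and integrate the resulting first-order relation on each half to get $\log_e 2$ per half. The only cosmetic difference is that the paper phrases it by explicitly solving the extremal ODE $y'=l-y$ (obtaining $y(t)=l-l/e^t$ and reading off $t_1=\log_e 2$), whereas you reach the same conclusion via the differential inequality and separation of variables.
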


\begin{proof}
In order to obtain the shortest reference segment,
the local feature size at any point on the PSLG
line segment $pq$ of length $l$ has to be as large
as possible. The feature size cannot be arbitrarily 
large even if other features in the PSLG are very 
far away because the feature size is
bounded from above by $l-x$ from $x=0$ to $x=l/2$
and by $x$ from $x=l/2$ to $x=l$ (see 
Fig.~\ref{fig:distance}(c)), if $pq$ is 
assumed to be on the x-axis and $p$ is at the origin
(no loss of generality).

Let us now derive the mapping function to compute
the length of the reference segment.  The 
differential equation for the first piece of
the mapping function is given by
$y^{'} = l - y$, where $y(t)$ is the mapping function,
which implies $y+y^{'}=l$.  The solution to 
this equation is given by $y = l+c/e^t$, where
$c$ is a constant.  When $t=0$, $y=0$. Therefore,
$0 = l + c$, which implies $c=-l$. Thus, the first
piece of the mapping function is $y(t) = l - l/e^t$.  
This piece spans from $t=0$ to $t=t_1$ such that
$y(t_1) = l/2$ because the first piece of
the differential equation spans from $x=0$ to $x=l/2$,
which implies $l - l/e^{t_1} = l/2$,
which implies $t_1=\log_e{2}$.  As the other half of 
the local feature size function is symmetric, the 
length of the reference line segment is at least 
$2\log_e{2}$.  Thus, $t^*_{\mathrm{min}}\ge2\log_e{2}$.  
\end{proof}

The following lemma applies only to the PSLG segment
with the shortest reference segment because the variables
$t^*_{\mathrm{min}}$ and $n^*$ pertain to the PSLG
segment. If the two variables are replaced with their
equivalent quantities for other segments, the lemma also
holds for other segments. 

\begin{lemma}
\label{lemma:minsplit}
If the PSLG segment $L_i$ with the shortest reference segment
whose length is
$t^*_{\mathrm{min}}$
is split into $n^*$ subsegments, the bound on the ratio
of the local feature size and length of a subsegment
on $L_i$
at some vertex $p$ is given by 
$A^*\le\frac{\mathrm{LFS}(p)}{l_p}\le B^*$, where
$l_p$ is the length of a subsegment one of whose end points
is $p$, $\mathrm{LFS}(\cdot)$ is the local feature size
function, and 
$$A^* = \frac{n^*}{t^*_{\mathrm{min}}} - 1
\mathrm{\ and\ } 
B^* = \frac{n^*}{t^*_{\mathrm{min}}} + 1.$$ 
\end{lemma}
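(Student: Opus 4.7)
The plan is to exploit the defining ODE $M_i'(t)=F(M_i(t))$ of the mapping function together with the mean value theorem and the Lipschitz continuity of the local feature size (established in Section 3.2). Since the shortest reference segment $T_i$ of length $t^*_{\mathrm{min}}$ is split into $n^*$ equal pieces, each piece has length $h=t^*_{\mathrm{min}}/n^*$, and the corresponding PSLG subsegment with endpoint $p=M_i(t)$ has length $l_p = M_i(t+h)-M_i(t)$.

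First I would apply the mean value theorem to $M_i$ on $[t,t+h]$ to produce some $\xi\in(t,t+h)$ with
\[
l_p \;=\; h\,M_i'(\xi) \;=\; h\,F(M_i(\xi)),
\]
where the second equality uses the ODE derived in Section 4.2.1. Next, the Lipschitz continuity of $F$ combined with the fact that $M_i$ is monotonically increasing gives
\[
\bigl|F(M_i(\xi))-F(p)\bigr| \;\le\; \bigl|M_i(\xi)-p\bigr| \;\le\; l_p,
\]
and this bound holds whether $p$ is the left endpoint $M_i(t)$ or the right endpoint $M_i(t+h)$ of the subsegment, since in either case $\xi$ and $p$ are separated along $L_i$ by at most $l_p$.

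Substituting the Lipschitz estimates $F(p)-l_p \le F(M_i(\xi)) \le F(p)+l_p$ into $l_p=hF(M_i(\xi))$ yields $l_p(1-h)\le hF(p)$ and $l_p(1+h)\ge hF(p)$. Dividing through by $hl_p$ and using $1/h=n^*/t^*_{\mathrm{min}}$ gives
\[
\frac{n^*}{t^*_{\mathrm{min}}}-1 \;\le\; \frac{F(p)}{l_p} \;\le\; \frac{n^*}{t^*_{\mathrm{min}}}+1,
\]
which is precisely the claimed bound with $A^*=n^*/t^*_{\mathrm{min}}-1$ and $B^*=n^*/t^*_{\mathrm{min}}+1$.

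I do not anticipate any serious obstacle: the argument is essentially the mean value theorem plus the 1-Lipschitz property of $\mathrm{LFS}$. The only delicate point is that for the lower bound $A^*$ to be positive (and hence meaningful), one needs $h<1$, i.e., $n^*>t^*_{\mathrm{min}}$; this is guaranteed in practice by choosing $n^*$ large enough, as discussed in Section 4.2.3, and is consistent with Lemma~\ref{lemma:minlength}'s estimate $t^*_{\mathrm{min}}\ge 2\log_e 2$.
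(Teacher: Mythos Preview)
Your proposal is correct and follows essentially the same approach as the paper: both apply the mean value theorem to $M_i$ on an interval of length $h=t^*_{\mathrm{min}}/n^*$, invoke the defining ODE $M_i'=F\circ M_i$, and then use the $1$-Lipschitz property of $\mathrm{LFS}$ together with the monotonicity of $M_i$ to bound $F(M_i(\xi))$ within $l_p$ of $F(p)$, arriving at $(1-h)/h\le F(p)/l_p\le(1+h)/h$. The paper even remarks in a footnote, as you do, that the same argument covers the other endpoint of the subsegment.
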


\begin{proof}
Let us denote the PSLG line segment under consideration
as $L_i$. In the algorithm, I 
split the corresponding reference segment $T_i$ 
into $n^*$ equal parts.  The length of each part of the
reference segment is 
$h = t^*_{\mathrm{min}}/n^*$. Let $M_i(t)$ be the mapping
function.  Let us assume that $t = t_p$ is a vertex on $T_i$ 
such that it maps to vertex at $x=p$ on $L_i$, i.e., 
$M_i(t_p) = p$.  
The length of one of the segments 
\footnote{Note that the lemma 
also holds for the other segment, but I omit that case since 
the proofs are identical.}
at $p$ is given by
$l_p = M_i(t_p+h) - M_i(t_p)$ because a vertex next to
$t_p$ on $T_i$ is at $t_p+h$.
By the mean value theorem,
$$l_p = M_i(t_p + h) - M_i(t_p) = hM^{'}_i(t_p + h_0),$$
where $0\le h_0\le h$ is some constant.  Since 
$M^{'}_i(t)$ is the local feature size function 
$F(M(t))$ (also denoted as $\mathrm{LFS}(M(t))$),
\begin{equation}\label{eq:LFS_mvt}
l_p = M_i(t_p + h) - M_i(t_p) = hF(M_i(t_p + h_0)).
\end{equation}
Note that $M_i(t_p)$ corresponds to vertex $p$ on the
PSLG segment, and let $M_i(t_p+h)$ correspond to
vertex $q$ on the PSLG segment.  With these vertices,
we can apply the property of Lipschitz functions
in the next step. 
Since the local feature size function $F(\cdot)$ is 
a Lipschitz function (see Section 3.2), 
$$F(M_i(t_p + h_0)) \le F(M_i(t_p)) + |(M_i(t_p+h_0) - M_i(t_p))|$$
and 
$$F(M_i(t_p + h_0)) \ge F(M_i(t_p)) - |(M_i(t_p+h_0) - M_i(t_p))|.$$
Since the mapping function  
is a monotonically increasing function,
$$F(M_i(t_p + h_0)) \le F(M_i(t_p)) + (M_i(t_p+h_0) - M_i(t_p))$$
and 
$$F(M_i(t_p + h_0)) \ge F(M_i(t_p)) - (M_i(t_p+h_0) - M_i(t_p)).$$
Substituting the inequalities above into Eq.~\ref{eq:LFS_mvt},
we get
$$l_p = hF(M_i(t_p + h_0)) \le h(F(M_i(t_p)) + (M_i(t_p+h_0) - M_i(t_p)))$$
and
$$l_p = hF(M_i(t_p + h_0)) \ge h(F(M_i(t_p)) - (M_i(t_p+h_0) - M_i(t_p))).$$
Also since $(M_i(t_p+h_0) - M_i(t_p)) \le l_p$ (because $0\le h_0\le h$),
we get
$$l_p \le h(F(M_i(t_p)) + l_p)$$
and 
$$l_p \ge h(F(M_i(t_p)) - l_p).$$
Substituting $M_i(t_p)$ with $p$ and rearranging,
$$l_p(1-h) \le h F(p) \le l_p (1+h),$$
which implies
$$\frac{(1-h)}{h} \le \frac{F(p)}{l_p} \le \frac{(1+h)}{h}.$$
Thus,
\begin{equation}\label{eq:tmina}
A^* = \frac{(1-h)}{h} = \frac{1 - t^*_{\mathrm{min}}/n^*}{t^*_{\mathrm{min}}/n^*} = \frac{n^* - t^*_{\mathrm{min}}}{t^*_{\mathrm{min}}} = \frac{n^*}{t^*_{\mathrm{min}}} - 1
\end{equation}
and
\begin{equation}\label{eq:tminb}
B^* = \frac{(1+h)}{h} = \frac{1 + t^*_{\mathrm{min}}/n^*}{t^*_{\mathrm{min}}/n^*} = \frac{n^* + t^*_{\mathrm{min}}}{t^*_{\mathrm{min}}} = \frac{n^*}{t^*_{\mathrm{min}}} + 1.
\end{equation}
\end{proof}

The following lemma applies to all PSLG
segments.  In the proof, I will substitute 
$t^*_{\mathrm{min}}$ and $n^*$ (seen in the
lemma above) with their equivalents,
$t^*_i$ and $n$, respectively, for any PSLG
segment. I have explicitly mentioned about 
the substitution here so that there is no 
confusion about the lemmas. 

\begin{lemma}
\label{lemma:allsplit}
If the $i^\mathrm{th}$ PSLG segment with a
reference segment of length $t^*_i$
is split into $n = \floor{n^* \frac{t^*_i}{t^*_{\mathrm{min}}}}$ 
subsegments, the bound on the ratio
of the local feature size and length of the subsegment
at some vertex $p$ is given by 
$A^*\le\frac{\mathrm{LFS}(p)}{l_p}\le B^*$, where
$l_p$ is the length of a subsegment one of whose end points
is $p$, $\mathrm{LFS}(\cdot)$ is the local feature size
function, and 
$$A^* = \frac{n^*}{t^*_{\mathrm{min}}} - \frac{1}{2\log_e{2}}-1 
\mathrm{\ and\ } 
B^* = \frac{n^*}{t^*_{\mathrm{min}}} + 1.$$
\end{lemma}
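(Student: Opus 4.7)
The plan is to mimic the mean-value-theorem argument of Lemma~\ref{lemma:minsplit} verbatim, but with the local step size $h_i = t^*_i/n_i$ in place of $h = t^*_{\mathrm{min}}/n^*$. The Lipschitz property of $F(\cdot)$ and the monotonicity of $M_i$ used there depend only on the fact that $T_i$ is split into $n_i$ equal pieces, so the same chain of inequalities yields
\[
\frac{1-h_i}{h_i} \le \frac{F(p)}{l_p} \le \frac{1+h_i}{h_i}
\]
for any endpoint $p$ of a resulting PSLG subsegment of length $l_p$. This reduces the lemma to producing the stated upper and lower bounds on $1/h_i$.

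Next I would analyze $h_i = t^*_i / n_i$ where $n_i = \lfloor n^* t^*_i / t^*_{\mathrm{min}} \rfloor$. Since $n_i \le n^* t^*_i/t^*_{\mathrm{min}}$, one immediately gets $h_i \ge t^*_{\mathrm{min}}/n^*$, which produces $1/h_i \le n^*/t^*_{\mathrm{min}}$ and hence the claimed $B^* = n^*/t^*_{\mathrm{min}} + 1$. For the other direction, the standard estimate $\lfloor x \rfloor > x - 1$ gives $n_i > n^* t^*_i/t^*_{\mathrm{min}} - 1$, so
\[
\frac{1}{h_i} = \frac{n_i}{t^*_i} > \frac{n^*}{t^*_{\mathrm{min}}} - \frac{1}{t^*_i}.
\]
Invoking Lemma~\ref{lemma:minlength} (which guarantees $t^*_i \ge 2\log_e 2$ for every reference segment, not just the shortest) replaces $1/t^*_i$ by $1/(2\log_e 2)$, yielding $1/h_i > n^*/t^*_{\mathrm{min}} - 1/(2\log_e 2)$. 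Subtracting $1$ then gives the claimed $A^* = n^*/t^*_{\mathrm{min}} - 1/(2\log_e 2) - 1$.

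The main obstacle, and really the only nontrivial step beyond Lemma~\ref{lemma:minsplit}, is exactly this control of the rounding loss: because $n_i$ is obtained by truncating a real number that could be arbitrarily close to an integer, the step size $h_i$ could in principle be as large as $t^*_i/(n^* t^*_i/t^*_{\mathrm{min}} - 1)$, and one needs a uniform lower bound on $t^*_i$ to prevent this quantity from blowing up. Lemma~\ref{lemma:minlength} supplies precisely that bound, and its use here is what produces the extra $-1/(2\log_e 2)$ term distinguishing this lemma from Lemma~\ref{lemma:minsplit}. Everything else (the mean value theorem, the Lipschitz inequality, monotonicity of $M_i$, and the rearrangement into $A^*$ and $B^*$) is a verbatim repetition of the earlier argument, so I would present it by pointing back to Lemma~\ref{lemma:minsplit} rather than re-deriving the inequalities.
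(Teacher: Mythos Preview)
Your proposal is correct and follows essentially the same approach as the paper: both invoke the mean-value-theorem argument of Lemma~\ref{lemma:minsplit} with the generic step size $h_i=t^*_i/n_i$, then bound $n_i/t^*_i$ using $\lfloor x\rfloor = x-\epsilon$ with $0\le\epsilon<1$ together with Lemma~\ref{lemma:minlength} to absorb the rounding loss as the extra $1/(2\log_e 2)$ term. Your explanation of why Lemma~\ref{lemma:minlength} is needed here is, if anything, more explicit than the paper's.
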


\begin{proof}
In the proof of Lemma~\ref{lemma:minsplit}, 
Eq.~\ref{eq:tmina} and Eq.~\ref{eq:tminb} were
obtained without any assumption about the length of the reference segment
being the shortest.  Therefore, the lower and upper bounds 
$A^*$ and $B^*$  for $\frac{\mathrm{LFS}(p)}{l_p}$ are given by
$$A^* = \frac{n}{t^*_i}-1 \mathrm{\ and\ } B = \frac{n}{t^*_i} + 1.$$
Substituting $n = \floor{n^* \frac{t^*_i}{t^*_{\mathrm{min}}}}$
in the above equation, 
$$A^* = \frac{\floor{n^* \frac{t^*_i}{t^*_{\mathrm{min}}}}}{t^*_i}-1 
\mathrm{\ and\ } 
B^* = \frac{\floor{n^* \frac{t^*_i}{t^*_{\mathrm{min}}}}}{t^*_i} + 1.$$
This equation can be rewritten as
$$A^* = \frac{n^* \frac{t^*_i}{t^*_{\mathrm{min}}} - \epsilon}{t^*_i}-1 
\mathrm{\ and\ } 
B^* = \frac{n^* \frac{t^*_i}{t^*_{\mathrm{min}}} - \epsilon}{t^*_i} + 1,$$
where $\epsilon<1$.  This equation is equivalent to
$$A^* = \frac{n^*}{t^*_{\mathrm{min}}} - \frac{\epsilon}{t^*_i}-1 
\mathrm{\ and\ } 
B^* = \frac{n^*}{t^*_{\mathrm{min}}} - \frac{\epsilon}{t^*_i}+1.$$
Since $0\le\epsilon<1$ and $t^*_i \ge 2\log_e{2}$ (by Lemma~\ref{lemma:minlength}),
$$A^* \ge \frac{n^*}{t^*_{\mathrm{min}}} - \frac{1}{2\log_e{2}}-1 
\mathrm{\ and\ } 
B^* \le \frac{n^*}{t^*_{\mathrm{min}}} + 1.$$
Thus,
$$A^*\le\frac{\mathrm{LFS}(p)}{l_p}\le B^*,$$ where
$$A^* = \frac{n^*}{t^*_{\mathrm{min}}} - \frac{1}{2\log_e{2}}-1
\mathrm{\ and\ } 
B^* = \frac{n^*}{t^*_{\mathrm{min}}} + 1.$$
\end{proof}

In the next two lemmas, I will show that it is
possible to split the PSLG segments such that
there is an upper bound $B^*$ on 
$\frac{\mathrm{LFS}(p)}{l_p}$
given a lower bound $A^*$ or an upper
bound on the ratio $R$, where $p$ is
a vertex on the PSLG line segment, and $l_p$ is
the length of a subsegment at $p$. 

\begin{lemma}
\label{lemma:aboundsb}
Given a lower bound $A^*$ on $\frac{\mathrm{LFS}(p)}{l_p}$,
it is possible to split the PSLG line segments such that the
upper bound $B^* \le A^* + 1/\log_e{2} + 2$, where $p$ is
a vertex on the PSLG line segment, and $l_p$ is
the length of a subsegment at $p$. 
\end{lemma}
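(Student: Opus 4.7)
The plan is to invert the formulas from Lemma~\ref{lemma:allsplit} to express $B^*$ in terms of $A^*$, and then pick $n^*$ appropriately to realize the required lower bound.

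First, recall from Lemma~\ref{lemma:allsplit} that after splitting the PSLG segments according to the algorithm with parameter $n^*$ on the shortest reference segment of length $t^*_{\mathrm{min}}$, we obtain
$$A^* = \frac{n^*}{t^*_{\mathrm{min}}} - \frac{1}{2\log_e 2} - 1 \quad\text{and}\quad B^* = \frac{n^*}{t^*_{\mathrm{min}}} + 1.$$
So the first step is: given a target lower bound $A^*$, solve for the required quantity $n^*/t^*_{\mathrm{min}}$. Namely, to guarantee that the lower bound is at least $A^*$, it suffices to pick $n^*$ so that
$$\frac{n^*}{t^*_{\mathrm{min}}} \ge A^* + \frac{1}{2\log_e 2} + 1.$$
Since $n^*$ must be a positive integer, I would choose
$$n^* = \left\lceil t^*_{\mathrm{min}}\left(A^* + \tfrac{1}{2\log_e 2} + 1\right)\right\rceil,$$
which is the smallest legal choice. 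This ensures the lower bound is achieved.

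Next, substitute this choice of $n^*$ into the formula for $B^*$. The ceiling costs at most one unit, so $n^*/t^*_{\mathrm{min}} \le A^* + 1/(2\log_e 2) + 1 + 1/t^*_{\mathrm{min}}$, and consequently
$$B^* \le A^* + \frac{1}{2\log_e 2} + 2 + \frac{1}{t^*_{\mathrm{min}}}.$$
The last step is to absorb the $1/t^*_{\mathrm{min}}$ term. This is where Lemma~\ref{lemma:minlength} enters: since $t^*_{\mathrm{min}} \ge 2\log_e 2$, we have $1/t^*_{\mathrm{min}} \le 1/(2\log_e 2)$, and combining the two yields the claimed bound
$$B^* \le A^* + \frac{1}{\log_e 2} + 2.$$

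The only real obstacle is the ceiling operation: a naive application of the formula in Lemma~\ref{lemma:allsplit} produces $B^* - A^* = 1/(2\log_e 2) + 2$, but since $n^*$ must be an integer we lose a bit in the rounding, which is exactly what widens the gap from $1/(2\log_e 2)$ to $1/\log_e 2$. Handling this cleanly with the lower bound from Lemma~\ref{lemma:minlength} is the small but essential bookkeeping step. Everything else is algebraic manipulation of the two formulas already established.
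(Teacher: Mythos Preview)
Your proposal is correct and follows essentially the same approach as the paper: both invert the formula from Lemma~\ref{lemma:allsplit} to determine the required (non-integer) number of splits, round up to the nearest integer, and then use Lemma~\ref{lemma:minlength} to bound the extra $1/t^*_{\mathrm{min}}$ incurred by the ceiling, yielding the gap $1/\log_e 2 + 2$. The only cosmetic difference is that the paper parameterizes the ceiling as $n+\epsilon$ with $0\le\epsilon<1$, whereas you bound $n^*/t^*_{\mathrm{min}}$ directly.
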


\begin{proof}
In Lemma~\ref{lemma:allsplit}, we showed that if the shortest
reference segment is split into $n$ equal parts, the lower bounds 
on $\frac{\mathrm{LFS}(p)}{l_p}$ on any PSLG segment are
given by
$$A^* = \frac{n}{t^*_{\mathrm{min}}} - \frac{1}{2\log_e{2}}-1,$$ 
which is linear in $n$.
From this equation, given a lower bound $A^*$, it is possible to 
compute the minimum $n$ to obtain the lower bound by 
solving the linear equation.  But this $n$ might not
be an integer.  Therefore, we choose the ceiling of $n$, 
$\ceil{n}$, i.e., we split the shortest reference segment
into $\ceil{n}$ subsegments in order to obtain the lower 
bound $A^*$ on the ratio of $\mathrm{LFS(p)}$ and $l_p$. 
Thus, our bound $A^*$ increases to
$$A^* = \frac{n+\epsilon}{t^*_{\mathrm{min}}} - \frac{1}{2\log_e{2}}-1,$$ 
where $0\le\epsilon<1$. Therefore,
\begin{equation}\label{eq:alow}
A^* \ge \frac{n}{t^*_{\mathrm{min}}} - \frac{1}{2\log_e{2}}-1. 
\end{equation}
Similarly, for the upper bound (from Lemma~\ref{lemma:allsplit}),
$$B^* = \frac{n+\epsilon}{t^*_{\mathrm{min}}}+1 
= \frac{n}{t^*_{\mathrm{min}}} +\frac{\epsilon}{t^*_{\mathrm{min}}}+1.$$ 
Since $t^*_{\mathrm{min}} \ge 2\log_e{2}$ (Lemma~\ref{lemma:minlength})
and $0\le\epsilon<1$, 
\begin{equation}\label{eq:bhigh}
B^* < \frac{n}{t^*_{\mathrm{min}}} + \frac{1}{2\log_e{2}} + 1. 
\end{equation}
If we split the shortest reference segment into $\ceil{n}$
subsegments and $A^*$ is as small as it can be 
(Eq.~\ref{eq:alow}) and $B^*$ is as large as it can 
be (Eq.~\ref{eq:bhigh}), the difference between them 
is, at most, $1/\log_e{2} + 2$.  
Therefore, given a lower bound $A^*$, we can split the PSLG 
line segments such that the upper bound $B^*$ is at most 
$A^* + 1/\log_e{2} + 2$.
\end{proof}

\begin{lemma}
\label{lemma:rboundsb}
Given an upper bound on the ratio $R>1$ of the upper and lower
bound on $\frac{\mathrm{LFS}(p)}{l_p}$, where $p$ 
is a vertex on the PSLG line segment and $l_p$ is
the length of a subsegment at $p$,
it is possible to split the PSLG line segments such that
there is an upper bound $B^*$ that is only a function
of $R$.
\end{lemma}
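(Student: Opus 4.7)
The plan is to derive the result as a direct consequence of Lemma~\ref{lemma:aboundsb}, using the fact that the additive gap between $B^*$ and $A^*$ is a universal constant while the multiplicative ratio $R=B^*/A^*$ tightens as both bounds grow.

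First I would recall from Lemma~\ref{lemma:aboundsb} that, by splitting the shortest reference segment into $\ceil{n}$ parts for $n$ large enough, one can achieve any prescribed lower bound $A^*$ and will simultaneously obtain an upper bound $B^*\le A^*+c$, where $c=1/\log_e{2}+2$. In particular, the additive gap $B^*-A^*$ is universally bounded by $c$ independently of the PSLG or of how finely we split.

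Next I would observe that the ratio satisfies
\begin{equation*}
\frac{B^*}{A^*}\le 1+\frac{B^*-A^*}{A^*}\le 1+\frac{c}{A^*},
\end{equation*}
so the ratio can be driven arbitrarily close to $1$ by choosing $n$ large enough that $A^*$ is large. Concretely, to guarantee $B^*/A^*\le R$ for a target $R>1$, it suffices to choose the splitting so that $A^*\ge c/(R-1)$; Lemma~\ref{lemma:aboundsb} (via Eq.~\ref{eq:alow} and the monotonic dependence on $n^*$) tells us that such an $A^*$ is attainable by taking $n^*$ sufficiently large.

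Finally I would combine the two inequalities: at the threshold $A^*=c/(R-1)$ we obtain
\begin{equation*}
B^*\le A^*+c\le \frac{c}{R-1}+c=\frac{cR}{R-1}=\left(\frac{1}{\log_e 2}+2\right)\frac{R}{R-1},
\end{equation*}
which is the desired upper bound depending only on $R$. The main (and only) subtle point is justifying that increasing $n^*$ simultaneously increases $A^*$ without allowing $B^*-A^*$ to exceed $c$; this is exactly the content of Lemma~\ref{lemma:aboundsb} and requires no further work. Since $R>1$ is given, the denominator $R-1$ is strictly positive and the bound is finite, completing the argument.
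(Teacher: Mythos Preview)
Your proposal is correct and follows essentially the same route as the paper: both arguments reduce the claim to Lemma~\ref{lemma:aboundsb} by exploiting the fact that $B^*-A^*$ is bounded by a universal constant, so forcing $A^*$ large enough drives the ratio below $R$ while keeping $B^*$ controlled. The paper works with the slightly tighter constant $c'=\tfrac{1}{2\log_e 2}+2$ taken directly from Lemma~\ref{lemma:allsplit} and argues qualitatively that $B^*$ is bounded; you use $c=\tfrac{1}{\log_e 2}+2$ from Lemma~\ref{lemma:aboundsb} and obtain the explicit bound $B^*\le cR/(R-1)$, which is a small bonus but not a different idea.
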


\begin{proof}
In Lemma~\ref{lemma:allsplit}, we showed that if the shortest
reference segment is split into $n$ equal parts, the lower bounds 
on $\frac{\mathrm{LFS}(p)}{l_p}$ on any PSLG segment are 
given by
$$A^* = \frac{n}{t^*_{\mathrm{min}}} - \frac{1}{2\log_e{2}}-1,$$ 
and the upper bound is given by
$$B^* = \frac{n}{t^*_{\mathrm{min}}} + 1 = A^* + \frac{1}{2\log_e{2}} + 2.$$ 
The ratio $R = B^*/A^*$ is given by
$$R = 1 + \frac{1}{2A^*\log_e{2}} + \frac{2}{A^*}.$$
Clearly, it is possible to compute $A^*$ (as a function of only 
$R$) and the corresponding $n$ for which $B^*/A^* = R$. 
The computed $A^*$ may not correspond 
to $n$ being an integer, so we take the ceiling $\ceil{n}$. 
This operation can only decrease $R$ because as $A^*$ tends
to infinity, $R$ tends to 1.  As we saw in 
Lemma~\ref{lemma:aboundsb}, the upper bound $B^*$ is bounded 
for a given lower bound $A^*$.  As we have computed 
the required lower bound $A^*$ for a given $R$, it is
possible to split the PSLG segments such that the 
upper bound $B^*$ is also bounded as a function of $R$.
\end{proof}

\subsection{Conditions for No Encroachment}
In Lemmas~\ref{lemma:aboundsb} and~\ref{lemma:rboundsb},
I have shown that the PSLG line
segments can be split in a size-optimal manner if a minimum
$A^*$ or a maximum $R>1$ is given. In the next set of lemmas,
I will derive the condition on $A^*$ and $R$ (as a function
of the minimum desired angle $\theta^*$ or the radius-edge ratio
$\alpha = 1/2\sin{\theta^*}$) such that
there is no encroachment of Steiner vertices upon
PSLG subsegments.  I have already shown that
for any such condition, it is possible to obtain a
size-optimal split. Further, I show that the final mesh
is also size optimal. For now, I assume that there are
no small angles in the PSLG.  In the next subsection,
I will analyze what happens when small angles are present
in the PSLG.

\subsubsection{Truly Delaunay Refinement}
I will first consider mesh refinement that gives us truly 
Delaunay meshes.  In order to obtain
truly Delaunay meshes, one should construct the Delaunay 
triangulation of the vertices inserted by our segment 
splitting algorithm and recover the PSLG, i.e., the Delaunay
triangulation should automatically include all the segments
in the PSLG.  To achieve this, note that if the 
diametral circle of every subsegment is empty,
the PSLG is recovered by the Delaunay triangulation of
the vertices. To see why, consider the midpoint
of any subsegment. Its nearest vertices are the two vertices
forming the subsegment.  Thus, the two vertices are neighbors 
in the Voronoi diagram, which is the dual of the Delaunay 
triangulation.

I will first provide the condition on $A^*$ and
$R$ for which the PSLG is recovered.  I will then provide
the conditions on $A^*$ for which no Steiner vertices
will be inserted in the diametral circle of the PSLG
subsegments, which ensures that our algorithm terminates
with a size-optimal, high-quality mesh.

\begin{lemma}
\label{lemma:beginastar}
If $A^*>1/\sqrt{2}$, the diametral circle of any
PSLG subsegment (after the split) does not
contain vertices from nonadjacent segments.
\end{lemma}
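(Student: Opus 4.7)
The plan is to derive a contradiction from the geometric obstruction created by the diametral circle combined with the size-optimality bound from Lemma~\ref{lemma:allsplit}. Let $pq$ be an arbitrary PSLG subsegment of length $l$, and suppose for contradiction that some vertex $v$ lying on a segment nonadjacent to $pq$ sits in the (closed) diametral disk of $pq$. Since $v$ lies in the diametral disk, the angle $\angle pvq \ge \pi/2$, and applying the law of cosines (or equivalently Thales' theorem and its converse) to triangle $pqv$ yields
\[
|pv|^{2} + |qv|^{2} \;\le\; |pq|^{2} \;=\; l^{2}.
\]

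Next, I would invoke the definition of local feature size at the endpoints. Since $p$ lies on a PSLG segment (say $L$) and $v$ lies on a segment nonadjacent to $L$, the feature $v$ is a witness nonadjacent feature for $p$, so $\mathrm{LFS}(p) \le |pv|$; symmetrically $\mathrm{LFS}(q) \le |qv|$. But Lemma~\ref{lemma:allsplit} guarantees that for both endpoints of the subsegment, $\mathrm{LFS}(p) \ge A^{*} l$ and $\mathrm{LFS}(q) \ge A^{*} l$ (both using the subsegment $pq$ itself as the relevant $l_{p}$ and $l_{q}$). Combining these,
\[
|pv|^{2} + |qv|^{2} \;\ge\; 2 (A^{*})^{2} l^{2}.
\]
Under the hypothesis $A^{*} > 1/\sqrt{2}$, the right-hand side strictly exceeds $l^{2}$, contradicting the diametral-disk inequality above.

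The only real care needed is in matching the definitions: one must verify that ``nonadjacent'' for the subsegment $pq$ correctly lets us treat $v$ as a nonadjacent feature for both $p$ and $q$ in the LFS sense defined in Section~3, and one must confirm that the splitting bound from Lemma~\ref{lemma:allsplit} applies simultaneously at $p$ and $q$ using the same subsegment length $l$. Both are straightforward: a segment nonadjacent to $pq$ shares no endpoint with the enclosing PSLG segment $L$ nor with its splits, and the subsegment $pq$ is adjacent to each of its endpoints, so the bound $\mathrm{LFS}(\cdot) \ge A^{*} l$ is legal at both $p$ and $q$. I expect this bookkeeping about adjacency to be the main subtlety; the geometric/algebraic content reduces to one application of the Pythagorean-style inequality for the diametral disk together with the $A^{*} > 1/\sqrt{2}$ threshold.
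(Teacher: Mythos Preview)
Your proof is correct and follows essentially the same route as the paper. The paper's argument is terser: it observes that the farthest point in the diametral disk from the nearer of the two endpoints is at distance $l/\sqrt{2}$, and that $\mathrm{LFS}$ at each endpoint exceeds $l/\sqrt{2}$; your use of $|pv|^{2}+|qv|^{2}\le l^{2}$ together with both LFS bounds is an equivalent (and slightly more explicit) formulation of the same geometric fact.
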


\begin{proof}
If a subsegment $ab$ is of length $l$ and $A^*>1/\sqrt{2}$,
the feature size at $a$ and $b$ is at least $l/\sqrt{2}$.
Thus, the nearest nonadjacent feature is at least
$l/\sqrt{2}$ away. The farthest point from $a$ or $b$
inside the diametral circle of $ab$ is at a distance
$l/\sqrt{2}$ away. Thus, the diametral circle does
not contain any vertices from nonadjacent features.
\end{proof}

For now, I will assume that a small angle in the PSLG is
any angle less than $\pi/2$. In the appendix, I will show
why it is possible to lower the threshold for a small 
angle to $\arccos{(1/2R)}$.
This small change has limited implications
in the proofs in the rest of the paper.

\begin{lemma}
If $A^*\ge 1/\sqrt{2}$ and the minimum angle 
$\phi \ge \pi/2$, 
the PSLG is recovered by the Delaunay triangulation of
the vertices on subsegments.
\end{lemma}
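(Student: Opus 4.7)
The plan is to reduce PSLG recovery to the emptiness of the open diametral disk of every subsegment. If every subsegment $ab$ has an empty open diametral disk with respect to the mesh vertex set, then the midpoint of $ab$ has $a$ and $b$ as its two nearest sites, so $a$ and $b$ are Voronoi neighbors and $ab$ is automatically a Delaunay edge, which recovers the PSLG. Thus it suffices to show that for every subsegment $ab$ of a PSLG segment $L$, no other subsegment vertex $c$ lies strictly inside the diametral disk of $ab$.

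I would then split into three cases based on where $c$ lives. First, if $c$ lies on $L$ itself, then $c$ is another subsegment endpoint on $L$, so it is collinear with $a$ and $b$ and lies outside the open interval $(a, b)$; hence it is not strictly inside the diametral disk. Second, if $c$ lies on a PSLG segment nonadjacent to $L$, then Lemma~\ref{lemma:beginastar} handles the case directly, the boundary $A^* = 1/\sqrt{2}$ merely allowing $c$ to sit on the diametral circle rather than strictly inside. Third, if $c$ lies on a PSLG segment $L'$ adjacent to $L$ at a shared endpoint $v$, I would place coordinates so that $v$ is the origin and $L$ runs along the positive $x$-axis; then $a = (a_x, 0)$ and $b = (b_x, 0)$ with $0 \le a_x < b_x$, and $c = (r\cos\phi, r\sin\phi)$ for some $r \ge 0$, where $\phi \ge \pi/2$ is the input angle between $L$ and $L'$ at $v$. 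For $c$ to lie strictly inside the diametral disk of $ab$ it would have to satisfy $a_x < r\cos\phi < b_x$; but $\cos\phi \le 0$ and $r \ge 0$ force $r\cos\phi \le 0 \le a_x$, a contradiction.

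I expect the adjacent-segment case to be the main work, although once the coordinate setup is in place the inequality $r\cos\phi \le 0 \le a_x$ makes it essentially immediate. The geometric crux is that when the input angle is at least a right angle, the orthogonal projection of any point on $L'$ onto the line carrying $L$ falls at $v$ or on the opposite side of $v$, never within the range of a subsegment of $L$ on the positive side. Note that the hypothesis $A^* \ge 1/\sqrt{2}$ is used only in the nonadjacent-segment case, while $\phi \ge \pi/2$ is used only in the adjacent-segment case, so the two hypotheses play genuinely independent roles in closing the three-case analysis.
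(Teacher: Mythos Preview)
Your proposal is correct and follows essentially the same approach as the paper: reduce PSLG recovery to emptiness of diametral disks (the paper gives the same midpoint/Voronoi argument in the paragraph preceding the lemma), invoke Lemma~\ref{lemma:beginastar} for vertices on nonadjacent features, and use $\phi\ge\pi/2$ for adjacent segments. The paper's own proof is a two-line sketch that omits the collinear case and the coordinate computation for adjacent segments; your three-case breakdown and the observation that $r\cos\phi\le 0\le a_x$ simply make explicit what the paper leaves to the reader.
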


\begin{proof}
Since vertices on nonadjacent segments are not inside
the diametral circle of the PSLG subsegments, and 
since $\phi \ge \pi/2$, the diametral circles of
the PSLG subsegment are empty.  Thus,
the PSLG is recovered by the Delaunay triangulation.
\end{proof}

Before I proceed, I will define what I mean by 
the $k^{\mathrm{th}}$ layer (layer of order $k$)
of the advancing 
front of Steiner vertices. Let the shortest subsegment
of the split PSLG be of length $l_0$. The vertices 
on the shortest subsegment of the split PSLG are 
considered to be a part of 
the $0^{\mathrm{th}}$ layer of
the vertices.  For other vertices on
the PSLG, consider the shortest PSLG subsegment
adjacent to the vertex.  Let the length of the
subsegment be $l$. If $\alpha^{k-1} l_0 < l \le  
\alpha^{k} l_0$, the vertex is considered to a part
of the $k^\mathrm{th}$ layer. 
Note that if there are small angles (see Fig.~\ref{fig:smallangleproblem})
in the PSLG, the shortest edge adjacent to a vertex 
on a PSLG subsegment may be a very short edge
connecting it to a vertex on an adjacent PSLG segment. 
Assigning an order to such vertices can be confusing,
which is why we assume no small angles are present in 
this subsection. In the next section, I will
elaborate on how to assign their order.

When we insert a Steiner vertex (in the interior, 
not on a PSLG subsegment) $a$ into the mesh, we place it at a
distance of $l_a$ from the nearest 
vertex in the mesh (at the time of insertion). If 
$\alpha^{k-1} l_0 < l_a \le \alpha^{k} l_0$,
I consider $a$ to be part of the $k^{\mathrm{th}}$
layer of vertices. 

I will define the parent of a Steiner vertex $a$ 
as one of the vertices of the shortest side
of the skinny triangle $t$. The vertex $a$ is
either the off-center vertex of $t$ or its circumcenter. 
Between the two possible
vertices of $t$, pick the vertex that is on the
lower-order layer. 
If we move from the Steiner vertex to its parent, 
then to its grandparent, 
and so on, we will reach a vertex on a subsegment of
the PSLG. Let us call this vertex the ancestral
vertex of the Steiner vertex. Si~\cite{S09}
analyzed Shewchuk's 3D algorithm~\cite{S97} through
a similar sequence of vertices.  
As we move from a Steiner
vertex to its ancestral vertex, I will show 
below that the order of vertices monotonically
reduces.  Note that we may skip layers as
we move from a vertex to one of its children. 

\begin{lemma}
\label{lemma:parentlayer}
The order of a vertex is greater than that of
its parent. 
\end{lemma}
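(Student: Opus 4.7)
The plan is to reduce the layer comparison to an inequality on the insertion radius $l_a$ of the Steiner vertex $a$, and then bound $l_a$ from below through the shortest side of the skinny triangle being refined. By the layer definition, $k_a > k_p$ is equivalent to $l_a > \alpha^{k_p} l_0$, and since $\alpha^{k_p - 1} l_0 < l_p \le \alpha^{k_p} l_0$, the task becomes showing $l_a > \alpha \cdot \alpha^{k_p - 1} l_0$. The off-center/circumcenter insertion rule recalled in Section 3.4 guarantees $l_a \ge \alpha\, l_{\mathrm{short}}$, where $l_{\mathrm{short}} = |pq|$ is the length of the shortest side $pq$ of the skinny triangle $t$ and $p$ is the parent (the endpoint of $pq$ with smaller layer index). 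So the entire argument reduces to proving $l_{\mathrm{short}} > \alpha^{k_p - 1} l_0$.

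I would establish this by a case split on the types of $p$ and $q$. First, if at least one of them is a Steiner vertex, let $r \in \{p,q\}$ be the one inserted later (all PSLG vertices precede any Steiner vertex, so this is well defined). Because the other endpoint was already present when $r$ was inserted, $l_r \le |pq| = l_{\mathrm{short}}$; and since by definition the parent $p$ is the endpoint with lower layer, $k_p \le k_r$, giving $\alpha^{k_p - 1} l_0 \le \alpha^{k_r - 1} l_0 < l_r \le l_{\mathrm{short}}$. Second, if both $p$ and $q$ are PSLG vertices, I would subdivide: (a) if $pq$ is itself a PSLG subsegment, then $l_{\mathrm{short}} \ge l_p$ directly from the definition of $l_p$ as the shortest adjacent subsegment; (b) if $p$ and $q$ lie on two adjacent PSLG line segments meeting at a common vertex $v$, the no-small-angle assumption of this subsection forces the interior angle at $v$ to be at least $\pi/2$, so the law of cosines yields $|pq|^2 \ge |pv|^2 + |qv|^2$, hence $|pq| \ge |pv| \ge l_p$ (since $|pv|$ contains the subsegment adjacent to $p$ toward $v$); (c) if $p$ and $q$ lie on nonadjacent PSLG features, then $|pq| \ge \mathrm{LFS}(p) \ge A^* l_p \ge l_p$, using a refinement level that delivers $A^* \ge 1$, which Lemmas \ref{lemma:aboundsb} and \ref{lemma:rboundsb} guarantee is achievable. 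In every case $l_{\mathrm{short}} \ge l_p > \alpha^{k_p - 1} l_0$, and combining this with $l_a \ge \alpha\, l_{\mathrm{short}}$ gives $l_a > \alpha^{k_p} l_0$, i.e., $k_a \ge k_p + 1$.

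The main obstacle is the PSLG--PSLG branch: bounding $|pq|$ from below by $l_p$ is where geometric content enters, and it uses both the exclusion of small angles (subcase (b)) and a strong enough splitting to force $A^* \ge 1$ (subcase (c)). The Steiner branch, by contrast, is essentially bookkeeping with the insertion-radius property from Section 3.4 and the definition of parent as the lower-layer endpoint. Once those two geometric inputs are in place, the telescoping factor of $\alpha$ between $l_a$ and $l_{\mathrm{short}}$ supplied by the off-center rule closes the argument.
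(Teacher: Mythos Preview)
Your argument is correct and follows the same line as the paper's: bound the parent's layer above via the shortest-edge length $l$ (using that the later-inserted endpoint of $pq$ was placed at distance $\le l$), and bound the child's layer below via the insertion radius $l_a \ge \alpha l$ coming from the off-center rule with shortest-edge prioritization. The paper's proof is terser---it treats only the ``later-inserted vertex'' case explicitly and defers the all-PSLG-endpoint situation to the remark following Lemma~\ref{lemma:layerk} about small angles---whereas you spell out the PSLG--PSLG subcases (a)--(c); your added hypothesis $A^*\ge 1$ in subcase~(c) is not stated in the paper but is amply covered by the stronger conditions (e.g.\ Lemma~\ref{lemma:diacondition}) required for the algorithm anyway.
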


\begin{proof}
If the length of the shortest segment $s$
of a skinny triangle $t$ is $l$, one (or
both) of the vertices on the shortest segment
is at most on the layer $\ceil{\log_{\alpha}{(l/l_0)}}$. 
Why? Because the vertex of $s$ that was inserted
later was inserted at a distance of at most $l$
from other vertices in the mesh. 
The Steiner vertex $a$ that is inserted to 
replace $t$ is at least at a distance of 
$\alpha l$ from all other vertices in the mesh 
(because we prioritize skinny triangles with 
shortest edges first).  
Thus, $a$ belongs to a layer whose order is at 
least $\ceil{\log_{\alpha}{(\alpha l/l_0)}} = 1 +
\ceil{\log_{\alpha} (l/l_0)}$, which 
is greater than the order of its parent.
\end{proof}

\begin{lemma}
\label{lemma:layerk}
If a Steiner vertex $b$ belongs to layer
$k_b$ and its ancestral vertex $a$ on the PSLG
belongs to layer $k_a$, the upper bound on the
local feature size at $b$ is given by
$l_0\alpha^{k_a} (B^* + \alpha + \alpha^2 + ... + \alpha^{k_a-k_b}),$
where $l_0$ is the length of the shortest subsegment in
the split PSLG.  
\end{lemma}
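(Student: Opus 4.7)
The plan is to walk the parental chain from $b$ back to its ancestral vertex $a$ and combine two bounds via the Lipschitz continuity of $\mathrm{LFS}$: a base bound at $a$ coming from the size-optimal PSLG split, plus a telescoping sum of edge lengths that decays geometrically in the layer index.

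Let $b = v_0, v_1, \ldots, v_m = a$ denote the chain in which $v_{i+1}$ is the parent of $v_i$, and let $k_i$ be the layer of $v_i$, so $k_0 = k_b$ and $k_m = k_a$. By Lemma~\ref{lemma:parentlayer} the sequence $k_0, k_1, \ldots, k_m$ is strictly decreasing, hence the intermediate layers $k_0, \ldots, k_{m-1}$ are distinct integers lying in $\{k_a + 1, \ldots, k_b\}$. The Lipschitz property from Section~3.2, applied telescopically along the chain, gives
\[
\mathrm{LFS}(b) \;\le\; \mathrm{LFS}(a) \;+\; \sum_{i=0}^{m-1}|v_i v_{i+1}|.
\]
Since $a$ lies on a PSLG subsegment at layer $k_a$, an adjacent subsegment has length at most $\alpha^{k_a} l_0$, and the size-optimal split (Lemmas~\ref{lemma:minsplit} and~\ref{lemma:allsplit}) yields $\mathrm{LFS}(a) \le B^*\alpha^{k_a} l_0$, which supplies the $B^*$ summand after factoring out $l_0\alpha^{k_a}$.

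For each chain step, $v_i$ was the off-center or circumcenter Steiner vertex of a skinny triangle whose shortest edge had some length $l'$, with $v_{i+1}$ one endpoint of that edge. Section~3.4 gives $|v_i v_{i+1}| \le \beta l'$, while the prioritized-shortest-edge ordering together with the Delaunay property forces the distance $l_i$ from $v_i$ to its nearest existing mesh vertex to satisfy $l_i \ge \alpha l'$; the layer definition gives $l_i \le \alpha^{k_i} l_0$. Chaining these yields $|v_i v_{i+1}| \le (\beta/\alpha)\alpha^{k_i}l_0$, and summing over the distinct intermediate layers in $\{k_a+1, \ldots, k_b\}$ produces a geometric series dominated by $\alpha^{k_a+1} + \alpha^{k_a+2} + \cdots + \alpha^{k_b}$. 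Pulling out $l_0\alpha^{k_a}$ from both the base bound and this sum gives a total of the form $l_0\alpha^{k_a}\bigl(B^* + \alpha + \alpha^2 + \cdots + \alpha^{k_b - k_a}\bigr)$, which is the claimed bound.

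The main obstacle is bounding the single-step distance $|v_i v_{i+1}|$ cleanly enough that the constants in each summand collapse into pure powers of $\alpha$. Because the parent is selected by the lower-layer rule rather than by proximity, it need not be the nearest existing vertex of $v_i$, so one cannot directly identify $|v_i v_{i+1}|$ with $l_i$; the two inequalities $\alpha l' \le l_i$ and $|v_i v_{i+1}| \le \beta l'$ must be juggled so that the ratio $\beta/\alpha$ is absorbed into the telescoping series. A secondary delicacy is that the chain may skip layers, so one must argue that the worst case — all intermediate layers present — still obeys the geometric-series bound. Once these steps are done carefully, the two contributions add and factor to produce the stated inequality.
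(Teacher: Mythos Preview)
Your overall strategy---telescope the Lipschitz inequality along the parental chain, bound $\mathrm{LFS}(a)$ via the size-optimal split, and bound the step lengths by the layer indices---is exactly what the paper does. The paper's proof, however, is shorter because it simply asserts $|a\,a_1|\le \alpha\,l_a$, $|a_1a_2|\le \alpha^2 l_a$, etc., directly from the layer membership of each $a_i$; in effect it identifies the parent-distance with the layer-defining nearest-vertex-distance without comment.

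You are more careful on precisely this point, and that care exposes a genuine issue: from $l_i\ge \alpha l'$ and $|v_iv_{i+1}|\le \beta l'$ one can only conclude $|v_iv_{i+1}|\le(\beta/\alpha)\,l_i\le(\beta/\alpha)\,\alpha^{k_i}l_0$, and the ratio $\beta/\alpha = 2\cos(\theta^*/2)\in(1,2)$ does not disappear. Your statement that this factor ``is absorbed into the telescoping series'' is not justified, and I do not see how it can be: the parent is a short-edge endpoint at distance up to $\beta l'$, while the layer of $v_i$ is set by the nearest vertex at distance possibly as small as $\alpha l'$, so the discrepancy is real for off-center insertions. Either you follow the paper and take the step bound $|a_{i-1}a_i|\le\alpha^{i}l_a$ at face value (matching its proof), or you keep the factor and obtain
\[
\mathrm{LFS}(b)\;\le\; l_0\,\alpha^{k_a}\Bigl(B^* + \tfrac{\beta}{\alpha}\bigl(\alpha+\alpha^2+\cdots+\alpha^{k_b-k_a}\bigr)\Bigr),
\]
which is not the stated bound. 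The extra constant $\beta/\alpha<2$ is harmless for the downstream uses (Lemma~\ref{lemma:diacondition} and Theorem~\ref{theorem:truesizeoptimal} already absorb constants into conditions on $A^*$), so the program survives; but as written your proposal promises to recover the exact inequality in the lemma and does not deliver the step that would do so.
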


\begin{proof}
Consider the path from $a$ to $b$ such that
every vertex is preceded by its parent. The total 
length of this path is maximized when the path
contains as many vertices as possible and when
each edge (from one vertex to the next) has the
maximum possible length. This maximization happens 
when Steiner
vertices do not skip a layer (which translates to
having as many vertices as possible in the path). 
Let the path be
$a$, $a_1$, $a_2$,...,$a_n$, $b$. Let the length
of the shortest PSLG subsegment at $a$ be $l_a$. 
Note that $A^* l_a \le f_a \le B^* l_a$. Since the layer
order increases by at least $1$, $a_1$ belongs
to layer $k_a + 1$, $a_2$ belongs to layer
$k_a + 2$, and so on. Also, $|aa_1| \le \alpha l_a$ 
because it belongs to layer $k_a+1$, $a_1a_2 \le
\alpha^2 l_a$, and so on. Thus, the maximum length
of the path is
$l_a(\alpha + \alpha^2 + ... \alpha^{k_b-k_a})$. 
The local feature size at $b$ is bounded by
$$f_b \le f_a + l_a(\alpha + \alpha^2 + ... \alpha^{k_b-k_a}),$$
where $f_a$ is the feature size at $a$ and $f_b$ is
the feature size at $b$. 
Since $f_a \le B^* l_a$,
$$f_b \le B^* l_a + l_a(\alpha + \alpha^2 + ... \alpha^{k_b-k_a}).$$
Since $l_a \le \alpha^{k_a} l_0$ (it belongs to layer $k_a$), 
where $l_0$ is the length
of the shortest side of the PSLG, 
$$f_b \le l_0\alpha^{k_a} (B^* + \alpha + \alpha^2 + ... \alpha^{k_b-k_a}).$$
\end{proof}

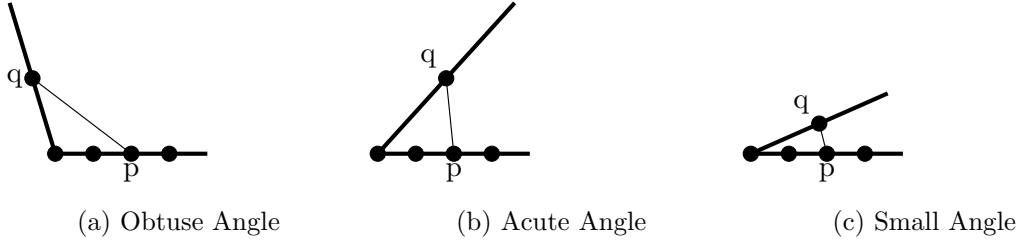
\begin{figure}
\centering
\begin{subfigure}[b]{0.3\textwidth}
\begin{tikzpicture}[scale=2]
    \draw [line width=0.55mm] (0,0) -- (1,0);
    \draw [line width=0.55mm] (0,0) -- (-0.3,1);
    \draw [line width=0.15mm] (0.5,0) -- (-0.15,0.5);
    \draw [fill=black] (0,0) circle (0.05);
    \draw [fill=black] (0.5,0) circle (0.05);
    \draw [fill=black] (0.25,0) circle (0.05);
    \draw [fill=black] (0.75,0) circle (0.05);
    \draw [fill=black] (-0.15,0.5) circle (0.05);
    \node [below] at (0.5,0) {p};
    \node [left] at (-0.15,0.5) {q};
\end{tikzpicture}
\caption{Obtuse Angle}
\end{subfigure}
\centering
\begin{subfigure}[b]{0.3\textwidth}
\begin{tikzpicture}[scale=2]
    \draw [line width=0.55mm] (0,0) -- (1,0);
    \draw [line width=0.55mm] (0,0) -- (0.9,1);
    \draw [line width=0.15mm] (0.5,0) -- (0.45,0.5);
    \draw [fill=black] (0,0) circle (0.05);
    \draw [fill=black] (0.5,0) circle (0.05);
    \draw [fill=black] (0.25,0) circle (0.05);
    \draw [fill=black] (0.75,0) circle (0.05);
    \draw [fill=black] (0.45,0.5) circle (0.05);
    \node [below] at (0.5,0) {p};
    \node [above left] at (0.45,0.5) {q};
\end{tikzpicture}
\caption{Acute Angle}
\end{subfigure}
\centering
\begin{subfigure}[b]{0.3\textwidth}
\begin{tikzpicture}[scale=2]
    \draw [line width=0.55mm] (0,0) -- (1,0);
    \draw [line width=0.55mm] (0,0) -- (0.9,0.4);
    \draw [line width=0.15mm] (0.5,0) -- (0.45,0.2);
    \draw [fill=black] (0,0) circle (0.05);
    \draw [fill=black] (0.5,0) circle (0.05);
    \draw [fill=black] (0.25,0) circle (0.05);
    \draw [fill=black] (0.75,0) circle (0.05);
    \draw [fill=black] (0.45,0.2) circle (0.05);
    \node [below] at (0.5,0) {p};
    \node [above left] at (0.45,0.2) {q};
\end{tikzpicture}
\caption{Small Angle}
\end{subfigure}
\centering
\caption{
Images depicting the problem with small angles
in the PSLG. The thick lines are part of the PSLG. 
The black dots are the vertices added to the 
PSLG (not all are shown).
(a) When the angle is obtuse, the line segment
$pq$ is longer than subsegments at $p$ and $q$. (b) When the
angle is acute, but greater than 60 degrees, $pq$ might be
longer than subsegments at $p$ and $q$, but it is not
guaranteed unless the segments are adequately refined  
(see the appendix for a detailed explanation). (c) When the angle
is very small, there is a good chance that $pq$ might be
shorter than the threshold for size optimality.  
}
\label{fig:smallangleproblem}
\end{figure}

Note that Lemmas~\ref{lemma:parentlayer} 
and~\ref{lemma:layerk} above do not hold when small
angles are present in the input because there
may be arbitrarily short edges that join a vertex $p$
from one segment to a vertex $q$ in its adjacent 
segment (see Fig.~\ref{fig:smallangleproblem}). 
If that edge, however, is longer than $\mathrm{LFS}(p)/B^*$
and $\mathrm{LFS}(q)/B^*$, the lemma still holds because
a child of $p$ or $q$ may simply skip a few layers if
the child is inserted due to a skinny triangle
with the shortest edge $pq$. 

In the next lemma, I derive the conditions 
that ensure that no Steiner 
vertices are added inside the diametral circles of
PSLG subsegments. These conditions ensure that
no PSLG subsegments are ever encroached upon by 
a Steiner vertex.

\begin{lemma}
\label{lemma:diacondition}
Let $\theta^*$ be the desired minimum angle in a mesh
and $\alpha = 1/(2\sin{\theta^*})$ be the desired maximum
radius-edge ratio.  If
$$\frac{B^*}{A^*} + \frac{\alpha}{A^*(\alpha-1)} + \frac{2}{A^*} \le \sqrt{2},$$
no Steiner vertices will be placed in  the diametral 
circles of any PSLG subsegments. 
\end{lemma}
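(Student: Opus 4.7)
The plan is to argue by contradiction: assume some Steiner vertex $b$ lies inside the diametral circle of a PSLG subsegment $pq$ of length $l$, and show that this forces the hypothesis of the lemma to fail. The strategy pits a Lipschitz lower bound on $\mathrm{LFS}(b)$, obtained from the size-optimal splitting guarantee at $p$, against the upper bound supplied by Lemma~\ref{lemma:layerk} via the parent chain from $b$ back to an ancestral PSLG vertex $a$.

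First I would record the geometric consequence of $b$ lying inside the disk with diameter $pq$. By the parallelogram identity applied at the midpoint of $pq$, any interior point satisfies $|bp|^2 + |bq|^2 < |pq|^2 = l^2$, so at least one of the two distances, WLOG $|bp|$, is strictly less than $l/\sqrt{2}$. Because $pq$ is a subsegment at $p$ of length $l$, the splitting guarantee gives $\mathrm{LFS}(p) \ge A^* l$, and Lipschitz continuity of $\mathrm{LFS}$ yields the lower bound $\mathrm{LFS}(b) \ge \mathrm{LFS}(p) - |bp| > A^* l - l/\sqrt{2}$.

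Next I would derive an upper bound on $\mathrm{LFS}(b)$ from Lemma~\ref{lemma:layerk}. Since $p$ is already in the mesh when $b$ is inserted, the nearest-neighbor distance at insertion satisfies $d_b \le |bp| < l/\sqrt{2}$, so by the definition of $b$'s layer $k_b$ one has $l_0\alpha^{k_b - 1} < l/\sqrt{2}$. Lemma~\ref{lemma:parentlayer} forces $k_a \le k_b - 1$ for the ancestral vertex $a$, and hence $l_0 \alpha^{k_a} \le l_0 \alpha^{k_b - 1} < l/\sqrt{2}$. Substituting these into the bound of Lemma~\ref{lemma:layerk} and splitting the geometric tail $\alpha + \alpha^2 + \cdots + \alpha^{k_b - k_a}$ into its extremal term together with the remainder (which forms a geometric series in $1/\alpha$ summing to at most $\alpha/(\alpha-1)$ in units of $l_0\alpha^{k_b-1}$), the expression should collapse to the form $\mathrm{LFS}(b) \le (l/\sqrt{2})\bigl(B^* + \alpha/(\alpha-1) + 1\bigr)$.

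Combining the two bounds, no Steiner vertex can lie inside the diametral circle provided $A^* l - l/\sqrt{2} > (l/\sqrt{2})\bigl(B^* + \alpha/(\alpha-1) + 1\bigr)$, which on rearrangement is exactly the lemma's hypothesis $B^*/A^* + \alpha/(A^*(\alpha-1)) + 2/A^* \le \sqrt{2}$, giving the contradiction. The main obstacle is in this upper-bound step: the geometric tail in Lemma~\ref{lemma:layerk} depends on the a priori unknown chain depth $k_b - k_a$, and a naive bound that applies $l_0\alpha^{k_b} < \alpha l/\sqrt{2}$ uniformly to every term introduces a spurious $\alpha^2/(\alpha-1)$ factor. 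Obtaining the stated $\alpha/(\alpha-1) + 1$ requires anchoring at $l_0\alpha^{k_b-1} < l/\sqrt{2}$, treating the single terminal term $l_0\alpha^{k_b}$ separately, and exploiting the geometric decay with ratio $1/\alpha$ for the rest of the tail.
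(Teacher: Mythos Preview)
Your overall strategy is essentially the paper's: the paper also combines the size-optimality bound $\mathrm{LFS}(p)\ge A^*l$, Lipschitz continuity, and the parent-chain bound from Lemma~\ref{lemma:layerk}. The only cosmetic difference is that the paper compares the two estimates at the PSLG endpoint $p$ (bounding the subsegment length by $f_p/A^*$ and requiring this to be $\le \sqrt2\,|ap|$), whereas you compare them at the Steiner vertex $b$.

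However, your derivation of the upper bound on $\mathrm{LFS}(b)$ has a real gap. With the anchor $l_0\alpha^{k_b-1}<l/\sqrt2$ and the \emph{unmodified} Lemma~\ref{lemma:layerk}, the terminal term of the tail is $l_0\alpha^{k_b}$, and all you can say about it is $l_0\alpha^{k_b}<\alpha\cdot(l/\sqrt2)$. Your proposed ``split the tail into extremal term plus remainder'' therefore produces
\[
\mathrm{LFS}(b)\;\le\;\frac{l}{\sqrt2}\Bigl(B^*+\frac{\alpha}{\alpha-1}+\alpha\Bigr),
\]
not the claimed $B^*+\alpha/(\alpha-1)+1$. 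Carrying this through yields the hypothesis $B^*/A^*+\alpha/(A^*(\alpha-1))+(\alpha+1)/A^*\le\sqrt2$, which is strictly stronger than the lemma's. So the argument as written does not establish the lemma.

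The fix is precisely what the paper does: do not use the coarse layer bound $l_0\alpha^{k_b}$ for the last step of the chain. The paper writes the actual insertion distance as $d_b=\gamma_1\alpha^{k_b-1}l_0$ with $1<\gamma_1\le\alpha$ and the distance to $p$ as $|bp|=\gamma_1\gamma_2\alpha^{k_b-1}l_0$ with $\gamma_2\ge1$, then uses the \emph{refined} chain bound
\[
f_b\;\le\; l_0\bigl(B^*+\alpha+\cdots+\alpha^{k_b-1}\bigr)+d_b.
\]
Normalizing by $|bp|$ (equivalently, using $l_0\alpha^{k_b-1}<d_b\le|bp|$ on the first bracket and $d_b\le|bp|$ on the terminal term) gives exactly $f_b\le |bp|\bigl(B^*+\alpha/(\alpha-1)+1\bigr)<(l/\sqrt2)\bigl(B^*+\alpha/(\alpha-1)+1\bigr)$, and your contradiction then goes through. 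In short, the ``$+1$'' you need comes from $d_b/|bp|\le1$, not from any manipulation of the layer index $k_b$.
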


\begin{proof}
If a Steiner vertex is to be placed inside the diametral
circle of a PSLG subsegment of length $l$, it should be at
a distance of less than $l/\sqrt{2}$ from one of the vertices
of the subsegment. This proof will derive the condition for which
it is impossible to place a Steiner vertex at such a short distance.

Let us assume that a Steiner vertex $a$ (of layer $k+1$, $k\ge 0$)
is placed ``close'' to a PSLG subsegment of length $l$.  
Since $a$ is placed at a distance of at least $\alpha^{k} l_0$
from all other vertices, let us assume that it is placed
at a distance $\gamma_1 \alpha^{k} l_0$ from the nearest
vertex, where $1<\gamma_1\le\alpha$ (the vertex $a$ is
on layer $k+1$). Let us also assume that it is at a distance
$\gamma_1\gamma_2\alpha^{k} l_0$ from vertex $p$ of the 
PSLG subsegment, where $\gamma_2 \ge 1$. 
The bound on the feature size at $a$ is given by
(by modifying Lemma~\ref{lemma:layerk} slightly and setting 
$k_a=0$ in the lemma)
$$f_a \le l_0 (B^* + \alpha + \alpha^2 + ... + \alpha^{k} + \gamma_1\alpha^{k}).$$
The bound on the LFS at vertex $p$ is given by
$$f_p \le l_0 (B^* + \alpha + \alpha^2 + ... + \alpha^{k} + \gamma_1\alpha^{k} + \gamma_1\gamma_2\alpha^{k}).$$
The length of the longest subsegment on the PSLG adjacent to $p$ is
bounded from above by $f_p/A^*$.  If this length is less than
$\sqrt{2}\gamma_1\gamma_2\alpha^{k} l_0$, then $a$
is not in the diametral circle of the PSLG subsegment. 
Thus, we want
$$\frac{l_0 (B^* + \alpha + \alpha^2 + ... + \alpha^{k} + \gamma_1\alpha^{k} + \gamma_1\gamma_2\alpha^{k})}{A^*} \le \sqrt{2}\gamma_1\gamma_2\alpha^{k} l_0.$$
After canceling $l_0$ on both sides, the inequality can be rewritten as
\begin{equation}
\label{eqn:sqrt2}
\frac{1}{\gamma_1\gamma_2}\frac{1}{\alpha^{k}} \frac{(B^* + \alpha + \alpha^2 + ... + \alpha^{k})}{A^*} + 
\frac{\gamma_1\alpha^k}{\gamma_1\gamma_2\alpha^k A^*} + \frac{\gamma_1\gamma_2\alpha^k}{\gamma_1\gamma_2\alpha^k A^*} \le \sqrt{2}.
\end{equation}
The inequality simplifies to
\begin{equation}
\frac{1}{\gamma_1\gamma_2}\frac{1}{\alpha^{k}} \frac{(B^* + \alpha + \alpha^2 + ... + \alpha^{k})}{A^*} + 
\frac{1}{\gamma_2 A^*} + \frac{1}{A^*} \le \sqrt{2}.
\end{equation}
The LHS of the above equation is maximized when $\gamma_1 = \gamma_2 =1$.
Therefore, if
$$\frac{1}{\alpha^{k}} \frac{(B^* + \alpha + \alpha^2 + ... + \alpha^{k})}{A^*} + \frac{2}{A^*} \le \sqrt{2},$$
$a$ is not inside the diametral circle of the PSLG subsegments at $p$.
This expression translates to
$$\frac{B^*}{\alpha^k A^*} + \frac{1}{A^*} 
\left(\frac{1}{\alpha^{k-1}} + \frac{1}{\alpha^{k-2}}  + ... + 1\right) 
+ \frac{2}{A^*} \le \sqrt{2}.$$
The first term of the LHS is maximized when $k=0$, and the second term
is maximized when the geometric progression extends to infinity. Thus,
if 
$$\frac{B^*}{A^*} + \frac{1}{A^*} 
\left(\frac{1}{1 - \frac{1}{\alpha}}\right) 
+ \frac{2}{A^*} \le \sqrt{2}.$$
$a$ will not be in the diametral circle of subsegments at $p$. 
This expression translates to
$$\frac{B^*}{A^*} + \frac{\alpha}{A^*(\alpha-1)} + \frac{2}{A^*} \le \sqrt{2},$$
which proves the lemma. 
\end{proof}

It is possible to refine our mesh such that the condition in 
Lemma~\ref{lemma:diacondition} above holds.  
As we refine the PSLG segments, $A^*$ tends to infinity and 
$B^*/A^*$ tends to 1. Thus, for any value on the RHS greater than $1$, 
it is possible to refine the mesh such that the inequality is satisfied. 
We will prove it more formally below. 

\begin{theorem}
\label{theorem:truesizeoptimal}
The algorithm terminates with a size optimal Delaunay mesh with triangles
having a minimum radius-edge of $\alpha>1$ if 
Lemma~\ref{lemma:diacondition} and Lemma~\ref{lemma:beginastar} are satisfied. 
\end{theorem}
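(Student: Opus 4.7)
The plan is to establish two things in sequence: first, that no encroachment ever occurs throughout the refinement process; second, that every vertex has nearest-neighbor distance bounded below by a constant multiple of the local feature size at that vertex. Termination and size optimality will then both fall out of the second fact via a standard packing argument, and the quality guarantee is automatic from the stopping condition.

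First I would invoke Lemmas~\ref{lemma:beginastar} and~\ref{lemma:diacondition} directly. Together they guarantee that no circumcenter of a skinny triangle ever lies inside the diametral circle of a PSLG subsegment, and since the inserted Steiner vertex is whichever of the circumcenter or the off-center point is closer to the shortest edge of the skinny triangle, neither of them can encroach either. Consequently the subsegments produced by Section~4.2 are frozen for the remainder of the algorithm, the boundary bounds $A^*\le\mathrm{LFS}(p)/l_p\le B^*$ are preserved, and every insertion is a legal off-center/circumcenter insertion in the interior.

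Next I would prove the key estimate: at the moment a vertex $v$ is inserted, the distance $l_v$ to its nearest neighbor satisfies $l_v \ge c\,\mathrm{LFS}(v)$ for a constant $c>0$ depending only on $\alpha$, $A^*$, and $B^*$. For vertices on the boundary this is immediate from $\mathrm{LFS}(v)\le B^* l_p$. For a Steiner vertex $v$ of layer $k\ge 1$, the shortest-edge priority combined with the off-center rule forces $l_v > \alpha^{k-1} l_0$, while Lemma~\ref{lemma:layerk} applied along the parent chain from $v$ back to its ancestral PSLG vertex $u$ (valid because Lemma~\ref{lemma:parentlayer} guarantees the layer strictly decreases toward $u$) bounds $\mathrm{LFS}(v)$ by a geometric expression whose dominant term is proportional to $\alpha^{k} l_0$. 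Here the hypothesis $\alpha>1$ enters decisively: the geometric tail is controlled by its largest term, so the ratio of bound-on-LFS to $\alpha^{k-1} l_0$ is a pure constant in $\alpha$ and $B^*$. Dividing yields a uniform lower bound $c$ on $l_v/\mathrm{LFS}(v)$.

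From the nearest-neighbor estimate, size optimality is just a restatement: every edge of the output has length at least $c\,\mathrm{LFS}$ at both endpoints. Termination follows because the number of points one can pack into the bounded input domain with pairwise distance at least a constant times the (strictly positive) local feature size is finite, by the standard Ruppert-style integral $\int \mathrm{LFS}(x)^{-2}\,dA < \infty$ over the domain; the radius-edge bound on the final mesh is automatic since, by hypothesis, the algorithm stops only when no skinny triangles remain. The main obstacle I expect is the layer-chain bookkeeping in the interior estimate: one must verify that the parent-pointer convention (the lower-order endpoint of the shortest edge) really produces a chain whose layers strictly decrease down to a boundary vertex, a property that relies on the no-small-angle assumption of this subsection to rule out the anomalous short boundary-to-boundary edges depicted in Fig.~\ref{fig:smallangleproblem} that would otherwise invalidate the order assignment.
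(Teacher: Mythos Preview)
Your proposal is correct and follows essentially the same route as the paper for the size-optimality half: both trace the parent chain back to an ancestral PSLG vertex via Lemma~\ref{lemma:parentlayer}, invoke Lemma~\ref{lemma:layerk} to bound $\mathrm{LFS}(v)$ by a geometric expression with top term $\sim\alpha^{k}l_0$, and divide by the layer-$k$ nearest-neighbor distance $\gtrsim\alpha^{k-1}l_0$ to obtain a uniform constant (the paper carries the constants through explicitly to get $2(B^*+\alpha/(\alpha-1)+1)$, and also uses $\beta$ to account for the off-center placement, but the skeleton is identical to yours).

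The only substantive deviation is termination. You reach it via the size-optimality bound and a Ruppert-style $\int\mathrm{LFS}^{-2}\,dA$ packing argument. The paper takes a shorter path: because no subsegment is ever split and shortest-edge prioritization is in force, every Steiner vertex is at distance at least $\alpha l_0$ from all existing vertices, where $l_0$ is the fixed length of the shortest PSLG subsegment; a bounded domain can hold only finitely many such points. Your argument is valid and is the standard one for algorithms that do split boundary segments on the fly, but here the absence of on-the-fly splitting buys a cheaper termination proof that does not need the LFS lower bound at all.
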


\begin{proof}
First, I prove termination. 
Since Lemma~\ref{lemma:beginastar} is satisfied, the PSLG
is recovered by Delaunay triangulation. 
As $B^*\le A^* + 1/2\log_{e}{2} + 2$ (see 
Lemma~\ref{lemma:allsplit}), it is possible to
compute the minimum value of $A^*$ for which the inequality in 
Lemma~\ref{lemma:diacondition} holds. One has to solve a simple
linear equation to find the minimum value of $A^*$.  
For such an $A^*$, by Lemma~\ref{lemma:aboundsb}, 
it is possible to split the PSLG such that $B^*$ is bounded
as a function of $\alpha$. 

After we split the PSLG segments in a size-optimal manner,
we add Steiner vertices in the mesh such that they are not inside
the diametral circle of a PSLG subsegment. Since no skinny
triangles are formed inside the diametral circle of a PSLG subsegment,
no Steiner vertex encroaches upon a PSLG subsegment. 
Thus, every Steiner vertex will be added inside the domain. 
Since we always place a Steiner vertex such that it is at least a 
distance $\alpha l_0$ from other vertices in the domain, 
the algorithm terminates
when it runs out of space. 

Now, I prove size optimality. 
We saw in Lemma~\ref{lemma:layerk} that the local feature
size at a vertex $p$ of layer $k$ is at most
$l_0(B^* + \alpha + \alpha^2 + ... \alpha^{k-1} + \gamma\alpha^{k-1}),$
where $1<\gamma\le\alpha$.
Due to the off-center vertex insertion algorithm, the length of 
the shortest edge of the skinny triangle 
that results in the insertion of $p$ 
is $l_{\mathrm{short}} = l_0 \frac{\gamma\alpha^{k-1}}{\beta}$. 
Due to the prioritization of skinny triangles with 
shortest edges, the length of the 
shortest edge adjacent to $p$ is at least 
$\alpha l_{\mathrm{short}} = \alpha l_0 \frac{\gamma\alpha^{k-1}}{\beta}$.  
The maximum ratio of the local feature size at $p$
and the length of an edge adjacent to $p$ is
$$\frac{l_0(B^* + \alpha + \alpha^2 + ... \alpha^{k-1} + \gamma\alpha^{k-1})}
{\alpha l_0 \frac{\gamma\alpha^{k-1}}{\beta}} = 
\frac{\beta B^*}{\gamma\alpha^{k}} + \frac{\beta}{\gamma\alpha}
\left(\frac{1}{\alpha^{k-2}} + \frac{1}{\alpha^{k-3}} + ... + 1 \right) + \frac{\beta}{\alpha}.$$
Since $\alpha>1$, $\beta/\alpha < 2$ for $0<\theta^*<\pi/6$, and $\gamma>1$, 
the expression above is less than
$$
2B^* + 2\left(\frac{1}{\alpha^{k-2}} + \frac{1}{\alpha^{k-3}} + ... + 1 \right) + 2 <
2\left(B^* + \frac{\alpha}{\alpha-1} + 1\right).
$$
As the ratio is bounded, my algorithm terminates with a 
size-optimal mesh. 
\end{proof}

\subsubsection{Constrained Delaunay Refinement}
Any mesh that is truly Delaunay is also constrained
Delaunay. Constrained Delaunay meshes, however, may
be smaller than truly Delaunay meshes, so I will derive
the conditions (on $A^*$, $B^*$, and $R$) that are less
strict than conditions derived in the previous section. 
In the previous section, the conditions ensure that no
vertex could be inserted inside the diametral circle of
a PSLG subsegment.  In this section, I derive similar
conditions that ensure that no skinny triangles are 
formed that are inside the diametral circle of a PSLG
segment even as Steiner vertices are added in it. 

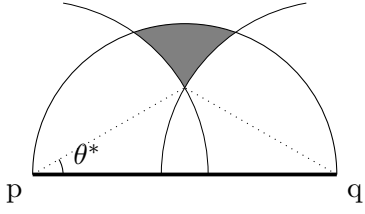
\begin{figure}
\centering
\begin{tikzpicture}[scale=4, invclip/.style={insert path={(-1,-1) rectangle (2,2)}}]

    \draw [line width=0.50mm] (0,0) -- (1,0);
    \draw [dotted] (0,0) -- (0.5,0.288);
    \draw [dotted] (1,0) -- (0.5,0.288);
    \node [below left] at (0,0) {p};
    \node [below right] at (1,0) {q};
    \node [above right] at (0.1,0) {$\theta^*$};
    \draw (0.1,0) arc (0:30:0.1);
    \draw (0,0) arc (180:0:0.5);
    \draw [line width=0.10mm] (0.577,0) arc (0:80:0.577);
    \draw [line width=0.10mm] (0.423,0) arc (180:100:0.577);
    \begin{scope}[on background layer]
     \begin{pgfinterruptboundingbox}
       \clip (0,0) circle (0.577) [invclip];
       \clip (1,0) circle (0.577) [invclip];
       \clip (0.5,0) circle (0.5);
     \end{pgfinterruptboundingbox}
       \fill [color = gray] (0,0) rectangle (1,1);
    \end{scope}
\end{tikzpicture}
\caption{The diametral semicircle of a PSLG subsegment $pq$ is
shown. If the condition in the 
Lemma~\ref{lemma:pslgskinnycondition}
is satisfied, no vertex is placed within the area bounded by
the arcs centered at $p$ and $q$ and the diametral semicircle.
Vertices are allowed only in the shaded region. If $\theta^*>0$, 
the maximum possible distance between two points in the shaded 
region is half the length of subsegment $pq$.
}
\label{fig:cdt_dia}
\end{figure}

\begin{lemma}
\label{lemma:pslgskinnycondition}
Let $\theta^*$ be the desired minimum angle in a mesh
and $\alpha = 1/(2\sin{\theta^*})$ be the desired maximum
radius-edge ratio.  If
$$\frac{B^*}{A^*} + \frac{\alpha}{A^*(\alpha-1)} + \frac{2}{A^*} \le 2\cos{(\theta^*)},$$
no Steiner vertex $a$ will be placed in the diametral 
circle of a PSLG subsegment $pq$ such that 
$\angle apq < \theta^*$ or $\angle aqp < \theta^*$. 
\end{lemma}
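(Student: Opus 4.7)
The plan is to follow the template of Lemma~\ref{lemma:diacondition} essentially verbatim, substituting for the ``empty diametral disk'' condition a sharper geometric condition adapted to the constrained setting. Place the subsegment so that $p=(0,0)$, $q=(l,0)$ with $l=|pq|$, and work in the upper half-plane (the lower half is symmetric). The locus $\angle apq=\theta^*$ is the ray from $p$ at angle $\theta^*$, and $\angle aqp=\theta^*$ is the ray from $q$ at angle $\theta^*$; these meet at the apex $A=(l/2,\,(l/2)\tan\theta^*)$, with $|pA|=|qA|=l/(2\cos\theta^*)$. Since $\theta^*<\pi/6<\pi/4$, a short computation shows that the segments $pA$ and $qA$ both lie inside the diametral circle of $pq$, so the ``forbidden'' region described in the lemma is precisely the closed triangle $\triangle pqA$.

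The key geometric observation is that $\triangle pqA$ is covered by the two closed disks $\{a:|ap|\le l/(2\cos\theta^*)\}$ and $\{a:|aq|\le l/(2\cos\theta^*)\}$. To see this, split the triangle along the perpendicular bisector $x=l/2$. On the left half the function $|ap|^2=x^2+y^2$ is convex, so it attains its maximum over this compact convex region at one of its extreme points, namely $p$, $(l/2,0)$, or $A$, at which the values are $0$, $l^2/4$, and $l^2/(4\cos^2\theta^*)$ respectively; the maximum is therefore $l^2/(4\cos^2\theta^*)$, attained at $A$. The right half is symmetric. Taking the contrapositive, whenever $|ap|>l/(2\cos\theta^*)$ and $|aq|>l/(2\cos\theta^*)$, $a$ lies outside $\triangle pqA$, hence outside the forbidden region.

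With the geometric claim in hand, I would then transcribe the proof of Lemma~\ref{lemma:diacondition} almost word for word, replacing $\sqrt{2}$ throughout by $2\cos\theta^*$. Assume $a$ is a Steiner vertex on layer $k+1$ placed at distance $\gamma_1\alpha^k l_0$ from its nearest neighbor and $\gamma_1\gamma_2\alpha^k l_0$ from an endpoint $p$ of a threatened subsegment, with $1<\gamma_1\le\alpha$ and $\gamma_2\ge 1$. The layering bound (Lemma~\ref{lemma:layerk}) gives $f_a\le l_0(B^*+\alpha+\alpha^2+\cdots+\alpha^k+\gamma_1\alpha^k)$, one Lipschitz step yields $f_p\le l_0(B^*+\alpha+\cdots+\alpha^k+\gamma_1\alpha^k+\gamma_1\gamma_2\alpha^k)$, and the maximum subsegment length at $p$ is then at most $f_p/A^*$. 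By the geometric claim it suffices to show $f_p/A^*\le 2\cos\theta^*\cdot\gamma_1\gamma_2\alpha^k l_0$, since the analogous derivation with $q$ in place of $p$ handles the second disk by symmetry. Dividing through, maximising the left-hand side by taking $\gamma_1=\gamma_2=1$ and $k=0$, and summing the geometric progression to infinity exactly as in Lemma~\ref{lemma:diacondition}, collapses the inequality to the stated one. The only genuinely new content is the identification of the forbidden region and its two-disk cover; once that is in place the algebra is a verbatim copy of Lemma~\ref{lemma:diacondition}, so I expect the geometric step to be the main obstacle.
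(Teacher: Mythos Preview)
Your overall plan---reduce the lemma to a geometric covering statement and then transcribe the proof of Lemma~\ref{lemma:diacondition} with $\sqrt{2}$ replaced by $2\cos\theta^*$---is exactly the paper's approach, and the algebraic half of your proposal is fine. The geometric half, however, has a genuine gap.

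You identify the forbidden region as the closed triangle $\triangle pqA$. That triangle is the set where \emph{both} $\angle apq<\theta^*$ and $\angle aqp<\theta^*$; the lemma forbids the set where \emph{at least one} of these holds, which is strictly larger. Concretely, take $\theta^*=\pi/6$, $l=1$, and $a=(0.25,0.3)$. Then $a$ lies in the diametral disk, $\angle aqp=\arctan(0.3/0.75)\approx 21.8^\circ<\theta^*$, so $a$ is forbidden; but $0.3>0.25\tan\theta^*\approx 0.144$, so $a\notin\triangle pqA$. Your two-disk cover of the triangle therefore does not, by itself, cover the forbidden region.

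The intended claim---and what the paper's Figure~\ref{fig:cdt_dia} depicts---is that the \emph{entire} forbidden region inside the diametral disk is covered by the two disks of radius $l/(2\cos\theta^*)$ about $p$ and $q$. Equivalently: if $a$ lies in the diametral disk with $|ap|\ge l/(2\cos\theta^*)$ and $|aq|\ge l/(2\cos\theta^*)$, then $\angle apq\ge\theta^*$ (and, by symmetry, $\angle aqp\ge\theta^*$). One clean way to see this: inside the diametral disk $\angle paq\ge\pi/2$, hence $|ap|^2+|aq|^2\le l^2$; combine this with the law of cosines $\cos(\angle apq)=\bigl(|ap|^2+l^2-|aq|^2\bigr)/(2l\,|ap|)$ and the two radius bounds, and a short quadratic computation (the discriminant factors as $(2\cos^2\theta^*-1)^2$) gives $\cos(\angle apq)\le\cos\theta^*$. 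With that corrected geometric step in place, the rest of your proposal goes through verbatim.
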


\begin{proof}
The proof is nearly identical to the proof of 
Lemma~\ref{lemma:diacondition}.  In the proof, 
the condition in Eq.~\ref{eqn:sqrt2} is derived
to ensure that no vertex is at a distance of
less than $|pq|/\sqrt{2}$ from $p$ or $q$. If we
ensure that no vertex is at a distance less than
$|pq|/2\cos{(\theta^*)}$ from $p$ or $q$, the lemma
holds.  This means that the gray region in 
Fig.~\ref{fig:cdt_dia} is the only region within
the diametral circle of $pq$ where a vertex may 
be placed.  
To prove this lemma, the $\sqrt{2}$ in the 
RHS of Eq.~\ref{eqn:sqrt2} should be
replaced with $2\cos{\theta^*}$. The condition in this
lemma follows the proof of 
Lemma~\ref{lemma:diacondition} from this point
onward. 
\end{proof}

The condition in Lemma~\ref{lemma:pslgskinnycondition} 
above ensures that
there are no skinny triangles adjacent to a
PSLG subsegment. We also need to ensure that
there are no skinny triangles inside
the diametral circle of a PSLG subsegment
that are not adjacent to the subsegment. I 
derive the conditions in the lemmas below.  
First, I will show that if the minimum length of
the shortest segment in a skinny triangle is
bounded from below, the length of the longest
edge in the triangle is also bounded from below. 
I then use this fact and ensure that the
length of any segment in the diametral circle
is also bounded, and thus, skinny triangles inside
the diametral circle of a PSLG subsegment
are impossible. 

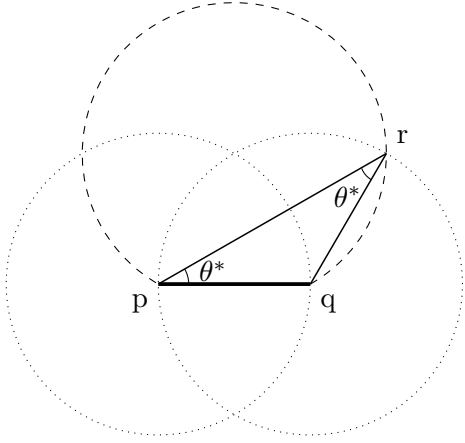
\begin{figure}
\centering
\begin{tikzpicture}[scale=2]
    \draw [line width=0.50mm] (0,0) -- (1,0);
    \draw [line width=0.20mm] (0,0) -- (1.5,0.866);
    \draw [line width=0.20mm] (1,0) -- (1.5,0.866);
    \draw [dotted] (0,0) circle [radius=1];
    \draw [dotted] (1,0) circle [radius=1];
    \draw [dashed] (1,0) arc (-60:240:1);
    \node [below left] at (0,0) {p};
    \node [above right] at (0.2,-0.05) {$\theta^*$};
    \node [below right] at (1,0) {q};
    \node [above right] at (1.5,0.866) {r};
    \draw (0.2,0) arc (0:30:0.2);
    \draw (1.4,0.6928) arc (240:200:0.1414);
    \node [below left] at (1.4,0.715) {$\theta^*$};
\end{tikzpicture}
\caption{The shortest edge of a skinny triangle
is $pq$.  The arc is the locus of points at which
$pq$ subtends an angle $\theta^*$, which is the 
threshold for a triangle to be considered skinny. 
The third vertex should be outside the
dashed arc $pq$. Since $pq$ is the shortest side, the
third vertex should be outside the circles centered
at $p$ and $q$ with the radii equal to the length of
$pq$.  The length of the longest edge is, therefore, 
at least the distance between $p$ (or $q$) and the
point of intersection of the arc and one of the 
circles.}
\label{fig:skinnyarc}
\end{figure}

\begin{lemma}
\label{lemma:skinnylonglength}
If the length of the shortest edge in a skinny
triangle is greater than $l$, the length of the 
longest edge is greater than 
$2l\cos{(\theta^*)}$, 
where $\theta^*$ is the
minimum angle threshold for skinny triangles. 
\end{lemma}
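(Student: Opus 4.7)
The plan is to analyze the triangle $\triangle pqr$ with shortest edge $pq$ (so $|pq|>l$) and to show $\max(|pr|,|qr|)\ge 2|pq|\cos\theta^*$. Since $pq$ is the shortest side, the opposite angle at $r$ is the smallest in the triangle, and skinniness forces $\angle prq<\theta^*$. Hence $r$ is constrained to the region depicted in Fig.~\ref{fig:skinnyarc}: outside the open disks of radius $|pq|$ around $p$ and $q$ (so neither $|pr|$ nor $|qr|$ is shorter than $|pq|$), and outside the arc on which $pq$ subtends angle $\theta^*$ (so $\angle prq<\theta^*$). I want to minimize $\max(|pr|,|qr|)$ over this region.

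The key reduction is to restrict the optimization to the arc. Placing $p=(-a,0)$, $q=(a,0)$ and $r=(x,y)$ in the upper half-plane, both $|pr|^2=(x+a)^2+y^2$ and $|qr|^2=(x-a)^2+y^2$ are strictly increasing in $y$, so $\max(|pr|,|qr|)$ is minimized on the lower boundary of the feasible region, which consists of the arc together with small portions of the two circles. On the arc, parametrize by $\varphi=\angle qpr$; then $\angle prq=\theta^*$, $\angle pqr=\pi-\theta^*-\varphi$, and by the law of sines
\[
|pr|=\frac{|pq|\sin(\varphi+\theta^*)}{\sin\theta^*},\qquad |qr|=\frac{|pq|\sin\varphi}{\sin\theta^*}.
\]
The disk constraints $|pr|,|qr|\ge|pq|$ confine $\varphi$ to $[\theta^*,\pi-2\theta^*]$, whose endpoints are precisely the two arc-disk intersections highlighted in Fig.~\ref{fig:skinnyarc}.

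At the endpoint $\varphi=\theta^*$ the triangle is isoceles with $|qr|=|pq|$ and apex angle $\pi-2\theta^*$ at $q$; the law of cosines together with the identity $1+\cos 2\theta^*=2\cos^2\theta^*$ then yields $|pr|=2|pq|\cos\theta^*$, and the other endpoint is symmetric. The main (mild) obstacle is the one-variable optimization of $\max(\sin(\varphi+\theta^*),\sin\varphi)$ on $[\theta^*,\pi-2\theta^*]$: the function is symmetric about the midpoint $(\pi-\theta^*)/2$ and attains $\cos(\theta^*/2)$ there (giving $\max(|pr|,|qr|)=\beta|pq|$), while at each endpoint it equals $\sin(2\theta^*)$ (giving $\max(|pr|,|qr|)=2|pq|\cos\theta^*$); the inequality $\sin(2\theta^*)<\cos(\theta^*/2)$ holds throughout the paper's regime $\theta^*<\pi/6$, so the endpoints are the minimizers. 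A matching check on the circle portions of the boundary shows $\max(|pr|,|qr|)$ only grows as one moves away from the arc intersections. Combining, $\max(|pr|,|qr|)\ge 2|pq|\cos\theta^*>2l\cos\theta^*$, as claimed.
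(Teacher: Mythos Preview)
Your proof is correct and follows the same geometric approach as the paper: both constrain $r$ to lie outside the arc on which $pq$ subtends $\theta^*$ and outside the two disks of radius $|pq|$ centered at $p$ and $q$, identify the arc--circle intersection as the minimizer of the longest side, and evaluate that length via the law of cosines to obtain $2|pq|\cos\theta^*$. The only difference is that the paper asserts the minimizer location with a ``Clearly,'' whereas you supply the supporting one-variable optimization on the arc; your argument is thus a more rigorous version of the paper's.
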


\begin{proof}
Let $pq$ be the shortest edge of a skinny triangle.  
The angle opposite the shortest edge is the smallest
angle, and thus, the third vertex $r$ should be outside
the dashed arc passing through $p$ and $q$ 
in Fig.~\ref{fig:skinnyarc}.  In addition, 
since $pq$ is the shortest side, $r$ should be outside
the circles with centers $p$ and $q$ and radius $l$, 
where $l$ is the length of $pq$.  Clearly, the
shortest possible length of the longest segment
is when $r$ is at the intersection of the circle
and the arc. This length can be calculated
using the cosine rule. Thus, the length of the
longest edge is greater than 
$\sqrt{l^2 + l^2 - 2l^2\cos{(\pi-2\theta^*)}} = 
\sqrt{2l^2 (1 + \cos{2\theta^*})} =
\left(\sqrt{2(1+\cos{(2\theta^*)})}\right)l = 
\left(\sqrt{2(2\cos^2{(\theta^*)})}\right)l =
2l\cos{(\theta^*)}$.
\end{proof}

\begin{lemma}
\label{lemma:diaskinny}
If the condition in Lemma~\ref{lemma:pslgskinnycondition}
is satisfied and 
$$\frac{A^* - \frac{1}{\sqrt{2}}}
{\left( B^* + \frac{\alpha}{\alpha-1} + 1\right)}
> \frac{1}{2\cos{(\theta^*)}},$$
where $\theta^*$ is the minimum desired angle in the mesh,
there will be no skinny triangles formed completely inside 
the diametral circle of a PSLG subsegment
before any vertex encroachment.
\end{lemma}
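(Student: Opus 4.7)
The plan is to argue by contradiction: suppose some skinny triangle $T$ is ever formed whose three vertices all lie inside the diametral circle of a PSLG subsegment $pq$ of length $l$. I show this forces a violation of the hypothesized inequality.

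First, because Lemma~\ref{lemma:pslgskinnycondition} rules out any Steiner vertex within distance $l/(2\cos\theta^*)$ of $p$ or $q$, every vertex of $T$ must lie in the shaded region of Fig.~\ref{fig:cdt_dia} (the part of the diametral disk lying outside the two arcs of radius $l/(2\cos\theta^*)$ centered at $p$ and $q$). As noted in the caption of that figure, the diameter of the shaded region is at most $l/2$, so the longest edge of $T$ has length at most $l/2$. Applying Lemma~\ref{lemma:skinnylonglength}, the shortest edge of $T$ then satisfies $l_s \le l/(4\cos\theta^*)$.

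For the matching lower bound on $l_s$, I would pick any vertex $u$ of $T$. Since $u$ lies in the diametral disk of $pq$, the angle $\angle puq$ is at least $\pi/2$, so by the law of cosines $|up|^2 + |uq|^2 \le l^2$, and hence $\min(|up|, |uq|) \le l/\sqrt{2}$. Combined with the Lipschitz property of $\mathrm{LFS}$ and the lower bounds $\mathrm{LFS}(p), \mathrm{LFS}(q) \ge A^* l$, this yields $\mathrm{LFS}(u) \ge (A^* - 1/\sqrt{2})\, l$. The edge-length-to-LFS ratio bound extracted from the proof of Theorem~\ref{theorem:truesizeoptimal} says that the shortest edge at any vertex $v$ in the mesh satisfies $\mathrm{LFS}(v)/(\text{shortest edge at } v) < 2(B^* + \alpha/(\alpha-1) + 1)$; since $l_s$ is an edge incident to $u$, $l_s$ is at least the shortest edge at $u$, so $l_s > (A^* - 1/\sqrt{2})\,l \,/\, \bigl(2(B^* + \alpha/(\alpha-1) + 1)\bigr)$.

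Combining the upper and lower bounds on $l_s$ and canceling $l$ gives $(A^* - 1/\sqrt{2})/(B^* + \alpha/(\alpha-1) + 1) < 1/(2\cos\theta^*)$, directly contradicting the hypothesis. The main obstacle I expect is an inductive bookkeeping step: the ratio bound borrowed from Theorem~\ref{theorem:truesizeoptimal} must be an invariant holding at every intermediate stage of the advancing front — not merely at termination — so that it can be applied at the first moment a disallowed skinny triangle could appear. The remaining ingredients (the $l/2$ diameter of the shaded region and the $1/\sqrt{2}$ term from the diametral disk geometry) are elementary and can be checked directly from the figure and the law of cosines.
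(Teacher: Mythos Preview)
Your argument is correct and mirrors the paper's proof almost exactly: both bound the LFS inside the diametral disk by $(A^*-1/\sqrt{2})l$, invoke the size-optimality ratio from Theorem~\ref{theorem:truesizeoptimal} to lower-bound edge lengths, observe that the shaded region of Fig.~\ref{fig:cdt_dia} has diameter at most $l/2$, and then apply Lemma~\ref{lemma:skinnylonglength} to reach the inequality. Your explicit flag about the inductive bookkeeping (that the edge-length-to-LFS bound must hold at every intermediate stage, not just at termination) is apt and is precisely what the paper's phrase ``before a vertex encroachment'' is glossing over.
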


\begin{proof}
Let the length of the PSLG subsegment under consideration
be $l$. The minimum local feature size at the vertices
of the subsegment is $A^*l$.  The minimum feature size
inside the diametral circle on the PSLG subsegment is
at least $A^*l - (1/\sqrt{2})l$ (because the farthest point
inside the diametral circle away from either vertex is
at a distance of $(1/\sqrt{2})l$).  Due to 
Theorem~\ref{theorem:truesizeoptimal}, the minimum length
of any edge inside the diametral circle (before a vertex 
encroachment) is given by
$$\frac{A^*l - \frac{1}{\sqrt{2}}l}
{2 \left( B^* + \frac{\alpha}{\alpha-1} + 1\right)}.$$
The condition in the 
Lemma~\ref{lemma:pslgskinnycondition} does not allow
Steiner vertices within the circles centered at $p$ and $q$
in Fig.~\ref{fig:cdt_dia}.  Steiner vertices may be present
in the shaded region. It is easy to show that the length of 
the longest edge in the shaded region is less than $l/2$. 
Due to 
Lemma~\ref{lemma:skinnylonglength}, if the length of
the shortest edge in the diametral circle is greater than
$(l/2)/(2\cos{(\theta^*)})$, any possible skinny triangle
will be partly outside the diametral circle of the 
PSLG subsegment. Therefore, if
$$\frac{A^*l - \frac{1}{\sqrt{2}}l}
{2 \left( B^* + \frac{\alpha}{\alpha-1} + 1\right)}
>\frac{l}{4\cos{(\theta^*)}},$$
the lemma is proved.  
Canceling $l/2$ on both sides of the inequality proves
the lemma.
\end{proof}

Note that since $B^* \le A^* + c$, where $c$ is a constant
(see Lemma~\ref{lemma:aboundsb}), the condition in the 
Lemma~\ref{lemma:diaskinny} is linear in $A^*$. Thus, it is
possible to compute the minimum $A^*$ such that the condition is
satisfied. 

\begin{lemma}
If the conditions in Lemmas~\ref{lemma:beginastar},
~\ref{lemma:pslgskinnycondition}, and~\ref{lemma:diaskinny} 
are satisfied, no skinny triangles
will be formed in the diametral circle of a 
PSLG subsegment. Thus, no subsegment is encroached
upon by a Steiner vertex. 
\end{lemma}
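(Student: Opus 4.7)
The plan is to combine the three cited lemmas so that they exhaustively eliminate every possible skinny triangle sitting inside the diametral circle of any PSLG subsegment, and then invoke the encroachment characterization from Section 3.5 (illustrated in Fig.~\ref{fig:insidedia}) to conclude non-encroachment. Recall that a Steiner vertex encroaches upon a subsegment only when it is the circumcenter or off-center vertex of a skinny triangle lying on the far side of the subsegment, which in turn forces the skinny triangle itself to lie inside the diametral circle. It therefore suffices to rule out skinny triangles inside any diametral circle.

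Fix a PSLG subsegment $pq$ and let $T$ be a candidate skinny triangle whose three vertices lie in the closed diametral circle of $pq$. By Lemma~\ref{lemma:beginastar}, the only mesh vertices that can ever inhabit this disk are $p$, $q$, and Steiner vertices inserted by the algorithm. I would split into two cases depending on whether $T$ uses both of $p$ and $q$ as vertices. If $T=\triangle apq$ for some Steiner vertex $a$ inside the diametral circle, then $\angle paq\ge\pi/2$ by Thales, so the remaining two angles sum to at most $\pi/2$; the triangle is skinny only if $\angle apq<\theta^*$ or $\angle aqp<\theta^*$, and Lemma~\ref{lemma:pslgskinnycondition} explicitly forbids placing $a$ to realize such an angle. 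Otherwise $T$ uses at most one of $p,q$ and hence fits entirely inside the diametral circle without spanning it, which Lemma~\ref{lemma:diaskinny} directly rules out. Combining both cases, no skinny triangle lies inside the diametral circle of any subsegment, and the encroachment characterization completes the argument.

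The subtlety I expect to handle most carefully is the seam between the two cases, namely verifying that Lemma~\ref{lemma:diaskinny} still applies when one of the vertices of $T$ is $p$ or $q$. The proof of that lemma only requires Steiner vertices to avoid the two small excluded disks centered at $p$ and $q$ drawn in Fig.~\ref{fig:cdt_dia}, and since $p$ and $q$ themselves lie on the boundary of those disks, the same shortest-edge lower bound from Theorem~\ref{theorem:truesizeoptimal} together with the longest-edge inequality of Lemma~\ref{lemma:skinnylonglength} propagate through without modification. Everything else is a straightforward assembly of the three preceding lemmas.
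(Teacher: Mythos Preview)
Your proposal is correct and follows essentially the same approach as the paper, which likewise simply assembles Lemmas~\ref{lemma:beginastar},~\ref{lemma:pslgskinnycondition}, and~\ref{lemma:diaskinny} and then appeals to the encroachment characterization of Section~3.5; the paper's own proof is in fact a three-sentence invariant argument that invokes these lemmas without your explicit case split on whether the candidate triangle uses both of $p,q$. Your seam discussion actually goes a bit beyond the paper (which is silent on that boundary case), though note a small slip: $p$ and $q$ are the \emph{centers}, not boundary points, of the excluded disks in Fig.~\ref{fig:cdt_dia}.
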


\begin{proof}
In the beginning of the execution of the algorithm, 
I ensure that there are no skinny triangles in
the diametral circle of a PSLG subsegment by
enforcing the condition in Lemma~\ref{lemma:beginastar}.
Lemmas~\ref{lemma:pslgskinnycondition} and 
~\ref{lemma:diaskinny} ensure that it is not possible 
to have a skinny triangle inside the diametral 
circle. Thus, we ensure that there will not
be a Steiner vertex that encroaches upon a
PSLG subsegment at any time in the execution
of the algorithm. 
\end{proof}

\begin{theorem}
The algorithm terminates with a size-optimal, constrained Delaunay 
mesh with triangles having a minimum radius-edge of $\alpha>1$ 
if Lemmas~\ref{lemma:beginastar},~\ref{lemma:pslgskinnycondition} 
and~\ref{lemma:diaskinny} above are satisfied. 
\end{theorem}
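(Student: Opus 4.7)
The plan is to mirror Theorem~\ref{theorem:truesizeoptimal} (the truly Delaunay case), with two substitutions: the no-vertex-inside-diametral-circle guarantee is replaced by the no-skinny-triangle-inside-diametral-circle guarantee supplied by the preceding lemma, and the feasibility of the initial splitting now relies on the combined inequalities of Lemmas~\ref{lemma:beginastar}, \ref{lemma:pslgskinnycondition}, and~\ref{lemma:diaskinny}.

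For termination, I would first invoke the preceding lemma to conclude that no Steiner vertex encroaches upon any PSLG subsegment, so every inserted vertex lies inside the domain. By Lemma~\ref{lemma:aboundsb} (or \ref{lemma:rboundsb}), one can choose a splitting that realizes values of $A^*$ and $B^*$ making Lemmas~\ref{lemma:pslgskinnycondition} and~\ref{lemma:diaskinny} hold, since both inequalities are linear in $A^*$ and $B^*/A^* \to 1$ as the PSLG is refined. Once the splitting is in place, the off-center rule ensures every Steiner vertex is inserted at distance at least $\alpha \cdot l$ from every existing vertex, where $l$ is the shortest edge of the triggering skinny triangle; with $\alpha > 1$ and a bounded domain, a packing argument forces termination.

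For size optimality, I would reuse the calculation from Theorem~\ref{theorem:truesizeoptimal} essentially verbatim. Lemmas~\ref{lemma:parentlayer} and~\ref{lemma:layerk} continue to apply because their proofs rely only on layer monotonicity, the distance lower bound between consecutive Steiner vertices in the parent chain, and the Lipschitz continuity of the LFS, none of which require the stronger truly Delaunay hypothesis. For a layer-$k$ Steiner vertex $p$ with insertion distance $\gamma \alpha^{k-1} l_0$, the same bound $F(p) \le l_0 (B^* + \alpha + \cdots + \alpha^{k-1} + \gamma \alpha^{k-1})$ holds, while the shortest-edge-first priority forces the shortest edge adjacent to $p$ to have length at least $\alpha l_0 \gamma \alpha^{k-1} / \beta$. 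Dividing, using $\beta/\alpha < 2$ for $\theta^* < \pi/6$, and summing the geometric progression by $\alpha/(\alpha-1)$ yields that $F(p)$ is at most $2(B^* + \alpha/(\alpha-1) + 1)$ times the shortest adjacent edge, giving a constant bound.

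The main obstacle I anticipate is verifying that the parent/ancestor chain underlying Lemma~\ref{lemma:layerk} remains structurally sound in the constrained setting. In the truly Delaunay case, the emptiness of diametral circles directly guaranteed that each Steiner vertex's ancestral path reached the PSLG without detour. Here, vertices are allowed inside diametral circles provided they do not participate in skinny triangles there, so one must confirm that each chain still terminates at a subsegment vertex of layer $0$ and that layer orders are strictly monotone along the chain. This is fine because Lemmas~\ref{lemma:pslgskinnycondition} and~\ref{lemma:diaskinny} forbid exactly those skinny-triangle configurations that could short-circuit the chain; however the check merits explicit attention, as it is the only step in the truly Delaunay proof that is not entirely transparent in the constrained setting.
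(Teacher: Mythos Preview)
Your proposal is correct and follows essentially the same approach as the paper: the paper's proof simply states that it is identical to that of Theorem~\ref{theorem:truesizeoptimal}, with the no-encroachment guarantee now supplied by the combination of Lemmas~\ref{lemma:beginastar}, \ref{lemma:pslgskinnycondition}, and~\ref{lemma:diaskinny}, and with the same size-optimality constant. Your added caution about the parent/ancestor chain is more than the paper itself offers, but as you observe, Lemmas~\ref{lemma:parentlayer} and~\ref{lemma:layerk} depend only on the off-center insertion distances and Lipschitz continuity of the LFS, not on diametral-circle emptiness, so the chain argument carries over unchanged.
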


\begin{proof}
The proof is identical to the proof of 
Theorem~\ref{theorem:truesizeoptimal}.  
As explained above, I have shown that 
if Lemmas~\ref{lemma:beginastar},~\ref{lemma:pslgskinnycondition} 
and~\ref{lemma:diaskinny} above are satisfied, 
there will be no skinny triangles
formed in the diametral circle of a PSLG subsegment.
Thus, there will not be any encroachment. 
Consequently, the constant associated
with the size optimality is identical to the one
obtained in Theorem~\ref{theorem:truesizeoptimal}. The values
of $A^*$ and $B^*$ are possibly smaller because
the conditions they need to satisfy are
less strict.
\end{proof}

\subsection{PSLG with Small Angles}
Thus far in the analysis, we have assumed no angles 
were smaller than $\pi/2$ in the input PSLG.  
The proofs above do not hold when a PSLG angle is small.
because two end points ($p$ and $q$, say)
on adjacent segments may be arbitrarily close to 
each other when the angle between the segments is 
arbitrarily small (see Fig.~\ref{fig:smallangleproblem}).  
Thus, the results above do not hold when the path to the 
ancestral vertex from a Steiner vertex is determined
by such an arbitrarily short edge, which is adjacent
to a vertex on a PSLG segment, whose length is
less than $\mathrm{LFS}(p)/B^*$ or $\mathrm{LFS}(q)/B^*$.  
If the length of the edge is too large (greater than
$\mathrm{LFS}(p)/A^*$ or $\mathrm{LFS}(q)/A^*$, our results
hold because any Steiner vertex inserted due to the edge
simply skips a few layers as do other Steiner vertices
in the mesh.  Pav et al.~\cite{MPW05, P03} decided to ignore
skinny triangles whose shortest edge's end points
lie on two PSLG segments that
meet at a small angle (triangles ``across'' a 
small angle).  My algorithm does the same; it ignores
the triangles, but only if their lengths are smaller than 
$\mathrm{LFS}(p)/B^*$ or $\mathrm{LFS}(q)/B^*$. 
Thus, no path from a Steiner vertex to its 
ancestral vertex is determined by such an arbitrarily short
edge. As a result, the analysis in the subsection above is 
valid for those triangles that are not across a small angle
with a short edge.  In this subsection, I will bound 
the minimum and the maximum angles of such ignored skinny 
triangles as a function of $A^*$, $B^*$, and $R$. 
I begin with a simple lemma below. 

\begin{lemma}
\label{lemma:R}
The ratio of the lengths of adjacent subsegments in the PSLG
is at most $R$.
\end{lemma}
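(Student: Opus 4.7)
The plan is to lean directly on the size-optimal splitting guarantee established in Lemma~\ref{lemma:minsplit} and Lemma~\ref{lemma:allsplit}, which together yield the two-sided bound $A^* \le \mathrm{LFS}(p)/l_p \le B^*$ at \emph{every} vertex $p$ of a subsegment of length $l_p$, independently of which PSLG segment the subsegment belongs to. Two adjacent subsegments, whether consecutive pieces of the same PSLG segment or incident at a common PSLG vertex where two segments meet, share an endpoint $p$, and the local feature size $\mathrm{LFS}(p)$ is a single number depending only on the input geometry at $p$.

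First I would let $s_1$ and $s_2$ denote the two adjacent subsegments meeting at $p$, with lengths $l_1$ and $l_2$ respectively. Applying the splitting bound to each subsegment at $p$ gives
\[
\frac{\mathrm{LFS}(p)}{B^*} \le l_1 \le \frac{\mathrm{LFS}(p)}{A^*}
\qquad \text{and} \qquad
\frac{\mathrm{LFS}(p)}{B^*} \le l_2 \le \frac{\mathrm{LFS}(p)}{A^*}.
\]
Dividing the maximal estimate for $l_1$ by the minimal estimate for $l_2$ cancels the common factor $\mathrm{LFS}(p)$ and yields $l_1/l_2 \le B^*/A^* = R$. By symmetry, the same argument with the roles of $s_1$ and $s_2$ swapped gives $l_2/l_1 \le R$, so the ratio of the two lengths (in either order) is bounded by $R$.

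There is no real obstacle here: the only delicate point is making sure that the bound from Lemma~\ref{lemma:allsplit} is applied at the \emph{shared} endpoint rather than at two different endpoints, since only then does $\mathrm{LFS}(p)$ cancel cleanly. I would also note explicitly that the argument is insensitive to whether the two subsegments lie on the same PSLG line segment or on two distinct PSLG segments meeting at an input vertex $p$, because $\mathrm{LFS}$ is a purely geometric quantity at $p$ and the splitting bound from Section~5.1 applies uniformly to every subsegment incident to $p$.
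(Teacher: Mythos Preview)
Your proof is correct and follows exactly the same idea as the paper's own (very terse) argument: adjacent subsegments share an endpoint $p$, the two-sided bound $A^* \le \mathrm{LFS}(p)/l \le B^*$ from Lemma~\ref{lemma:allsplit} applies to each subsegment at that shared endpoint, and dividing cancels $\mathrm{LFS}(p)$ to give the ratio bound $R=B^*/A^*$. Your version is simply a more explicit rendering of what the paper compresses into three sentences, including the useful observation that the argument is indifferent to whether the two subsegments lie on the same PSLG segment or on distinct segments meeting at an input vertex.
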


\begin{proof}
Adjacent subsegments share an end point. We 
split all segments such that the lengths of the subsegments
are asymptotically proportional to the LFS at their end points.
The ratio of constants associated with the asymptotic 
proportionality is $B^*/A^* = R$.
\end{proof}

\subsubsection{Truly Delaunay Refinement: PSLG Recovery}
Here, I will show that it is possible to recover the PSLG
by highly refining the input segments in the PSLG and
constructing its Delaunay triangulation. As I have noted
before, if the diametral circle of every subsegment is empty,
the PSLG is recovered. 
First, I will show that there may be only finitely
many subsegments (as a function of $\phi$ and $R$)
whose diametral circle contains some part of an
adjacent PSLG segment. 

\begin{lemma}
\label{lemma:finite}
The number of subsegments whose diametral circles
contain a part of an adjacent segment is bounded
from above as a function of $R$ and $\phi$, where
$\phi$ is the angle between the two segments,
if $R$ is sufficiently small. 
\end{lemma}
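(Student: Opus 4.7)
The plan is to set up local coordinates with the apex of the small angle at the origin, $L_1$ along the positive $x$-axis, and $L_2$ in the half-plane at angle $\phi$. Index the subsegments of $L_1$ outward from the apex as $s_0, s_1, \ldots$, with lengths $\ell_k$ and left endpoints $d_k = \sum_{i<k} \ell_i$.

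First I would perform a geometric reduction. The midpoint of $s_k$ lies at distance $d_k + \ell_k/2$ from the apex along $L_1$, so its perpendicular distance to the line containing $L_2$ is $(d_k + \ell_k/2)\sin\phi$. Since the diametral disk of $s_k$ has radius $\ell_k/2$, a necessary condition for the disk to meet $L_2$ is
\[
  \bigl(d_k + \tfrac12 \ell_k\bigr)\sin\phi \;\le\; \tfrac12\ell_k,
\]
equivalently $d_k \le C\ell_k$ with $C = (1-\sin\phi)/(2\sin\phi)$. If the perpendicular foot falls on the wrong side of the apex, the closest point of $L_2$ is the apex itself, at distance $d_k + \ell_k/2 > \ell_k/2$, so the disk misses $L_2$ anyway and the inequality remains a valid necessary condition.

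Second, I would invoke Lemma~\ref{lemma:R} to lower bound $d_k/\ell_k$. Because adjacent subsegment lengths differ by at most a factor $R$, we have $\ell_i \ge \ell_k / R^{k-i}$ for every $i<k$, so
\[
  \frac{d_k}{\ell_k} \;\ge\; \sum_{j=1}^{k} R^{-j} \;=\; \frac{1 - R^{-k}}{R - 1}.
\]
Combining this with $d_k \le C \ell_k$ yields $1 - R^{-k} \le C(R-1)$, which rearranges to an explicit bound $k \le \log_R\!\bigl(1/(1 - C(R-1))\bigr)$ whenever $C(R-1) < 1$, i.e., whenever $R < (1+\sin\phi)/(1-\sin\phi)$. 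The same argument applied to $L_2$'s subsegments against $L_1$ then yields a bound on the total count as a function of $R$ and $\phi$ alone.

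The main obstacle I anticipate is not the calculation itself but pinning down what ``$R$ sufficiently small'' means quantitatively. The threshold $R < (1+\sin\phi)/(1-\sin\phi)$ degenerates to $R \to 1$ as $\phi \to 0$, which matches the intuition that arbitrarily small input angles demand arbitrarily tight adaptivity. I would state this explicit threshold in the proof and rely on Lemma~\ref{lemma:rboundsb} to ensure that the PSLG splitting can always be refined to satisfy it.
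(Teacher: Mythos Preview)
Your proposal is correct and follows essentially the same route as the paper: reduce ``diametral disk meets the adjacent segment'' to an inequality of the form $d_k \le C\,\ell_k$ via the perpendicular distance $\sin\phi\cdot(\text{distance to apex})$, then use the bounded-ratio property of adjacent subsegments (Lemma~\ref{lemma:R}) to force a geometric lower bound on $d_k/\ell_k$ and conclude that only finitely many $k$ can satisfy the inequality. The paper lower-bounds the midpoint's distance by the left endpoint's distance (yielding the threshold inequality $R/(R-1) > 1/(2\sin\phi)$), whereas you use the midpoint's distance directly and obtain the slightly sharper explicit threshold $R < (1+\sin\phi)/(1-\sin\phi)$; you also handle the edge case where the perpendicular foot lies beyond the apex, which the paper omits---but these are refinements, not a different argument.
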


\begin{proof}
Let $xp$ and $xq$ be two adjacent segments at
angle $\phi$. Let vertices on $xp$ be placed
at points $p_0$, $p_1$, ..., $p_n$, $p_{n+1}$, and so on. 
Let $|pp_0| = l$.
By Lemma~\ref{lemma:R}, $p_0p_1 = \lambda_0 l$, where 
$1/R \le \lambda_0 \le R$. Similarly,
$|p_kp_{k+1}|=\lambda_0\lambda_1...\lambda_k l$, where
$1/R \le \lambda_i \le R$ for all $0\le i\le n$. 
The distance of the midpoint of $p_np_{n+1}$ from the line segment 
$xq$ is greater than the distance of $p_{n}$ from $xq$, which is
$(1 + \lambda_0 + \lambda_0\lambda_1 + ... + \lambda_0\lambda_1...\lambda_{n-1}) l \sin{\phi}$.
If this distance is greater than the radius $r$ of the diametral circle on $p_np_{n+1}$, 
$r = \frac{1}{2}|p_np_{n+1}| = \frac{1}{2}\lambda_0\lambda_1...\lambda_n l$,
no part of the adjacent segment $xq$ will be inside the diametral circle of 
subsegment $p_np_{n+1}$.  
We should find the  maximum $n$ such that
$$(1 + \lambda_0 + \lambda_0\lambda_1 + ... + \lambda_0\lambda_1...\lambda_{n-1}) l \sin{\phi} \le \frac{1}{2} \lambda_0\lambda_1...\lambda_n l.$$
Canceling $l$ and rearranging the equation above, 
$$\frac{1}{\lambda_0\lambda_1...\lambda_{n-1}} + ... + \frac{1}{\lambda_0} + 1 < \frac{1}{2\sin{\phi}}.$$
Clearly, $n$ is maximized when each of the terms in the LHS
is minimized, which implies that our value of $\lambda_i$ $\forall$ $i$
should be maximized, which is $R$.  In that case,
we have to maximize $n$ for which
$$1 + \frac{1}{R} + \frac{1}{R^2} + ... + \frac{1}{R^{n-1}} < \frac{1}{2\sin{\phi}}.$$
This is geometric progression, and it can be simplified to
$$\frac{R \left(1-\frac{1}{R^{n-1}}\right)}{R-1} < \frac{1}{2\sin{\phi}}.$$
For an empty diametral circle, we do not want the inequality above 
to hold.  If $R$ is large, the inequality always holds for any $n$ no
matter how large $n$ is. Thus, $R$ has to small enough 
(as a function of $\phi$) for the 
inequality to not hold. For a given $R$ that is adequately small,
the inequality only holds when $n$ is adequately small.
If $n$ is any larger, the inequality does not hold, and
the corresponding diametral circles of PSLG subsegments are empty. 
As $R$ tends to 1, the minimum $n$ for which the
inequality does not hold becomes progressively smaller. 
Thus, there are only finitely many subsegments for which
their diametral circle may contain a part of the adjacent segment
when $R$ is sufficiently small. 
\end{proof}

The proof above shows that $R$ has to be sufficiently
small to recover the PSLG segments for truly Delaunay
meshes. The recovery of the PSLG is not an issue for 
constrained Delaunay meshes as the recovery is enforced. 

Second, I will show that when the mesh is highly refined,
no vertex of the adjacent segment will be present inside
the diametral circles of a subsegment. The proof below 
exploits the fact that there are only finitely many 
subsegments whose diametral circle contains part 
of an adjacent segment. 
As $R$ tends to 1, the lengths of subsegments on the PSLG
become shorter, and those finite number of subsegments end up
progressively closer to the vertex with the small angle. Since
the LFS does not vary much near the vertex, those subsegments
tend to have the same length. As a result, the PSLG is recovered
for a sufficiently small $R$. 

\begin{figure}
\centering
\begin{subfigure}[b]{0.45\textwidth}
\begin{tikzpicture}[scale=7]
  \draw [line width=0.25mm] (0,0) -- (.8,0);
  \draw [line width=0.25mm] (0,0) -- (10:.8);
  \draw [fill=black] (0,0) circle (0.01);
  \draw [fill=black] (0:0.2) circle (0.01);
  \draw [fill=black] (0:0.4) circle (0.01);
  \draw [fill=black] (0:0.6) circle (0.01);
  \draw [fill=black] (0:0.8) circle (0.01);
  \draw [fill=black] (10:0.2) circle (0.01);
  \draw [fill=black] (10:0.4) circle (0.01);
  \draw [fill=black] (10:0.6) circle (0.01);
  \draw [fill=black] (10:0.8) circle (0.01);

  \draw (0:0) arc (180:0:0.1);
  \draw (0:0.2) arc (180:0:0.1);
  \draw (0:0.4) arc (180:0:0.1);
  \draw (0:0.6) arc (180:0:0.1);

  \node [below left] at (0,0) {$o$};
  \node [below] at (0.2,0) {$a_1$};
  \node [below right] at (0.8,0) {$a$};
\end{tikzpicture}
\caption{The Limiting Case}
\end{subfigure}
\centering
\begin{subfigure}[b]{0.45\textwidth}
\begin{tikzpicture}[scale=7, invclip/.style={insert path={(-1,-1) rectangle (2,2)}}]
  \draw [fill=black] (0,0) circle (0.01);
  \draw [fill=black] (0:0.2) circle (0.01);
  \draw [fill=gray] (0:0.4) circle (0.02);
  \draw [fill=gray] (0:0.6) circle (0.03);
  \draw [fill=gray] (0:0.8) circle (0.04);
  \draw [fill=gray] (10:0.2) circle (0.02);
  \draw [fill=gray] (10:0.4) circle (0.03);
  \draw [fill=gray] (10:0.6) circle (0.04);
  \draw [fill=gray] (10:0.8) circle (0.05);

  \draw [line width=0.25mm] (0,0) -- (.8,0);
  \draw [line width=0.25mm] (0,0) -- (10:.8);

  \draw (0:0) arc (180:0:0.1);
  \begin{scope}[on background layer]
     \begin{pgfinterruptboundingbox}
        \clip (0.29,0) circle (0.09) [invclip];
     \end{pgfinterruptboundingbox}
     \draw [pattern = north west lines] (0:0.2) arc (180:0:0.11);
  \end{scope}
  \draw (0:0.2) arc (180:0:0.11);
  \draw (0:0.2) arc (180:0:0.09);

  \begin{scope}[on background layer]
     \begin{pgfinterruptboundingbox}
        \clip (0.495,0) circle (0.075) [invclip];
     \end{pgfinterruptboundingbox}
     \draw [pattern = north west lines] (0:0.38) arc (180:0:0.125);
  \end{scope}
  \draw (0:0.38) arc (180:0:0.125);
  \draw (0:0.42) arc (180:0:0.075);

  \begin{scope}[on background layer]
     \begin{pgfinterruptboundingbox}
        \clip (0.695,0) circle (0.065) [invclip];
     \end{pgfinterruptboundingbox}
     \draw [pattern = north west lines] (0:0.57) arc (180:0:0.135);
  \end{scope}
  \draw (0:0.57) arc (180:0:0.135);
  \draw (0:0.63) arc (180:0:0.065);

  \node [below left] at (0,0) {$o$};
  \node [below] at (0.2,0) {$a_1$};
  \node [below right] at (0.8,0) {$a$};
\end{tikzpicture}
\caption{A Generic Case}
\end{subfigure}
\caption{
The diametral circles on PSLG subsegments at a small angle $\phi$. The
length of the subsegment $oa_1$, which is closest to $o$ on the 
horizontal PSLG segment $oa$, has been normalized and set to 1. 
(a) In the hypothetical limiting case when $R=1$, all subsegments 
have an equal length. In this case, the diametral circles do not contain
vertices from the adjacent segment.
(b) When $R>1$, there is a window for each vertex within which 
the vertices have to lie. The windows are shown as filled gray circles, 
but the vertices have to lie on the line segment. 
Consequently, the diametral circles 
have windows, too.  The windows are shown with a pattern of 
diagonal lines. 
When $R$ is sufficiently small, the windows of the vertices
on a segment and the windows of diametral circles of subsegments
on an  adjacent segment do not overlap. 
}
\label{fig:window}
\end{figure}
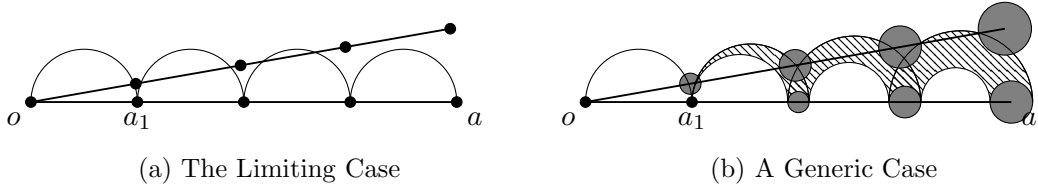

\begin{lemma}
\label{lemma:diawindow}
There exists an $R>1$ below which a diametral circle of a
subsegment does not contain vertices from an adjacent
segment. 
\end{lemma}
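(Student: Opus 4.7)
I plan to prove the lemma by combining the finiteness statement of Lemma~\ref{lemma:finite} with a continuity argument around the idealized limit $R = 1$, for the fixed small angle $\phi$.

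First, consider the hypothetical $R = 1$ case, in which all subsegments near $o$ have a common length; normalize to $1$. Placing $o$ at the origin and the first segment along the positive $x$-axis, the vertices on the two segments (meeting at angle $\phi$) lie at $(j, 0)$ and $(m\cos\phi, m\sin\phi)$ for $j, m \ge 0$, and the diametral circle of the $k$-th horizontal subsegment is centered at $(k - \tfrac{1}{2}, 0)$ with radius $\tfrac{1}{2}$. A direct computation gives
\[
d^2 - \tfrac{1}{4} \;=\; \bigl(m - (k - \tfrac{1}{2})\bigr)^{2} - \tfrac{1}{4} \;+\; 2m(k - \tfrac{1}{2})(1 - \cos\phi).
\]
For integer $m \ge 1$ and $k \ge 1$, the closest $m$ can come to $k - \tfrac{1}{2}$ is $\tfrac{1}{2}$, so the first two terms are nonnegative, while the third is strictly positive whenever $\phi > 0$. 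Hence in the $R = 1$ limit every adjacent-segment vertex (other than the shared endpoint $o$ itself, which is a vertex of both segments) sits strictly outside every diametral circle, with a gap $d^2 - \tfrac{1}{4}$ bounded below by a positive quantity depending only on $\phi$, $k$, and $m$.

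Second, by Lemma~\ref{lemma:finite}, for any $R$ below some initial threshold $R_1 > 1$ determined by $\phi$, only finitely many subsegments on each segment have diametral circles that can meet the adjacent segment at all; correspondingly, only finitely many adjacent-segment vertices lie in the relevant strip. Call these the \emph{active pairs}; their total count is bounded in terms of $\phi$ alone (independent of $R$ for $R \le R_1$). By Lemma~\ref{lemma:R}, each subsegment length in the active region is determined by a short cumulative product of ratios $\lambda_i \in [1/R, R]$ on the two segments, so the vertex position and the center and radius of each diametral circle depend continuously on a bounded tuple $\lambda \in [1/R, R]^N$. The function $G(\lambda) := d^2 - r^2$ is continuous in $\lambda$; at $\lambda_i = 1$ we showed $G > 0$; and the cube $[1/R, R]^N$ shrinks to $\{1\}^N$ as $R \to 1$. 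So by continuity and compactness, $\min_\lambda G > 0$ for each active pair once $R$ is sufficiently close to $1$. Taking the minimum of these per-pair thresholds over the finite collection of active pairs (and staying below $R_1$) yields the desired $R > 1$.

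The main obstacle is making the continuity/compactness step uniform across all admissible splitting patterns on both segments and across all active pairs simultaneously; the saving grace is that both the number of active pairs and the number of relevant $\lambda_i$'s are bounded purely in terms of $\phi$ (once $R \le R_1$), so everything reduces to a finite-dimensional compactness argument on a shrinking cube. The limiting computation establishing the strict positive gap is the technical heart of the proof, but the half-integer parity of $k - \tfrac{1}{2}$ against integer $m$ makes it clean.
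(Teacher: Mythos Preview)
Your proof is correct and follows essentially the same route as the paper: reduce to the idealized $R=1$ configuration, invoke Lemma~\ref{lemma:finite} to restrict attention to finitely many subsegment--vertex pairs, and then argue that the positive gap at $R=1$ persists for $R$ sufficiently close to~$1$. The paper phrases the last step in terms of shrinking ``windows'' of possible vertex and diametral-circle positions (computed by setting all $\lambda_i$ to $1/R$ and then $R$), whereas you phrase it as continuity and compactness on the cube $[1/R,R]^N$; these are the same argument in different clothing, and your explicit parity computation $(m-(k-\tfrac12))^2 \ge \tfrac14$ for the $R=1$ case is in fact more rigorous than the paper's appeal to Fig.~\ref{fig:window}(a).
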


\begin{proof}
Hypothetically, when $R=1$, all subsegments in adjacent
segments are of equal length (a consequence of 
Lemma~\ref{lemma:R}). If we increase $R$,
the window inside which a vertex and, therefore, a 
diametral circle lie grows (see Fig.~\ref{fig:window})
because the ratio of lengths of adjacent
subsegments lies between $1/R$ and $R$, so the possible 
locations of every vertex grow with $R$. 
In the proof of Lemma~\ref{lemma:finite} above,
by setting all $\lambda_i$ as $1/R$ and then $R$, it
is possible to compute the range of the window for every
vertex on the PSLG line segment. 
We saw in Lemma~\ref{lemma:finite} that we have to consider
only finitely many vertices (as a function of $\phi$).  
As we reduce $R$, there will be some value at which
the windows do not intersect.  
Below that $R$, the diametral circles are empty. 
Note the length
of $oa_1$ in the diagrams in Fig.~\ref{fig:window} 
has been normalized and set to $1$ (because this length
changes as $R$ changes), but the argument in this 
proof still holds. 
\end{proof}

\subsubsection{Truly and Constrained Delaunay Refinement:\\ Minimum and Maximum Angles}
I will first show the bounds on the minimum and the maximum
angle in triangles across a small angle. These bounds are
applicable for both truly and constrained Delaunay meshes.
First, I will show the bounds for the minimum angle.  The proof
presented here is almost identical to the one presented
by Pav et al.~\cite{MPW05,P03} in their proof for the bound on the
minimum angle.  

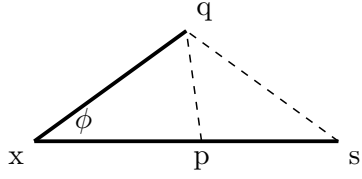
\begin{figure}
\centering
\begin{tikzpicture}[scale=4]
    \draw [line width=0.50mm] (0,0) -- (1,0);
    \draw [line width=0.50mm] (0,0) -- (0.5,0.366);
    \draw [line width=0.20mm, dashed] (1,0) -- (0.5,0.366);
    \draw [line width=0.20mm, dashed] (0.55,0) -- (0.5,0.366);
    \node [below left] at (0,0) {x};
    \node [below] at (0.55,0) {p};
    \node [below right] at (1,0) {s};
    \node [above right] at (0.5,0.366) {q};
    \node [above right] at (0.10,-0.01) {$\phi$};
\end{tikzpicture}
\caption{
A part of a PSLG is shown as thick lines. The thin,
dashed lines are for reference. We assume
$|xp|\le|xq|$ and that all segments in the PSLG 
have been split using at least one vertex.
}
\label{fig:smallangle}
\end{figure}

\begin{theorem}
\label{theorem:minangle}
As the PSLG is progressively refined, the minimum
angle in the resulting meshes from the Delaunay 
refinement of progressively refined segments tends to 
$\arctan{\left(\frac{sin{\phi}}{2-\cos{\phi}}\right)}$,
where $\phi$ is the minimum angle in the PSLG. 
\end{theorem}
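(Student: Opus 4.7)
The plan is to isolate where the minimum angle in the mesh can fail to be close to $\theta^*$, reduce the question to a single limiting configuration, and then perform one trigonometric calculation. By the analysis of Section 5.2, every triangle that is \emph{not} across a small angle has its minimum angle bounded below by a quantity tending to $\theta^*$ as $A^*\to\infty$, so the bottleneck is a skinny triangle across a small input angle $\phi$ at some PSLG vertex $x$, whose shortest edge $pq$ joins a vertex $p$ on one segment of $x$ to a vertex $q$ on the other. Let the third vertex of the triangle be $s$; by the Delaunay property and the prioritisation of shortest edges, $s$ is forced to be one of the neighbouring subsegment endpoints along one of the two segments emanating from $x$ (after symmetrising we may assume $s$ lies on the segment containing $p$, with $|xs|>|xp|$).

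Next I would use the refinement hypothesis to pin down the shape of $xpqs$ in the limit. By Lemma~\ref{lemma:R}, the ratio of the lengths of adjacent subsegments is at most $R$, and by Lemma~\ref{lemma:rboundsb} the value of $R$ can be driven arbitrarily close to $1$ by refining further. Applied to the two subsegments $xp$ and $xq$ (and then to $xp$ and $ps$), this forces $|xp|/|xq|\to 1$ and $|ps|/|xp|\to 1$ as the PSLG is refined. So the asymptotic configuration is an isoceles ``kite'': $|xp|=|xq|=|ps|=a$ with $\angle pxq=\phi$.

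Now place $x$ at the origin, the first segment on the positive $x$-axis, so that $p=(a,0)$, $s=(2a,0)$ and $q=(a\cos\phi,a\sin\phi)$. A direct computation gives
\[
|pq|=2a\sin(\phi/2),\qquad |ps|=a,\qquad |qs|=a\sqrt{5-4\cos\phi},
\]
and the law of cosines applied at the vertex $s$ yields
\[
\cos(\angle psq)=\frac{2-\cos\phi}{\sqrt{5-4\cos\phi}},\qquad \sin(\angle psq)=\frac{\sin\phi}{\sqrt{5-4\cos\phi}},
\]
so $\tan(\angle psq)=\sin\phi/(2-\cos\phi)$. Since $\angle psq$ is opposite the shortest side $ps$ of the triangle $pqs$, it is the minimum angle of that triangle; it equals $\arctan\!\bigl(\sin\phi/(2-\cos\phi)\bigr)$ in the limit. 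Combined with the $\theta^*$ lower bound for all other triangles in the mesh, this gives the stated limit.

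The main obstacle I anticipate is justifying the reduction to the specific triangle $pqs$: one must check that the Delaunay (or constrained Delaunay) triangulation of the vicinity of the corner $x$ indeed produces a triangle with $pq$ as its shortest edge and with the third vertex adjacent to $p$ on the same segment, and that this triangle is ``ignored'' in the sense of Section 4.3 (its shortest edge lies below the threshold so it is not split). For this I would use that both $p$ and $q$ are the vertices of smallest index on their respective segments at the corner, so no competing nearby Steiner vertices exist before the first split near $x$; the remaining possibility (third vertex on the $q$-segment instead) is congruent by the limiting symmetry and yields the same angle. The rest of the argument is the routine trigonometric calculation performed above.
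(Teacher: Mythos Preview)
Your approach differs from the paper's in a genuine way. You take the limit first --- drive $R\to 1$ so that the configuration near the small angle becomes the symmetric ``kite'' with $|xp|=|xq|=|ps|=a$ --- and then compute $\angle psq$ directly via coordinates and the law of cosines. The paper instead keeps $R>1$ general, applies the sine rule in the triangle $xqs$ to obtain the inequality $\tan(\angle psq)\ge \sin\phi\big/\bigl((1+R)/R - \cos\phi\bigr)$, and only then lets $R\to 1$. Both routes reach the same limiting value, and your trigonometric calculation is correct (modulo a slip: $\angle psq$ is opposite the shortest side $pq$, not $ps$). The paper's route has the advantage of giving an explicit lower bound at every finite stage of refinement, not merely the limiting value.

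The substantive gap is precisely the step you flag as your ``main obstacle'': you assert that the third vertex of the skinny triangle on $pq$ is \emph{forced} to be the neighbouring subsegment endpoint $s$. This need not hold --- the third vertex could well be a Steiner vertex inserted during refinement --- and the paper does not attempt to prove it. Instead, the paper sidesteps the issue entirely by invoking the max--min angle property of Delaunay triangulations: since $s$ is a mesh vertex on the same side of $pq$, whatever Delaunay triangle actually sits on $pq$ has minimum angle at least $\angle psq$, irrespective of the identity of its third vertex. Your argument becomes complete (and in fact cleaner than your proposed fix) if you replace the ``$s$ is forced'' claim with this Delaunay comparison; without it, the limiting computation alone does not control the minimum angle of the actual mesh triangle at finite refinement.
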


\begin{proof}
Consider a part of the split PSLG shown in Fig.~\ref{fig:smallangle}.  
The subsegments $xp$, $ps$, and $xq$ are part of PSLG with a
small angle $\phi$ at $x$. Note that we have to split
all PSLG segments with at least one vertex each 
for the lemma to hold.  Without loss of generality, 
let us assume that $|xp|\le|xq|$. 
By sine rule,
$$\frac{|xs|}{\sin{(\angle xqs)}} = \frac{|xq|}{\sin{(\angle psq)}}.$$ 
Thus, $\sin{(\angle psq)} = \frac{|xq|sin{(\angle xqs)}}{|xs|}$, 
which implies $\sin{(\angle psq)} = \frac{|xq|}{|xp|+|ps|}\sin{(\angle xqs)}$. 
The minimum value of $|xq|$ is $|xp|$ (our assumption), and the maximum value
of $|ps|$ is $R|xp|$ (see Lemma~\ref{lemma:R}).
Therefore,
$$\sin{(\angle psq)} \ge \frac{R|xp|}{(R+1)|xp|}\sin{(\angle xqs)}.$$
Now, $\sin{(\angle xqs)} = \sin{(\pi - \angle qxs- \angle qsx)} = 
\sin{(\angle qxs + \angle qsx)}$.  Therefore, assuming $\angle psq$ 
is acute,
\begin{align*}
\sin{(\angle psq)} &\ge \left(\frac{R}{1+R}\right)\sin{(\angle qxs + \angle qsx)}\\
\implies \left(\frac{1+R}{R}\right)\sin{(\angle psq)} &\ge \sin{(\angle qxs + \angle qsx)}\\
\implies\left(\frac{1+R}{R}\right)\sin{(\angle psq)} &\ge \sin{(\angle qxs)} \cos{(\angle qsx)} + \cos{(\angle qxs)} \sin{(\angle qsx)}\\
\implies\sin{(\angle psq)} &\ge \frac{\sin{(\angle qxs)}\cos{(\angle qsx)}}{\left(\frac{1+R}{R}\right) - \cos{(\angle qxs)}}\\
\implies\tan{(\angle psq)} &\ge \frac{\sin{(\angle qxs)}}{\left(\frac{1+R}{R}\right) - \cos{(\angle qxs)}}\\
\implies\tan{(\angle psq)} &\ge \frac{\sin{(\phi)}}{\left(\frac{1+R}{R}\right) - \cos{(\phi)}}. 
\end{align*}
Let us assume $pq$ is the shortest side of a skinny triangle.
In my Delaunay refinement algorithm, I ignore skinny triangles
on the edge $pq$ since the edge is across a small angle.  
Since the Delaunay triangulation algorithm lexicographically
maximizes the angles among all possible triangulations of
given a set of vertices, any triangle formed on $pq$ on the 
same side as $s$ will have a minimum angle greater than or 
equal to $\angle psq$. As the PSLG is progressively refined, 
the value of $R$ tends to 1.  Thus, the lemma holds. 
\end{proof}

My bounds are slightly weaker than those of Pav et 
al.'s~\cite{MPW05,P03} in that my bounds approach
their bounds only in the limit. It is possible 
to get the same bound as theirs by simply splitting 
all PSLG subsegments into two equal parts after the
ODE-based splits. As the analysis of this slightly
changed algorithm does not provide any additional
insight into Delaunay mesh refinement, I have
focused only on the original algorithm in 
this paper. 

In the next theorem below, I provide the
bounds for the maximum angle.  In 
Lemma~\ref{lemma:diawindow}, I used the fact
that all subsegments near a vertex with 
a small angle tend to have the same length as 
$R$ tends to 1. While that is true, in the
proof below, I will argue only that
pairs of subsegments on adjacent PSLG segments
tend to have the same length. Those pairs
of subsegments are the two subsegments adjacent
to the vertex with the small angle, the two subsegments
next to them, and so on. The reason for presenting 
two slightly different ideas is to provide 
readers with additional insights. 

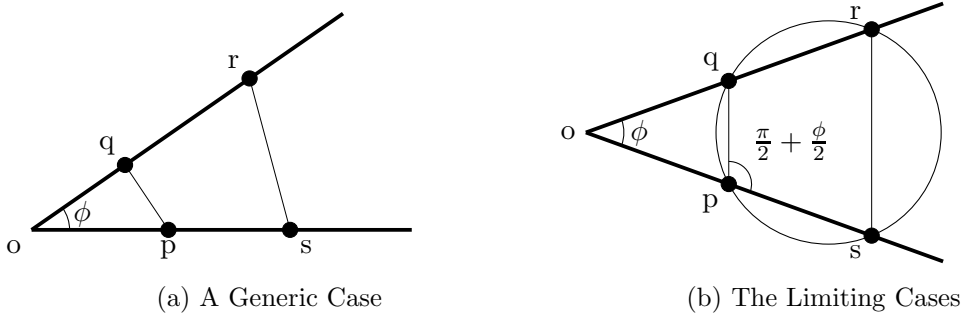
\begin{figure}
\centering
\begin{subfigure}[b]{0.45\textwidth}
\begin{tikzpicture}[scale=1]
    \draw [line width = 0.5mm] (0,0) -- (5,0);
    \draw [line width = 0.5mm] (0,0) -- (35:5);
    \draw [line width = 0.1mm] (0:1.8) -- (35:1.5);
    \draw [line width = 0.1mm] (0:3.4) -- (35:3.5);
    \draw [fill=black] (0:1.8) circle (1mm);
    \draw [fill=black] (0:3.4) circle (1mm);
    \draw [fill=black] (35:1.5) circle (1mm);
    \draw [fill=black] (35:3.5) circle (1mm);
    \draw (0:0) ++(0:0.5) arc (0:35:0.5);

    \node [below left] at (0,0) {o};
    \node [below] at (0:1.8) {p};
    \node [below right] at (0:3.4) {s};
    \node [above left] at (35:1.5) {q};
    \node [above left] at (35:3.5) {r};
    \node at (17.5:0.7) {$\phi$};
\end{tikzpicture}
\caption{A Generic Case}
\end{subfigure}
\centering
\begin{subfigure}[b]{0.45\textwidth}
\begin{tikzpicture}[scale=1]
    \draw [line width = 0.5mm] (0,0) -- (20:5);
    \draw [line width = 0.5mm] (0,0) -- (-20:5);
    \draw [line width = 0.1mm] (20:2) -- (-20:2);
    \draw [line width = 0.1mm] (20:4) -- (-20:4);
    \draw (0:3.1925) circle (1.4806);

    \draw [fill=black] (-20:2) circle (1mm);
    \draw [fill=black] (-20:4) circle (1mm);
    \draw [fill=black] (20:2)  circle (1mm);
    \draw [fill=black] (20:4)  circle (1mm);

    \node [left] at (0,0) {o};
    \node [below left] at (-20:2) {p};
    \node [below left] at (-20:4) {s};
    \node [above left] at (20:2) {q};
    \node [above left] at (20:4) {r};
    \draw (0:0) ++(-20:0.5) arc (-20:20:0.5);
    \draw (-20:2) ++(90:0.3) arc (90:-20:0.3);
    \node at (0:0.7) {$\phi$};
    \node at ($(-20:2)+(35:1)$) {$\frac{\pi}{2} + \frac{\phi}{2}$};
\end{tikzpicture}
\caption{The Limiting Cases}
\end{subfigure}
\caption{
The bound on the maximum angle in the triangle 
across a small angle. The thick lines segment are
part of the input PSLG. (a) This is a generic case that
will be seen when the PSLG is not highly refined.
Notice that $|pq|$ is less than any segment 
adjacent to $p$ or $q$. 
(b) When the PSLG is highly refined, the end points
of mesh edges tend to vertex positions as shown, i.e., 
they will move toward positions such that
$|ps|=|qr|$. The bound on the angle is 
$\pi/2+\phi/2$. 
In contrast to the other diagrams in this paper, 
no PSLG segment is horizontal in this diagram
because there is symmetry along the horizontal
line, which is easy to observe when the PSLG
segments are rotated.  
}
\label{fig:maxangle}
\end{figure}

\begin{theorem}
As the PSLG is progressively refined, the maximum
angle in the resulting meshes from the Delaunay 
refinement of progressively refined segments tends to 
$\pi/2 + \phi/2$,
where $\phi$ is the minimum angle in the PSLG. 
\end{theorem}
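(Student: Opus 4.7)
The plan is to analyze the triangle immediately across the small input angle in the limiting refined configuration. Let $o$ be a PSLG vertex with input angle $\phi$, and let $p, s$ be the two closest subsegment endpoints on one of the two segments through $o$, with $q, r$ the corresponding points on the other, as in Fig.~\ref{fig:maxangle}. The skinny triangle that my algorithm ignores across the small angle has $pq$ as its shortest edge, and the triangle that eventually realizes the maximum angle sits on the side of $pq$ away from $o$.

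First I would pin down the limiting geometry. Applying Lemma~\ref{lemma:R} at the shared endpoint $o$ gives $|op|/|oq| \in [1/R, R]$, and applying it at $p$ and at $q$ gives $|ps|/|op|, |qr|/|oq| \in [1/R, R]$. Hence as the PSLG is progressively refined the ratios $|op|/|oq|$ and $|ps|/|qr|$ both tend to $1$, and the four points become symmetric about the bisector of $\angle qop$. Setting coordinates with $o$ at the origin and the two segments at angles $\pm \phi/2$, and writing $a = |op|$, $b = |os|$, a direct computation shows that in this symmetric limit the four vertices lie on a common circle centred at $((a+b)/(2\cos(\phi/2)), 0)$, which is the configuration drawn in Fig.~\ref{fig:maxangle}(b); this is precisely the degenerate Delaunay case where either diagonal of the quadrilateral $psrq$ is admissible.

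Next I would compute the angle. In the symmetric case the triangle $\triangle opq$ is isosceles with $|op| = |oq|$, so $\angle opq = (\pi - \phi)/2$; since $o$, $p$, $s$ are collinear on a single PSLG segment, $\angle qps = \pi - \angle opq = \pi/2 + \phi/2$. A direct vector computation using the coordinates above, yielding $\cos(\angle qps) = -\sin(\phi/2)$, gives the same value and serves as verification. By the symmetry $p \leftrightarrow q$, $s \leftrightarrow r$, the alternative Delaunay diagonal produces $\triangle pqr$ with $\angle pqr = \pi/2 + \phi/2$, so either choice of diagonal exhibits the same limiting obtuse angle.

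The main obstacle is justifying that some mesh triangle actually realizes an angle arbitrarily close to $\pi/2 + \phi/2$, rather than that $\pi/2 + \phi/2$ is merely an upper bound on some quantity. What must be ruled out is that the third vertex of the Delaunay triangle sharing the edge $pq$ on the far side of $o$ could be an interior Steiner vertex or a vertex on the adjacent segment further from $o$ than $r$, either of which might give a strictly smaller angle at $p$. I would handle this by invoking the no-encroachment guarantees of Section 5.2: once $A^*$ is large enough for Lemmas~\ref{lemma:diacondition} or~\ref{lemma:diaskinny} to hold, no Steiner vertex can lie inside the diametral circle on $ps$ (respectively $qr$), and the feature-size bound $|ps| \leq \mathrm{LFS}(p)/A^*$ localizes candidate third vertices to a neighborhood collapsing onto $s$ or $r$ as $R \to 1$. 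Combined with the angle computation above, this forces the realized mesh angle at $p$ (or at $q$) to tend to $\pi/2 + \phi/2$, completing the proof.
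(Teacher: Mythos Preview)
Your limiting-geometry setup is correct and is essentially the paper's argument: the paper uses the Lipschitz property of $\mathrm{LFS}$ together with the bounds $A^*\le \mathrm{LFS}/l\le B^*$ to show $|qr|/|ps|\to 1$ as the refinement increases, while you chain Lemma~\ref{lemma:R} at the shared vertices $o$, $p$, $q$ to the same effect. Your computation that $\angle qps=\pi/2+\phi/2$ in the symmetric configuration is also correct, and the cocircularity of $p,q,r,s$ is a nice observation the paper does not make.

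The gap is in what you think remains to be proved. You read the theorem as a \emph{realization} statement---that some mesh triangle achieves an angle arbitrarily close to $\pi/2+\phi/2$---and devote your final paragraph to arguing that the Delaunay neighbor of $pq$ is actually $s$ or $r$. The paper (cf.\ the abstract: ``the bound on the maximum angle \ldots\ tends to $\pi/2+\phi/2$'') is proving an \emph{upper bound}: for any third vertex $v$, the maximum angle of the ignored triangle $\triangle pqv$ is at most $\max(\angle qps,\angle pqr)$. The two ingredients you are missing are: (i) since $pq$ is the \emph{shortest} side of the ignored triangle, the angle opposite it (at $v$) is the smallest, so the maximum angle must sit at $p$ or at $q$; and (ii) since $v$ lies in the wedge between the two PSLG rays on the far side of $pq$, one has $\angle qpv\le\angle qps$ and $\angle pqv\le\angle pqr$. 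These two observations, together with your symmetric-limit calculation, give the bound directly. Your localization argument via the no-encroachment lemmas is aimed at the harder (and, for this theorem, unnecessary) task of pinning down $v$ exactly, and as written it is not rigorous: those lemmas control the diametral circles of the subsegments $ps$ and $qr$, not the interior of the quadrilateral $pqrs$ where a Steiner vertex could still sit.
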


\begin{proof}
Consider Fig.~\ref{fig:maxangle}(a), in which
two segments of a PSLG, $os$ and $or$, meet at a 
small angle $\phi$. Let $pq$ be the shortest
edge of a skinny triangle whose length is
smaller than the threshold.  Since the triangle
on $pq$ is not refined, the maximum angle is bounded 
by $\angle spq$, $\angle rqp$, $\angle xpq$, or 
$\angle xqp$.  Even if $pq$ forms a triangle with
some other vertex, its maximum angle will be smaller 
than one of the four angles. Let us consider only 
$\angle spq$ and $\angle rqp$ for now.  Without
loss of generality, consider $\angle spq$. 
Let the length of $ps$ 
be $l$.  The LFS at $p$ is at most $B^*l$. 
If $\triangle spq$ is the skinny triangle that
is ignored by my algorithm, 
$|pq| < \mathrm{LFS}(p)/B^* \le |ps| = l$.  
Since the LFS is Lipschitz function, 
$\mathrm{LFS}(q) \le \mathrm{LFS}(p) + |pq|$, which implies
$\mathrm{LFS}(q) \le B^*l + |pq|$ (because $\mathrm{LFS}(p)\le B^*l $), which implies
$\mathrm{LFS}(q) \le B^*l + l$ (because $|pq|<l $). 
As $|qr|$  is asymptotically size optimal, 
$A^*|qr| \le \mathrm{LFS}(q) \le B^*l + l$, which implies
$|qr| \le \frac{B^*}{A^*}l + \frac{l}{A^*}$.  We can
similarly prove that 
$|qr| \ge \frac{A^*}{B^*}l - \frac{l}{B^*}$ by applying
$\mathrm{LFS}(q) \ge \mathrm{LFS}(p) - |pq|$.  The two
results above can be simplified to
$$\left(R - \frac{1}{B^*}\right)l \le |qr| \le \left(R + \frac{1}{A^*}\right)l.$$
As we refine the PSLG, although the value of $l$ and
the location of the vertices change,
$R$ tends to 1 and $A^*$ and $B^*$ tends to infinity, 
which implies that the lengths of $ps$ and $qr$ both
tend to the same value. Since the pairs of subsegments
tend to have the same length as $R$ tends to 1,
we get the limiting case in Fig.~\ref{fig:maxangle}(b),
where $p$ and $q$ ($r$ and $s$, also) are equidistant 
from $o$. Thus, the maximum angle tends to $\pi/2 + \phi/2$.  
Also note that $\angle xqp$ and $\angle xpq$ are acute in the 
limiting case, and so their magnitude is below the bound.  
\end{proof}

In the appendix, I have shown that a small angle may be
defined as any angle $\phi<\arccos{\frac{1}{2R}}$.  As
we progressively refine the mesh, $R$ tends to $1$, the
threshold on the small angle tends to $\pi/3$, so
the maximum angle tends to 
$\pi/2+\phi/2 = \pi/2 + \frac{\pi}{3}/2 = 2\pi/3$.

Summarizing the results, we have shown the following 
for truly and constrained Delaunay mesh refinement:
As we progressively refine the PSLG segments and then
refine the Delaunay triangulation of the refined 
segments,
\begin{enumerate}
\item The minimum angle in the mesh may be improved to 
$\theta^*<\arctan{\left(\frac{sin{\phi}}{2-\cos{\phi}}\right)}$, where
$\phi$ is the smallest angle in the PSLG. 
\item The minimum angle in triangles that are not 
across a small angle may be improved to $\theta^*<\pi/6$.
\item The maximum angle in the mesh may be less than
some angle strictly greater than $2\pi/3$.  
\end{enumerate}
The meshes are size optimal because the lengths
of edges depends only on the feature size or
magnitude of the angles in the PSLG.  For the same
quality of the mesh, the truly
Delaunay meshes have more vertices, edges, and elements
than constrained Delaunay meshes.

\section{Discussion}
This paper improves upon the results by
Shewchuk~\cite{S97} and Pav~\cite{P03}
mainly through the use of the
ODE-based algorithm, which distributes vertices
on the PSLG such that the subsegments are
size optimal.  The lemmas, theorems, and 
their proofs are algebraic consequences
of the algorithm. In many proofs, 
the ratio $R$ of the upper and lower bound 
on the constant associated with the size optimality 
of split subsegments plays a major role. 
In Chew's algorithm, as the PSLG segments are
split such that their lengths are nearly
identical no matter how large the LFS is, 
the value of $A^*$ and $B^*$ 
in the algorithm might be very high. 
My ODE-based algorithm provides a way
to modify $A^*$ and $B^*$ by simply increasing or 
decreasing the number of splits in the PSLG segment with
the shortest reference segment. The rest
of the algorithms in this paper are identical 
to the ones discussed in Section 2.  All of them
split PSLG segments on-the-fly, which 
does not result in meshes of high quality as
my algorithm yields. 

I use the off-center Steiner vertex insertion 
algorithm with shortest edge prioritization
because it yields smaller meshes in
a practical implementation. This may not
be necessary. My analysis might be
generalized (as in~\cite{CC09,FCC10})
to Steiner vertex insertion at any point 
within the circumcircle such that it is at a 
certain minimum distance (normalized with respect
to the shortest edge of the skinny triangle)
from all the existing vertices in the mesh, but
more research is needed. In addition, a time complexity 
analysis of the algorithm should also be carried out
as it has been done for prior algorithms by
Miller~\cite{M04} and Har-peled and 
\"{U}ng\"{o}r~\cite{HU05}. 

There are several directions for future research, but
none of those directions are straightforward.  
My algorithm may be extended to 3D (or higher) as
Chew's algorithm was extended by Dey, Bajaj, and
Sugihara~\cite{DBS91}. Such an algorithm
will involve solving the problem of constructing
2D meshes on facets in a piecewise
linear complex (PLC) such that the lengths of
the edges are asymptotically proportional
to the LFS at their end points. In my algorithm,
this problem was easily solved using the solution
of the ODE. It is challenging to solve this on
a 2D facet of any shape. 

Another area of research is the generation of
high-order meshes. Currently, only heuristic
algorithms exist to obtain high-order meshes. 
Typically, only the first few layers (from the
boundary) of a high-order mesh are curved.  Since
my algorithm constructs a mesh in an advancing-front 
fashion, it can, perhaps, be extended to construct
high-order 2D meshes with guaranteed quality. 

There are quadrilateral mesh generators, known
as Q-Morph~\cite{OSCS99}, which use a 
triangular mesh as a point-location data structure
to construct a quadrilateral mesh in an
advancing-front fashion. They also use the 
triangular mesh as a guide to estimate the 
size of quadrilateral elements in all parts of
the domain. My algorithm could provide a stepping
stone towards guaranteed-quality advancing-front
quadrilateral mesh generation.


\newpage
\begin{appendix}
\section{Defining Small Angle}
The following lemma provides the justification 
for defining a small angle as any angle 
less than $\arccos{(1/(2R))}$.

\begin{lemma}
\label{lemma:rminangle}
If the angle between two adjacent segments with a
common vertex $o$ is
$\phi>\arccos{(1/(2R))}$, the distance from a vertex $a$
on one of the segments to a vertex on the adjacent
segments is greater than the length of one of the
subsegments adjacent to $a$ that is closer to $o$.
\end{lemma}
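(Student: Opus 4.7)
The plan is to use the law of cosines. Place $o$ at the origin with the first PSLG segment along the positive $x$-axis, set $d_a = |oa|$ and $d_b = |ob|$ for a vertex $b$ on the adjacent segment, and write
\[
|ab|^2 = d_a^2 + d_b^2 - 2 d_a d_b \cos\phi.
\]
Let $s$ denote the length of the subsegment adjacent to $a$ on the side closer to $o$, so that $d_a \ge s$. The hypothesis $\phi > \arccos(1/(2R))$ is equivalent to $\cos\phi < 1/(2R)$; this and Lemma~\ref{lemma:R} will be the only inputs.

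I would split on whether $a$ is the immediate neighbor of $o$. If it is not, then the subsegment of length $s$ has its far-from-$a$ endpoint at some interior vertex $a'\ne o$, and a further subsegment of length at least $s/R$ (by Lemma~\ref{lemma:R}) lies between $a'$ and $o$. This yields $d_a \ge s(R+1)/R$. Since $|ab|$ is at least the perpendicular distance $d_a \sin\phi$ from $a$ to the line through the adjacent segment, it suffices to verify $d_a \sin\phi > s$, i.e.\ $\sin\phi > R/(R+1)$. Squaring and using $\sin^2\phi > 1 - 1/(4R^2)$, this reduces to $8R^3 + 3R^2 - 2R - 1 > 0$, which is clear for $R \ge 1$.

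The remaining case is $a$ adjacent to $o$ with $d_a = s$, where the perpendicular-distance bound only gives $|ab| \ge s \sin\phi \le s$. I would instead apply Lemma~\ref{lemma:R} directly to the two subsegments sharing $o$: the first subsegment of the adjacent segment has length at least $s/R$, so every vertex $b \ne o$ on that segment satisfies $d_b \ge s/R > 2s\cos\phi$, where the strict inequality uses $\cos\phi < 1/(2R)$. Plugging into the law of cosines gives $|ab|^2 = s^2 + d_b(d_b - 2s\cos\phi) > s^2$. The main content is the algebraic verification $\sqrt{1 - 1/(4R^2)} > R/(R+1)$ in the first case, which is what makes $\arccos(1/(2R))$ the exact threshold for this argument; a minor subtlety is that when $a$ is adjacent to $o$ and $b = o$ one has $|ab| = s$, so the lemma is tacitly excluding the shared endpoint $o$ from the adjacent segment's vertex set.
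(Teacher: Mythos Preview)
Your proof is correct and follows essentially the same two-case strategy as the paper: treat the vertex adjacent to $o$ via the law of cosines together with Lemma~\ref{lemma:R} applied to the two subsegments sharing $o$, and treat all other vertices via the perpendicular distance $d_a\sin\phi$ together with the bound $d_a \ge s(R+1)/R$. Your explicit reduction of the perpendicular-distance case to the polynomial inequality $8R^3+3R^2-2R-1>0$ is cleaner than the paper's somewhat informal computation of the critical ratio $k$, and your remark that the shared endpoint $o$ must be excluded to get a strict inequality is a detail the paper glosses over.
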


\begin{proof}
Consider vertex $o$, which is an end point of two
line segments in the PSLG (see Fig.~\ref{fig:adjacent}(a)). 
Let the segments be $oa$ and $ob$.  Let the vertices
closest to $o$ on $oa$ and $ob$ be $a_1$ and $b_1$,
respectively.  The ratio $oa_1/ob_1$ is greater than 
or equal to $1/R$ and less than or equal to $R$. 
If $|a_1b_1| \ge |oa_1|$,
then $|ob_1| \ge 2|oa_1|\cos{\phi}$ (and vice versa). 
Thus, if $\cos{\phi} \le |ob_1|/2|oa_1| \le 1/(2R)$ (because
$|ob_1|/|oa_1|$ is at least $1/R$), $|a_1b_1| \ge |oa_1|$.
In addition, other vertices on $ob$ are at a greater
distance from $a_1$ than is $b_1$. Thus, we have
proved the lemma for the closest vertex $a_1$ (the same
argument holds for $b_1$ as well). 

I will now consider vertices on $oa$ and $ob$ that are
not the closest to $o$ on their respective segments. 
Consider the vertex $a_2$ that is closest to $a_1$
on the line segment $a_1a$ (see Fig.~\ref{fig:adjacent}(b)). 
Because $a_1a_2$ and
$oa_1$ have a common vertex $a_1$, 
$\frac{|a_1a_2|}{|oa_1|} \le R$.  
The point on $ob$ that is closest to $a_2$ is
the perpendicular projection of $a_2$ on $ob$
at, say, $m$.  The point $m$ is at a distance 
$oa_2\sin{\phi}$ from $a_2$.   
If this distance is less than $|a_1a_2|$, the lemma 
is proved for vertex $a_2$ (sufficient condition; not
a necessary condition).  Let us find the location
of $a_2$ where $|a_1a_2| = |a_2m|$.
Let $|a_1a_2| = k|oa_1|$.
Clearly,
$$\sin{\phi} = \frac{|a_2m|}{|a_1a_2|+|oa_1|} = 
\frac{|a_1a_2|}{|a_1a_2|+|oa_1|} = 
\frac{k|oa_1|}{k|oa_1|+|oa_1|} = \frac{k}{1+k},$$
which implies $k=\frac{1}{1-\sin{\phi}}$, but
$k>R=\frac{1}{2\cos{\phi}}$ for $0<\phi<\pi/2$.
Therefore, $a_2$ will not be placed such that 
its distance from any vertex on $ob$ is less 
than $a_1a_2$. 

The argument made for $a_2$ can be extended to $a_3$,
$a_4$, and so on. Note that 
$\frac{|a_ia_{i+1}|}{|a_{i-1}a_i|} \le R \Rightarrow
\frac{|a_ia_{i+1}|}{|oa_i|} \le R$. From here, the 
argument made in the previous paragraph holds for
all vertices on the PSLG segment $oa$.
\end{proof}

Lemma~\ref{lemma:rminangle} shows that a segment that joins
a vertex $a$ to vertices on adjacent segments has a length 
greater than $\mathrm{LFS}(p)/B^*$.  Thus, $\phi$ is not 
a small angle even if $\phi<\pi/2$. 
Note that $R$ is strictly greater than 1, so any angle
$\phi\le\pi/3$ is a small angle. 

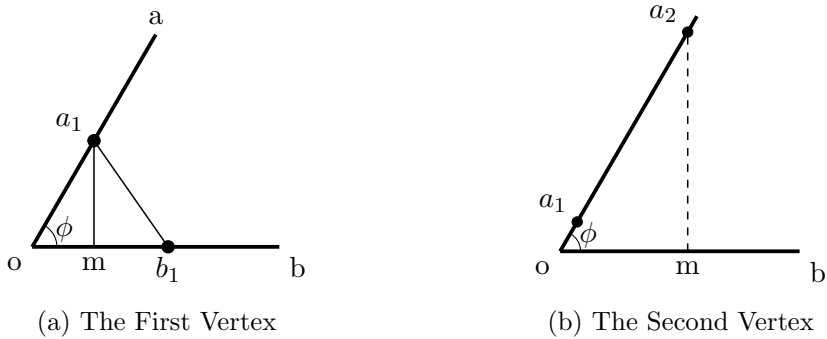
\begin{figure}[b]
\begin{subfigure}[b]{0.45\textwidth}
\centering
\begin{tikzpicture} [scale=3.25]
    \draw [line width=0.50mm] (0,0) -- (1,0);
    \draw [line width=0.50mm] (0:0) -- (60:1);
    \draw [line width=0.20mm] (0.55,0) -- (60:0.5);
    \draw [line width=0.20mm] (0.25,0) -- (60:0.5);
    \draw [fill = black] (0.55,0) circle (0.025);
    \draw [fill = black] (60:0.5) circle (0.025);
    \node [below left] at (0,0) {o}; 
    \node [below right] at (1,0) {b}; 
    \node [above] at (60:1) {a}; 
    \node [above left] at (60:0.5) {$a_1$}; 
    \node [below] at (0.55,0) {$b_1$} ;
    \node [below] at (0.25,0) {m} ;
    \node at (30:0.15) {$\phi$} ;
    \draw  (0:0) ++(0:0.1) arc (0:60:0.1);
\end{tikzpicture}
\caption{The First Vertex}
\end{subfigure}
\begin{subfigure}[b]{0.40\textwidth}
\centering
\begin{tikzpicture} [scale=0.45]
    \draw [line width=0.50mm] (0,0) -- (7,0);
    \draw [line width=0.50mm] (0:0) -- (60:8);
    \draw [fill = black] (60:1) circle (0.15);
    \draw [fill = black] (60:7.46) circle (0.15);
    \draw [line width=0.2mm, dashed] (3.73,0) -- (60:7.46);
    \node [below left] at (0,0) {o}; 
    \node [below right] at (7,0) {b}; 
    \node [below] at (3.73,0) {m}; 
    \node [above left] at (60:1) {$a_1$}; 
    \node [above left] at (60:7.46) {$a_2$}; 
    \node at (30:0.95) {$\phi$} ;
    \draw  (0:0) ++(0:0.6) arc (0:60:0.6);
\end{tikzpicture}
\caption{The Second Vertex}
\end{subfigure}
\caption{
How the value of $R$ affects the lengths of segments from
a vertex on a PSLG segment to a vertex on its adjacent segment. 
(a) The length of $a_1b_1$ must be greater than the length
of $oa_1$, so $ob_1 > 2|a_1b_1|\cos{\phi}$, which happens
when $\cos{\phi}<1/(2R)$. 
(b) The length of $a_2m$ should be greater than $a_1a_2$.
If the ratio of lengths of $oa_1$ and $a_1a_2$ is small enough,
this condition is easily satisfied. 
}
\label{fig:adjacent}
\end{figure}

\end{appendix}
\pagebreak
\bibliographystyle{abbrv}
\bibliography{myrefs}

\end{document}